\definecolor{DarkRed}{rgb}{0.5,0.1,0.1}
\definecolor{DarkBlue}{rgb}{0.1,0.1,0.5}
\definecolor{ForestGreen}{rgb}{0.1333,0.5451,0.1333}
\definecolor{Red}{rgb}{0.9,0,0}
\crefname{property}{property}{Property}
\crefname{equation}{eq}{Eq}
\def\BState{\State\hskip-\ALG@thistlm}
\newtheorem{theorem}{Theorem}
\newtheorem{lemma}{Lemma}[section]
\newtheorem{proposition}[lemma]{Proposition}
\newtheorem{claim}[lemma]{Claim}
\newtheorem{fact}[lemma]{Fact}
\newtheorem*{claim*}{Claim}
\newtheorem*{proposition*}{Proposition}
\newtheorem*{lemma*}{Lemma}
\newtheorem*{problem*}{Problem}
\newtheorem*{theorem*}{Theorem}
\newtheorem{mdresult}{Result}
\newenvironment{Theorem}{\begin{mdframed}[backgroundcolor=lightgray!40,topline=false,rightline=false,leftline=false,bottomline=false,innertopmargin=2pt]\begin{mdresult}}{\end{mdresult}\end{mdframed}}
\newtheorem{definition}[lemma]{Definition}
\newtheorem{problem}{Problem}
\newtheorem{remark}[lemma]{Remark}
\newtheorem*{remark*}{Remark}
\newtheorem{observation}[lemma]{Observation}
\crefname{lemma}{Lemma}{Lemmas}
\crefname{claim}{Claim}{Claims}
\renewcommand{\qed}{\nobreak \ifvmode \relax \else
      \ifdim\lastskip<1.5em \hskip-\lastskip
      \hskip1.5em plus0em minus0.5em \fi \nobreak
      \vrule height0.75em width0.5em depth0.25em\fi}
\newcommand{\rs}{Ruzsa-Szemer\'{e}di\xspace}
\newcommand{\tvd}[2]{\ensuremath{\norm{#1 - #2}_{\textnormal{tvd}}}}
\newcommand{\Ot}{\ensuremath{\widetilde{O}}}
\newcommand{\eps}{\ensuremath{\varepsilon}}
\newcommand{\Paren}[1]{\Big(#1\Big)}
\newcommand{\Bracket}[1]{\Big[#1\Big]}
\newcommand{\bracket}[1]{\left[#1\right]}
\newcommand{\paren}[1]{\ensuremath{\left(#1\right)}\xspace}
\newcommand{\card}[1]{\left\vert{#1}\right\vert}
\newcommand{\Omgt}{\ensuremath{\widetilde{\Omega}}}
\newcommand{\norm}[1]{\ensuremath{\|#1\|}}
\newcommand{\expect}[1]{\Exp\bracket{#1}}
\newcommand{\set}[1]{\ensuremath{\left\{ #1 \right\}}}
\newcommand{\poly}{\mbox{\rm poly}}
\newcommand{\polylog}{\mbox{\rm  polylog}}
\newcommand{\alg}{\ensuremath{\mathcal{A}}\xspace}
\DeclareMathOperator*{\Exp}{\ensuremath{{\mathbb{E}}}}
\DeclareMathOperator*{\Prob}{\ensuremath{\textnormal{Pr}}}
\renewcommand{\Pr}{\Prob}
\newcommand{\EX}{\Exp}
\newcommand{\Ex}{\Exp}
\newcommand{\etal}{{\it et al.\,}}
\newenvironment{tbox}{\begin{tcolorbox}[
		enlarge top by=5pt,
		enlarge bottom by=5pt,
		 boxsep=0pt,
                  left=4pt,
                  right=4pt,
                  top=10pt,
                  arc=0pt,
                  boxrule=1pt,toprule=1pt,
                  colback=white
                  ]
	}
{\end{tcolorbox}}
\newcommand{\rv}[1]{\ensuremath{\textnormal{\textsf{#1}}}\xspace}
\newcommand{\rA}{\rv{A}}
\newcommand{\rB}{\rv{B}}
\newcommand{\rC}{\rv{C}}
\newcommand{\rD}{\rv{D}}
\newcommand{\rE}{\rv{E}}
\newcommand{\supp}[1]{\ensuremath{\textnormal{\text{supp}}(#1)}}
\newcommand{\distribution}[1]{\ensuremath{\textnormal{\text{dist}}(#1)}}
\newcommand{\kl}[2]{\ensuremath{\mathbb{D}(#1~||~#2)}}
\newcommand{\II}{\ensuremath{\mathbb{I}}}
\newcommand{\HH}{\ensuremath{\mathbb{H}}}
\newcommand{\mi}[2]{\ensuremath{\II(#1 \,; #2)}}
\newcommand{\en}[1]{\ensuremath{\HH(#1)}}
\newcommand{\itfacts}[1]{\Cref{fact:it-facts}-(\ref{part:#1})\xspace}
\newcommand{\SI}{\ensuremath{\textnormal{\textsf{SI}}}\xspace}
\newcommand{\estar}{\ensuremath{e^{\star}}}
\newcommand{\istar}{\ensuremath{i^{\star}}}
\newcommand{\prot}{\ensuremath{\pi}}
\newcommand{\Prot}{\ensuremath{\Pi}}
\newcommand{\sstar}{\ensuremath{s^{\star}}}
\newcommand{\tstar}{\ensuremath{t^{\star}}}
\newcommand{\dist}{\ensuremath{\mathcal{D}}}
\newcommand{\unif}{\ensuremath{\mathcal{U}}}
\newcommand{\GRS}{\ensuremath{G^{\mathsf{RS}}}\xspace}
\newcommand{\ERS}{\ensuremath{E^{\mathsf{RS}}}\xspace}
\newcommand{\MRS}{\ensuremath{M^{\mathsf{RS}}}\xspace}
\newcommand{\ureach}{\ensuremath{\textnormal{\textsf{unique-reach}}}\xspace}
\newcommand{\UR}{\ensuremath{\textnormal{\textsf{UR}}}\xspace}
\newcommand{\distUR}{\ensuremath{\dist}_{\UR}}
\newcommand{\protUR}{\ensuremath{\prot}_{\UR}}
\newcommand{\ProtUR}{\ensuremath{\Prot}_{\UR}}
\newcommand{\rProtUR}{\ensuremath{\rProt}_{\UR}}
\newcommand{\IC}[2]{\ensuremath{\textnormal{\textsf{IC}}_{#2}(#1)}\xspace}
\newcommand{\CC}[2]{\ensuremath{\textnormal{\textsf{CC}}_{#2}(#1)}\xspace}
\newcommand{\rX}{\rv{X}}
\newcommand{\rY}{\rv{Y}}
\newcommand{\rR}{\rv{R}}
\newcommand{\rProt}{\ensuremath{\mathsf{\Pi}}}
\newcommand{\rS}{\rv{S}}
\newcommand{\rT}{\rv{T}}
\newcommand{\distSI}{\ensuremath{\dist}_{\ensuremath{\textnormal{\textsf{SI}}}}\xspace}
\newcommand{\protSI}{\ensuremath{\prot}_{\ensuremath{\textnormal{\textsf{\SI}}}}\xspace}
\newcommand{\ProtSI}{\ensuremath{\Prot}_{\ensuremath{\textnormal{\textsf{SI}}}}\xspace}
\newcommand{\rProtSI}{\ensuremath{\rv{\Prot}}_{\ensuremath{\textnormal{\textsf{SI}}}}\xspace}
\renewcommand{\SI}{\ensuremath{\textnormal{\textsf{set-intersection}}}\xspace}
\newcommand{\restar}{\rv{e}^{\star}}
\newcommand{\streach}{\ensuremath{\textnormal{\textsf{st-reachability}}}\xspace}
\newcommand{\ST}{\ensuremath{\textnormal{\textsf{ST}}}}
\newcommand{\distST}{\ensuremath{\dist_{\ST}}}
\newcommand{\distURinv}{\ensuremath{\overleftarrow{\dist}}_{\!\UR}}
\newcommand{\protST}{\prot_{\ST}}
\newcommand{\Hinv}{\ensuremath{\overleftarrow{H}}}
\newcommand{\Einv}{\ensuremath{\overleftarrow{E}}}
\newcommand{\Vstar}{\ensuremath{V^{\star}}}
\newcommand{\Ustar}{\ensuremath{U^{\star}}}
\newcommand{\rsstar}{\ensuremath{\rv{s}^{\star}}}
\newcommand{\rtstar}{\ensuremath{\rv{t}^{\star}}}
\newcommand{\rZ}{\rv{Z}}
\newcommand{\EinvA}{\overleftarrow{{E}}_{\!\!A}}
\newcommand{\EinvB}{\overleftarrow{{E}}_{\!\!B}}
\newcommand{\rEinvB}{\overleftarrow{\rv{E}}_{\!\!B}}
\newcommand{\ristar}{\rv{i}^{\star}}
\newcommand{\ureachinv}{\ensuremath{\overleftarrow{\ureach}}}
\newcommand{\Ginv}{\ensuremath{\overleftarrow{G}}}
\renewcommand{\protSI}{\ensuremath{\prot_{\textnormal{\textsf{SI}}}}}
\renewcommand{\leq}{\leqslant}
\renewcommand{\geq}{\geqslant}
\title{Near-Quadratic Lower Bounds for Two-Pass \\ Graph Streaming Algorithms}
\author{Sepehr Assadi\footnote{(\texttt{sepehr.assadi@rutgers.edu}) Department of Computer Science, Rutgers University. Part of this work was done while the author was a
postdoctoral researcher at Princeton University and was supported in part by the Simons Collaboration on Algorithms
and Geometry. \medskip} \and 
Ran Raz\footnote{(\texttt{ran.raz.mail@gmail.com}) Department of Computer Science, Princeton University. Research supported by the Simons Collaboration on Algorithms and Geometry, by a Simons Investigator Award, and by the National Science Foundation grants No. CCF-171477 and CCF-2007462.}}
\date{}
\begin{document}
\maketitle

\pagenumbering{roman}


\begin{abstract}
	We prove that any \textbf{two-pass} graph streaming algorithm for the $s$-$t$ reachability problem in $n$-vertex directed graphs requires 
	\textbf{near-quadratic} space of $n^{2-o(1)}$ bits. As a corollary, we also obtain near-quadratic space lower bounds for several other fundamental  problems
	including maximum bipartite matching and (approximate) shortest path  in undirected graphs. 
	
	\medskip
	Our results collectively imply that a wide range of graph problems admit essentially no {non-trivial}  streaming algorithm 
	even when two passes over the input is allowed. Prior to our work, such impossibility results were only known for single-pass streaming algorithms, and the best two-pass lower bounds only ruled out $o(n^{7/6})$ space algorithms, leaving open
	a large gap between (trivial) upper bounds and lower bounds. 
	
\end{abstract}

\clearpage

\setcounter{tocdepth}{3}
\tableofcontents

\clearpage

\pagenumbering{arabic}
\setcounter{page}{1}


\section{Introduction}\label{sec:intro}

Graph streaming algorithms process the input graph with $n$ known vertices by making one or a few passes over the sequence
of its unknown edges (given in an arbitrary order) and using a limited memory (much smaller than the input size which is $O(n^2)$ for a graph problem). 
In recent years, graph streaming algorithms and lower bounds for numerous problems have been studied extensively. In particular, we now have a relatively clear picture of the powers and limitations of \emph{single-pass} algorithms. 
With a rather gross oversimplification, this can be stated as follows: 
\begin{itemize}[label=$-$, leftmargin=20pt]
	\item The \emph{exact} variant of most graph problems of interest are intractable: There are $\Omega(n^2)$ space lower bounds for maximum matching and minimum
	 vertex cover~\cite{FeigenbaumKMSZ05,GoelKK12}, (directed) reachability and topological sorting~\cite{HenzingerRR98,FeigenbaumKMSZ05,ChakrabartiG0V20}, shortest path
	 and diameter~\cite{FeigenbaumKMSZ05,FeigenbaumKMSZ08}, minimum or maximum cut~\cite{Zelke11}, maximal independent set~\cite{AssadiCK19,CormodeDK19}, dominating set~\cite{EmekR14,AssadiKL16}, and many others. 
	
	\item On the other hand, \emph{approximate} variants of many graph problems are tractable: There are $\Ot(n) := O(n \cdot \polylog{(n)})$ space algorithms (often referred to as \emph{semi-streaming} algorithms) for 
	approximate (weighted) matching and vertex cover~\cite{FeigenbaumKMSZ05,EpsteinLMS11,CrouchS14,PazS17}, spanner
	 computation and approximation for distance problems~\cite{ElkinZ04,FeigenbaumKMSZ08,Baswana08,Elkin11}, cut or spectral sparsifiers and approximation for cut problems~\cite{AhnG09,AhnGM13,KelnerL11,KapralovLMMS14}, large independents 
	 sets~\cite{HalldorssonHLS10,CormodeDK19}, graph coloring~\cite{AssadiCK19,BeraCG19}, and 
	 approximate dominating set~\cite{EmekR14,AssadiKL16}, among others\footnote{It should be noted that, in contrast, determining the \emph{best} approximation ratio possible for many of these problems have remained elusive and is an active area of research.}. 
\end{itemize}

Recent years have also witnessed a surge of interest in designing \emph{multi-pass} graph
streaming algorithms (see, e.g.~\cite{FeigenbaumKMSZ05,McGregor05,EggertKS09,KonradMM12,Kapralov13,KaleT17, BeckerKKL17,AhnGM12b,SarmaGP11,KapralovW14,McGregorVV16,CormodeJ17,BeraC17,GamlathKMS19,ChakrabartiG0V20,ChangFHT20}); see, e.g.,~\cite{FeigenbaumKMSZ05,McGregor14} for discussions on practical applications of multi-pass streaming
algorithms in particular in obtaining I/O-efficiency. These results suggest that allowing even just one more pass over the input greatly enhances the capability of the algorithms. For instance, while computing the \emph{exact} global or $s$-$t$ minimum cut in undirected graphs requires $\Omega(n^2)$ space
in a single pass~\cite{Zelke11}, perhaps surprisingly, one can solve both problems in only \emph{two} passes with $\Ot(n)$ and $\Ot(n^{5/3})$ space, respectively~\cite{RubinsteinSW18} (see also~\cite{MukhopadhyayN19} for an $O(\log{n})$-pass algorithm for  weighted minimum cut). Qualitatively similar separations are known for numerous other  problems such as triangle counting~\cite{CormodeJ17,BulteauFKP16} (with two passes),
approximate matching~\cite{GoelKK12,Kapralov13,McGregor05,GamlathKMS19}  (with $O(1)$ passes),  maximal
independent set~\cite{AssadiCK19,CormodeDK19,GhaffariGKMR18} (with $O(\log\log{n})$ passes), approximate dominating set~\cite{ChakrabartiW16,Har-PeledIMV16,AssadiKL16} (with $O(\log{n})$ passes), and exact
shortest path~\cite{FeigenbaumKMSZ08,ChangFHT20} (with $O(\sqrt{n})$ passes).  

Despite this tremendous progress, the general picture for the abilities and limitations of {multi-pass} algorithms is not so clear even when we focus on \emph{two-pass} algorithms. What other problems beside minimum cut admit non-trivial two-pass streaming algorithms? For instance, can we obtain similar results for directed versions of these problems? What about closely related problems such as maximum bipartite matching or  not-so-related problems such as shortest path? 
Currently, none of these problems admit any non-trivial 
two-pass streaming algorithm, while known lower bounds only rule out algorithms with $o(n^{7/6})$ space~\cite{FeigenbaumKMSZ08,GuruswamiO13,ChakrabartiG0V20} leaving a considerable gap between upper and lower bounds
 (see~\cite{AssadiCK19b} for a discussion on the current landscape of 
multi-pass graph streaming lower bounds and the challenges in improving them). 

\subsection{Our Contributions}\label{sec:results}

We present near-quadratic space lower bounds for \emph{two-pass} streaming algorithms for several fundamental graph problems including reachability,  bipartite matching, and shortest path. 

\paragraph{Reachability and related problems in \underline{directed} graphs.} We prove the following lower bound for the reachability problem in directed graphs. 

\begin{Theorem}[Formalized in~\Cref{thm:reach}]\label{res:reach}
	Any two-pass streaming algorithm (deterministic or randomized) that given an $n$-vertex directed graph $G=(V,E)$ with two designated vertices $s,t \in V$ can determine whether 
	or not $s$ can reach $t$ in $G$ requires $\Omega(\frac{n^2}{2^{\Theta(\sqrt{\log{n}})}})$ space. 
\end{Theorem}

The reachability problem is one of the earliest problems studied in the graph streaming model~\cite{HenzingerRR98}. 
Previously, Henzinger~\etal~\cite{HenzingerRR98} and Feigenbaum~\etal~\cite{FeigenbaumKMSZ08} proved an $\Omega(n^2)$ space lower bound
for this problem for single-pass algorithms, and Guruswami and Onak~\cite{GuruswamiO13} gave an $\Omgt_p(n^{1+\nicefrac{1}{(2p+2)}})$ lower bound for $p$-pass algorithms which translates to $\Omgt(n^{7/6})$ space for two-pass algorithms; this lower
bound was recently extended to random-order streams by Chakrabarti~\etal~\cite{ChakrabartiG0V20}. Note that the \emph{undirected} version of this problem has a simple $O(n\log{n})$ space algorithm in one pass by maintaining a spanning forest of the input graph (see, e.g.~\cite{FeigenbaumKMSZ05}). 

Using standard reductions, our results in this part can  be extended to several other related problems on directed graphs such as estimating number of vertices reachable from a given source or approximating minimum feedback arc set, studied 
in~\cite{HenzingerRR98} and~\cite{ChakrabartiG0V20}, respectively.

\paragraph{Matching and cut problems.} We have the following lower bound for bipartite matching. 
\begin{Theorem}[Formalized in~\Cref{thm:matching}]\label{res:matching}
	Any two-pass streaming algorithm (deterministic or randomized) that given an $n$-vertex undirected bipartite graph $G=(L\sqcup R,E)$ can determine whether 
	or not $G$ has a perfect matching requires $\Omega(\frac{n^2}{2^{\Theta(\sqrt{\log{n}})}})$ space. 
\end{Theorem}

Maximum matching problem is arguably the most studied problem in the graph streaming model. However, the main focus on this problem so far has been on approximation algorithms and not much is known 
for exact computation of this problem, beside that it can be done in $\Ot(k^2)$ space in a single pass where $k$ is size of the maximum matching~\cite{ChitnisCEHMMV16} (for the perfect matching problem, this gives an $O(n^2)$ space algorithm which is 
the same as storing the entire input). Previously, Feigenbaum~\etal~\cite{FeigenbaumKMSZ05} and Chitnis~\etal~\cite{ChitnisCHM15} proved an $\Omega(n^2)$ space lower bound for single-pass algorithms for this problem and 
Guruswami and Onak~\cite{GuruswamiO13} extended the lower bound to $\Omgt_p(n^{1+\nicefrac{1}{(2p+2)}})$ for $p$-pass algorithms. 

Both the perfect matching problem and the $s$-$t$ reachability problem are simpler versions of the $s$-$t$ minimum cut problem in directed graphs. 
As such, our lower bounds imply that even though the $s$-$t$ minimum cut problem can be solved in undirected 
graphs in $\Ot(n^{5/3})$ space and two passes~\cite{RubinsteinSW18}, its directed version requires $n^{2-o(1)}$ space in two passes (for any multiplicative approximation). 
Previously, Assadi~\etal~\cite{AssadiCK19b} proved a lower bound of $\Omega(n^2/p^5)$ for $p$-pass algorithms for the \emph{weighted} $s$-$t$ minimum cut problem (with exponential-in-$p$ weights); for the unweighted problem, the previous best lower bound was still $\Omgt_p(n^{1+\nicefrac{1}{(2p+2)}})$. 

\paragraph{Shortest path problem.} Finally, we also prove a lower bound for the shortest path problem. 

\begin{Theorem}[Formalized in~\Cref{thm:sp}]\label{res:sp}
	Any two-pass streaming algorithm (deterministic or randomized) that given an undirected graph $G=(V,E)$ and two designated vertices $s,t \in V$, can output the length
	of the shortest $s$-$t$-path  in $G$ requires $\Omega(\frac{n^2}{2^{\Theta(\sqrt{\log{n}})}})$ space. 
	The lower bound continues to hold even for approximation algorithms with approximation ratio  better than $\nicefrac{9}{7}$. 
\end{Theorem}

Shortest path problem have also been extensively studied in graph streaming literature. For single-pass streams, the focus has been on maintaining \emph{spanners} (subgraphs that preserve pairwise distances approximately)
which allow for obtaining algorithms with different space-approximation tradeoffs~\cite{ElkinZ04,FeigenbaumKMSZ08,Baswana08,Elkin11} (starting from $2$-approximation in $O(n^{3/2})$ space
to $O(\log{n})$ approximation in $O(n)$ space), which are known to be almost tight~\cite{FeigenbaumKMSZ08}. For multi-pass algorithms, $\Ot(n)$ space algorithms are known for $(1+\eps)$-approximation with $\poly(\log{n},{1}/{\eps})$ 
passes~\cite{HenzingerKN16,BeckerKKL17}, and exact algorithms with $O(\sqrt{n})$ passes~\cite{ChangFHT20}. On the lower bound front, an $\Omega(n^2)$ space lower bound is known for single-pass 
algorithms~\cite{FeigenbaumKMSZ08} and $\Omgt_{p}(n^{1+\nicefrac{1}{2p+2}})$ for $p$-pass algorithms~\cite{GuruswamiO13}) (for exact answer or even some small approximation $\approx (2p+4)/(2p+2)$); 
a  stronger lower bound of $\Omega(n^{1+\nicefrac{1}{2p}})$ was proven earlier in~\cite{FeigenbaumKMSZ08} for algorithms that need to \emph{find} the shortest path itself.

\medskip

Our results show that a wide range of graph problems including directed reachability, cut and matching, and shortest path problems,  admit essentially no {non-trivial} two-pass streaming algorithms: modulo
the $n^{o(1)}$-term in our bounds, the best one could do to  is to simply store the entire stream in $O(n^2)$ space and solve the problem at the end using any offline algorithm.

\subsection{Our Techniques}\label{sec:techniques}

We prove our main lower bound for the $s$-$t$ reachability problem; the other lower bounds then follow easily from this using standard ideas. 

It helps to start the discussion with the lower bounds in~\cite{FeigenbaumKMSZ08,GuruswamiO13,ChakrabartiG0V20}. These lower bounds work with \emph{random graphs} 
wherein $s$ can reach $\Theta(\sqrt{n})$ {random} vertices $S$ and $t$ is \emph{independently} reachable from $\Theta(\sqrt{n})$ random vertices $T$; thus, by Birthday Paradox,  $s$-$t$ reachability can have either answer 
with constant probability. One then shows that to determine this, the algorithm needs to ``find'' $S$ and $T$ explicitly. The final part is then to 
use ideas from \emph{pointer chasing}  problems~\cite{PapadimitriouS84,NisanW91,PonzioRV99,ChakrabartiCM08,Yehudayoff16} to prove a lower bound for this task. The particular space-pass tradeoff  
is then determined as follows: $(i)$ as a streaming algorithm can find the $p$-hop neighborhood of $s$ and $t$ in $p$ passes (by BFS), $S$ and $T$ need to be $(p+1)$-hop away from $s$ and $t$; $(ii)$ 
as we are working with random graphs, to achieve the bound of $O(\sqrt{n})$ on size of $S$ and $T$, we need the degree of the graph to be $O(n^{{1}/{2(p+1)}})$, leading to an $O(n^{1+1/(2p+2)})$ space lower bound for $p$-pass algorithms.
We note that the \emph{limit} of these approaches based on random graphs seem to be $\Ot(n^{3/2})$; see~\cite[Section 5.2]{ChakrabartiG0V20}.

Our lower bound takes a different route and works with ``more structured'' graphs.
We start with 
proving a \emph{single-pass} streaming lower bound for an ``{algorithmically easier}'' variant of the reachability problem. In this problem, we are promised that $s$ can
reach a \emph{unique} vertex $\sstar$ {chosen uniformly at random} from a set $U$ of $n^{1-o(1)}$ vertices and the goal is to ``find'' this vertex. Previous lower bounds~\cite{FeigenbaumKMSZ08,GuruswamiO13,ChakrabartiG0V20} already imply that if our goal was to determine the identity of $\sstar$ \emph{exactly}, we
need $\Omega(n^2)$ space. In this paper, we prove a stronger lower bound that an $n^{2-o(1)}$-space single-pass algorithm essentially cannot even change
the distribution of $\sstar$ from uniform over $U$. The proof of this part is based on information theoretic arguments that rely on ``embedding'' multiple instances of the
set intersection problem (see~\Cref{sec:si}) inside a \emph{\rs} (RS) graph (see~\Cref{sec:rs}), and proving a new lower bound for the set intersection problem. 

We remark that our new lower bound for set intersection is related to the recent lower bound of~\cite{AssadiCK19b} 
with a subtle technical difference that is explained in~\Cref{sec:si} and~\Cref{rem:internal-external}. We also note that RS graphs have been used extensively  for proving graph streaming
lower bounds~\cite{GoelKK12,Kapralov13,Konrad15,AssadiKLY16,AssadiKL17,CormodeDK19} starting from~\cite{GoelKK12}, but this is their first application to the $s$-$t$ reachability problem. 

In the next part of the argument, we construct a family of graphs in which the $s$-$t$ reachability is determined by existence of a single edge $(\sstar,\tstar)$ in the graph, where 
$\sstar$ is the unique vertex reachable from $s$ in a large set $U$ and $\tstar$ is the unique vertex that can reach $t$ in a large set $W$ (see~\Cref{fig:streach} for an illustration). 
By exploiting our lower bound in the first part, we show that a $n^{2-o(1)}$-space algorithm cannot properly ``find'' 
the pairs $\sstar$ and $\tstar$ in the first pass. 
We then argue that this forces the algorithm to effectively ``store'' all the edges between $U$ and $W$ in the second pass to determine if $(\sstar,\tstar)$ is an edge of the graph, leading to an $n^{2-o(1)}$ space lower bound. 

\begin{remark*}[\textbf{More than two passes?}]
	The intermediate ``simpler'' problem we considered in our proofs (part one above) is only hard in one pass (see~\Cref{sec:ureach}) and thus our lower bound proof does not directly go beyond two passes. 
	However, it \underline{appears} that our techniques can be extended to multi-pass algorithms to prove lower bounds of the type $n^{1+\Omega(\nicefrac{1}{p})}$ space for $p$-pass algorithms which are slightly better in terms 
	of dependence on $p$ in the exponent compared to~\cite{FeigenbaumKMSZ08,GuruswamiO13,ChakrabartiG0V20}. Nevertheless, as unlike the case for two-pass algorithms,  it is no longer clear whether such
	bounds are the ``right'' answer to the problems at hand, we opted to not pursue this direction in this paper. 
\end{remark*}

\subsection{Subsequent Work}\label{sec:subsequent}

Since the conference publication of this work in~\cite{AssadiR20}, our result has inspired several follow-ups. Firstly,~\cite{ChenKPSSY21a} significantly strengthened our approach to prove $n^{2-o(1)}$-space lower bounds 
for the problems considered in this paper in any $o(\sqrt{\log{n}})$ passes. In a nutshell, their techniques can be seen as recursive construction that can ``hide'' a large subset of vertices
from streaming algorithms as opposed to only a pair of vertices $\sstar,\tstar$ discussed in our techniques. Moreover,~\cite{Assadi22} built on the ideas in our work as well as~\cite{ChenKPSSY21a} to prove a two-pass lower bound for approximating matchings up to some (small) constant factor (see also~\cite{KonradN21} that give a two-pass lower bound for a restricted family of algorithms that only compute a greedy matching in their first pass but then can be arbitrary in their second pass). Finally,~\cite{ChenKPSSY21b} also used similar ideas to prove two-pass lower bounds for performing random walks
in directed graphs.

In general, there has  been a rapidly growing body of work on multi-pass graph streaming lower bounds in the last couple of years~\cite{ChakrabartiG0V20,AssadiKSY20,AssadiR20,ChenKPSSY21a,AssadiN21,ChenKPSSY21b,Assadi22,AssadiKZ22,KolPSY23,ChenKPSSY23}, 
and we refer the interested reader to these papers for more details.


\section{Preliminaries}

\paragraph{Notation.} For any integer $t \geq 1$, we use $[t] := \set{1,\ldots,t}$. For any $k$-tuple $X = (X_1,\ldots,X_k)$ and integer $i \in [k]$, we define $X^{<i} := (X_1,\ldots,X_{i-1})$. 

Throughout the paper, we use `sans serif' letters to denote random variables (e.g., $\rA$) , and the corresponding normal letters to denote their values (e.g. $A$). For brevity and to avoid the clutter in notation, 
in conditioning terms which involve assignments to random variables, we directly use the value of the random variable (with the same letter), e.g., write $\rB \mid A$ instead of $\rB \mid \rA = A$. 

For random variables $\rA,\rB$, we use $\en{\rA}$ and $\mi{\rA}{\rB} := \en{\rA} - \en{\rA \mid \rB}$ to denote the Shannon entropy and mutual information, respectively. Moreover, for two distributions 
$\mu,\nu$ on the same support, $\tvd{\mu}{\nu}$ denotes the total variation distance, and $\kl{\mu}{\nu}$ is the KL-divergence. A summary of basic information theory facts
that we use in our proofs appear in~\Cref{app:info}.

\paragraph{Concentration bounds:} We will use the following variant of Chernoff bound (see, e.g.~\cite{DubhashiP09}). 

\begin{proposition}[Chernoff bound]\label{prop:chernoff}
	Let $\rX_1,\ldots,\rX_n$ be $n$ independent random variables with values in $[0,1]$ and $\rX := \sum_{i=1}^{n} \rX_i$. Then, for any $b > 0$, 
	\begin{align*}
		\Pr\paren{\card{\rX - \expect{\rX}} \geq b} \leq 2 \cdot \exp\paren{-\frac{b^2}{2n}}. 
	\end{align*}
\end{proposition}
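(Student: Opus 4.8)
The plan is to prove the (sharper, one-sided) exponential tail bounds by the standard Cram\'er--Chernoff exponential moment method, using Hoeffding's lemma to control the moment generating function of each bounded centered summand, and then to combine the two tails by a union bound. In fact this route yields $\eexp{-2b^2/n}$ per tail, which is stronger than the $\eexp{-b^2/(2n)}$ claimed; the slack is harmless and can be absorbed so that the final constant matches the statement (or one may simply quote the inequality from a standard reference such as~\cite{DubhashiP09} as indicated in the statement).

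First I would handle the upper tail. Fix $\lambda > 0$. Applying Markov's inequality to the nonnegative random variable $\eexp{\lambda(\rX - \expect{\rX})}$ gives
\[
\prob{\rX - \expect{\rX} \geq b} \;\leq\; \eexp{-\lambda b}\cdot \expect{\eexp{\lambda(\rX - \expect{\rX})}}.
\]
By independence of $\rX_1,\ldots,\rX_n$, the expectation factorizes as $\prod_{i=1}^{n} \expect{\eexp{\lambda(\rX_i - \expect{\rX_i})}}$. Since $\rX_i \in [0,1]$, the centered variable $\rX_i - \expect{\rX_i}$ is supported in an interval of length at most $1$, so the key per-coordinate estimate is Hoeffding's lemma: if $\rv{Y}$ has $\expect{\rv{Y}}=0$ and is supported in an interval of length at most $1$, then $\expect{\eexp{\lambda \rv{Y}}} \leq \eexp{\lambda^2/8}$ for every $\lambda\in\IR$. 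Granting this, the product is at most $\eexp{n\lambda^2/8}$, hence $\prob{\rX - \expect{\rX} \geq b} \leq \eexp{-\lambda b + n\lambda^2/8}$. Minimizing the exponent over $\lambda>0$, which occurs at $\lambda = 4b/n$, yields $\prob{\rX - \expect{\rX} \geq b} \leq \eexp{-2b^2/n}$.

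For the lower tail I would run the identical argument with $\rX_i$ replaced by $-\rX_i$ (still supported in an interval of length $1$), obtaining $\prob{\rX - \expect{\rX} \leq -b} \leq \eexp{-2b^2/n}$. A union bound over the two one-sided events then gives
\[
\prob{\card{\rX - \expect{\rX}} \geq b} \;\leq\; 2\,\eexp{-2b^2/n} \;\leq\; 2\,\eexp{-b^2/(2n)},
\]
which is the claimed bound.

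It remains to prove Hoeffding's lemma, which is the only substantive step and the main (if minor) obstacle. Set $\psi(\lambda) := \log \expect{\eexp{\lambda \rv{Y}}}$. One computes $\psi(0)=0$ and $\psi'(0) = \expect{\rv{Y}} = 0$, and that $\psi''(\lambda)$ is the variance of $\rv{Y}$ under the exponentially tilted law $d\mu_\lambda \propto \eexp{\lambda \rv{Y}}\,d\mu$; since $\mu_\lambda$ is supported in an interval of length at most $1$, this variance is at most $1/4$ (the variance of any distribution on an interval of length $\ell$ is at most $\ell^2/4$, attained by the two-point distribution on the endpoints). Thus $\psi''(\lambda)\leq 1/4$ for all $\lambda$, and integrating twice from $0$ gives $\psi(\lambda)\leq \lambda^2/8$, i.e.\ $\expect{\eexp{\lambda\rv{Y}}}\leq \eexp{\lambda^2/8}$. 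Everything else in the argument is routine bookkeeping.
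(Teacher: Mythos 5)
Your proof is correct: the exponential-moment method with Hoeffding's lemma (proved via the variance of the tilted measure) gives the one-sided bound $\eexp{-2b^2/n}$, and the union bound plus the observation that $\eexp{-2b^2/n}\leq \eexp{-b^2/(2n)}$ yields exactly the stated inequality. The paper does not prove this proposition at all --- it simply cites it as a standard fact from~\cite{DubhashiP09} --- so there is nothing to compare against; your argument is the standard one and in fact establishes a slightly stronger constant in the exponent.
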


\subsection{Communication Complexity and Information Complexity}\label{sec:cc-ic}

We work with the two-party communication model of Yao~\cite{Yao79} (with some slightly non-standard aspects mentioned in~\Cref{sec:streach}). See the excellent textbook by Kushilevitz and Nisan~\cite{KushilevitzN97} for 
an overview of communication complexity. 

Let $P: \mathcal{X} \times \mathcal{Y} \rightarrow \mathcal{Z}$ be a relation.  Alice receives an input $X\in \mathcal{X}$ and Bob receives $Y \in \mathcal{Y}$, where $(X,Y)$ are chosen from a
distribution $\dist$ over $\mathcal{X} \times \mathcal{Y}$. We allow players to have access to both public and private randomness. 
They communicate with each other by exchanging messages according to some \emph{protocol} $\prot$. Each message in $\prot$
depends only on the private input and random bits of the player sending the message, the already communicated messages, and the public randomness. 
At the end, one of the players outputs an answer  $Z$ such that $Z \in P(X,Y)$. For any protocol $\prot$, we use $\Prot := \Prot(X,Y)$ to denote the messages \emph{and} the public randomness used 
by $\prot$ on the input $(X,Y)$. 

We now define two measures of ``cost'' of a protocol. 

\begin{definition}[Communication cost]
  The \emph{communication cost} of a protocol $\prot$, denoted by $\CC{\prot}{}$, is the \emph{worst-case} length of the messages
  communicated between Alice and Bob in the protocol. 
\end{definition}

\begin{definition}[Information cost]
	The \emph{(internal) information cost} of a protocol $\prot$, when the inputs $(X,Y)$ are drawn from a distribution $\dist$, is $\IC{\prot}{\dist}:=\mi{\rProt}{\rX \mid \rY}{} + \mi{\rProt}{\rY \mid \rX}{}$. 
\end{definition}

The internal information cost (introduced by~\cite{BarakBCR10}; see also~\cite{ChakrabartiSWY01,Bar-YossefJKS02,BarakBCR10,BravermanR11,BravermanEOPV13}) measures the average amount of information each player learns about the input of the other player
by observing the transcript of the protocol. As each bit of communication cannot reveal  more than one bit of information, the internal information
cost of a protocol lower bounds its communication cost~\cite{BravermanR11}. 

\paragraph{Communication complexity and streaming.} There is a standard connection between the communication cost of any protocol $\prot$ for a communication problem $P(X,Y)$
and the space of any streaming algorithm that can determine the value of $P(X,Y)$ on a stream $X \circ Y$ (see~\Cref{prop:stream}); we use this connection to establish 
our streaming lower bounds. 
\subsection{\rs Graphs}\label{sec:rs}

A graph $\GRS=(V,E)$ is a called an $(r,t)$-\rs graph (RS graph for short) iff its edge-set $E$ can be partitioned into $t$ \emph{induced matchings} $\MRS_1,\ldots, \MRS_t$, each of size $r$. We further define 
an $(r,t)$-RS \emph{digraph} as a directed {bipartite} graph $\GRS=(L,R,E)$ obtained by directing every edge of a bipartite $(r,t)$-RS graph from $L$ to $R$.

We use the original 
construction of RS graphs due to Ruzsa and Szemer\'{e}di~\cite{RuszaS78} based on the existence of large sets of integers with no $3$-term arithmetic progression, proven by Behrend~\cite{Behrend46}. We note that
there are multiple other constructions with different parameters (see, e.g.~\cite{FischerLNRRS02,AlonMS12,GoelKK12,FoxHS15} and references therein) but the following construction works best for our purpose. 

\begin{proposition}[\!\!\cite{RuszaS78}]\label{prop:rs}
	For infinitely many integers $N$, there are $(r,t)$-RS digraphs with $N$ vertices on each side of the bipartition and parameters $r = \frac{N}{e^{\Theta(\sqrt{\log{N}})}}$ and $t = {N}/{3}$. 
\end{proposition}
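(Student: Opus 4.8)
The plan is to reproduce the classical construction of Ruzsa and Szemer\'edi, whose only external ingredient is Behrend's theorem on progression-free sets: for every integer $m \geq 1$ there is a set $B \subseteq \{1,\ldots,m\}$ with $|B| \geq m/e^{\Theta(\sqrt{\log m})}$ containing no three \emph{distinct} elements in arithmetic progression --- equivalently, whenever $d_1 + d_2 = 2 d_3$ with $d_1,d_2,d_3 \in B$ we must have $d_1 = d_2 = d_3$. I would fix such a Behrend set $B \subseteq \{1,\ldots,m\}$ and take any integer $N$ with $N > 2m$; letting $m \to \infty$ this covers arbitrarily large $N$ (in fact $m = \lceil N/2 \rceil - 1$ shows every sufficiently large $N$ works), and $r := |B| = m/e^{\Theta(\sqrt{\log m})} = N/e^{\Theta(\sqrt{\log N})}$. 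The digraph is $\GRS = (L,R,E)$ with $L = R = \mathbb{Z}_N$ and $E := \{(x,\, x+2d \bmod N) : x \in \mathbb{Z}_N,\ d \in B\}$, every edge directed from $L$ to $R$.

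The induced matchings are indexed by $\mathbb{Z}_N$: for $y \in \mathbb{Z}_N$ set $M_y := \{(y - d \bmod N,\ y + d \bmod N) : d \in B\}$. I would then check four routine facts. (i) Each $M_y$ is a matching of size exactly $r$, since $d \mapsto y - d$ and $d \mapsto y + d$ are injective on $\mathbb{Z}_N$. (ii) $M_y \subseteq E$, since $(y-d, y+d) = (x, x+2d)$ for $x = y-d$. (iii) The family $\{M_y\}_{y \in \mathbb{Z}_N}$ partitions $E$: for an edge $e = (a,b) \in E$ we have $b - a \equiv 2d \pmod N$ for some $d \in B$, and this $d$ is unique because $2d \in \{2,\ldots,2m\} \subseteq \{1,\ldots,N-1\}$, so $e$ can lie in $M_y$ only for $y = a + d$, and it does lie in $M_{a+d}$. (iv) --- the only place progression-freeness is used --- each $M_y$ is an \emph{induced} matching: for distinct $d, d' \in B$, a crossing edge $(y - d,\, y + d') \in E$ would force $d + d' \equiv 2e \pmod N$ for some $e \in B$, hence $d + d' = 2e$ as an identity of integers (both sides lie in $\{1,\ldots,N-1\}$), making $d, e, d'$ a nontrivial $3$-term progression in $B$ --- a contradiction.

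Combining (i)--(iv), $\GRS$ is a directed bipartite graph with $|L| = |R| = N$ whose edge set decomposes into $t = N$ induced matchings, each of size $r = N/e^{\Theta(\sqrt{\log N})}$; discarding all but $N/3$ of the $M_y$ yields exactly the parameters in the statement. I do not expect a real obstacle: the entire content is Behrend's theorem, which we may cite, and everything else is the bookkeeping above. The single point that needs care is that all arithmetic is performed in $\mathbb{Z}_N$ whereas the progression-freeness of $B$ is a statement over $\mathbb{Z}$; the choice $N > 2m$ is exactly what guarantees that every relevant congruence among elements of $B$ and $2B$ is an honest integer equality, so no spurious ``wrap-around'' progressions are created.
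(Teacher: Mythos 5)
Your construction is correct: the paper gives no proof of this proposition, citing it directly to Ruzsa--Szemer\'edi, and your argument is precisely the classical Behrend-based construction from that source (shifted progression-free sets as induced matchings, with the choice $N>2m$ ruling out wrap-around progressions), so it matches the intended proof. The only cosmetic difference is that you index $N$ matchings and discard all but $N/3$, whereas the usual presentation indexes $N/3$ shifts directly; both yield the stated parameters since the constant is absorbed by the $\Theta(\sqrt{\log N})$ in the exponent.
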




\newcommand{\rsigma}{\sansmath{\sigma}}
\newcommand{\rSigma}{\mathsf{\Sigma}}

\section{A New Lower Bound for the Set Intersection Problem}\label{sec:si}

One key ingredient of our paper is a new lower bound for the set intersection problem, defined formally as follows. 

\begin{problem}[$\SI$]\label{def:si}
	The $\SI$ problem is a  two-player communication problem in which Alice and Bob are given sets $A$ and $B$ from $[m]$, respectively, with the promise that there exists a \textbf{unique}
element $\estar$ such that $\set{\estar} = A \cap B$. The goal is  to find the \textbf{target element} $\estar$ using back and forth communication (i.e., in the two-way communication model). 
\end{problem}

The \SI problem is closely related to the well-known \emph{set disjointness} problem. It is in fact straightforward to prove an $\Omega(m)$ lower bound on the communication complexity of $\SI$ using a simple reduction from the 
set disjointness problem. However, in this paper, we are interested in an algorithmically simpler variant of this problem which we define below. 

\begin{definition}\label{def:eps-si}
	Let $\dist$ be a distribution of  inputs $(A,B)$ for $\SI$ (known to both players). A protocol $\prot$ \textbf{internal $\eps$-solves} $\SI$ over  $\dist$ iff 
	\underline{at least} one of the following holds:  
	\begin{align}
		&\EX_{\rProt,\rA}{\tvd{\distribution{\restar \mid \Prot,A}}{\distribution{\restar \mid A}}} \geq \eps \quad \textnormal{or} \quad 
		\EX_{\rProt,\rB}{\tvd{\distribution{\restar \mid \Prot,B}}{\distribution{\restar \mid B}}} \geq \eps, 
	\end{align}
	where all variables are defined with respect to the distribution $\dist$ and the internal randomness of $\prot$ (recall that $\Prot$ includes the transcript and the public randomness). 
\end{definition}

\Cref{def:eps-si} basically states that a protocol can internal $\eps$-solve the $\SI$ problem iff 
the transcript of the protocol can change the distribution of the target element $\estar$ \emph{from the perspective of Alice or Bob} by at least $\eps$ in the total variation distance on average.

Our definition is inspired but different from $\eps$-solving in~\cite{AssadiCK19b} (which we call \emph{external} $\eps$-solving to avoid ambiguity) which required the transcript to change the distribution of the target 
element by $\eps$ \emph{from the perspective of an external observer} (who only sees the transcript but not the inputs of players). 
More formally, external $\eps$-solving of $\SI$ over a distribution $\mu$, as defined in~\cite{AssadiCK19b}, requires the protocol $\prot$ to have the following property (compare this with~\Cref{def:eps-si}), 
\[
	\EX_{\rProt}{\tvd{\distribution{\restar \mid \Prot}}{\distribution{\restar}}} \geq \eps.
\]

The previous work in~\cite{AssadiCK19b} has shown that there is a distribution $\mu$ such that any protocol that external $\eps$-solves $\SI$ over $\mu$ has information cost $\Omega(\eps^2 \cdot m)$.  
This however does \emph{not} imply a lower bound  for the internal $\eps$-solving problem. This is because, \emph{in principle}, these two tasks can be different. For instance, $(i)$ a protocol 
that reveals the entire set of Alice, changes the distribution of target for Bob dramatically but not so much for an external observer; or $(ii)$ a protocol that reveals all the elements that are 
neither in $A$ nor in $B$ changes the distribution of the target for an external observer by a lot but does not change the distribution for either of the players at all. 

We prove the following lower bound on the information cost of internal $\eps$-solving of $\SI$. 

\begin{theorem}\label{thm:eps-si}
	There is a distribution $\distSI$ for $\SI$ over the universe $[m]$ such that: 
	\begin{enumerate}[label=(\roman*)]
	\item For any $A$ or $B$ sampled from $\distSI$, both $\distribution{\restar \mid A}$ and $\distribution{\restar \mid B}$ are uniform distributions on $A$ and $B$, each of size $m/4$, respectively. 
	\item For any $\eps \in (0,1)$, any protocol $\prot$ that internal $\eps$-solves the $\SI$ problem over the distribution $\distSI$ (\Cref{def:eps-si}) has  internal information cost $\IC{\prot}{\distSI} = \Omega(\eps^2 \cdot m)$.  
	\end{enumerate}
\end{theorem}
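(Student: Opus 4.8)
The plan is to prove \Cref{thm:eps-si} via an explicit ``block-structured'' hard distribution together with an information-theoretic direct-sum argument --- in the spirit of the \emph{external} $\eps$-solving lower bound of~\cite{AssadiCK19b} --- that reduces the claim to a single-block ``AND gadget.'' Concretely, write $m=4k$ and partition $[m]$ into $k$ blocks $\mathcal{B}_1,\ldots,\mathcal{B}_k$ of size $4$. To sample $(A,B)\sim\distSI$: pick a uniformly random ``special'' index $\rJ\in[k]$; for each $j\neq\rJ$ let $(\rA_j,\rB_j)$ be uniform among the twelve ordered pairs of \emph{distinct} elements of $\mathcal{B}_j$; for $j=\rJ$ let $\rA_j=\rB_j$ be uniform in $\mathcal{B}_j$; and set $A=\{\rA_1,\ldots,\rA_k\}$, $B=\{\rB_1,\ldots,\rB_k\}$, $\estar=\rA_{\rJ}$. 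Since the blocks are disjoint we get $\card{A}=\card{B}=k=m/4$, and the only block contributing a common element is $\mathcal{B}_{\rJ}$, so $A\cap B=\{\estar\}$ \emph{deterministically}; thus $\distSI$ is a legal input distribution for \SI. Part~(i) is then immediate from symmetry: for each $j$ the marginal law of $\rA_j$ is uniform on $\mathcal{B}_j$ regardless of whether $j=\rJ$, so $A$ has the law of $k$ independent uniform block-coordinates and is independent of $\rJ$; hence $\distribution{\rJ\mid A}$ is uniform on $[k]$, $\distribution{\restar\mid A}=\distribution{\rA_{\rJ}\mid A}$ is uniform on $A$, and symmetrically for $B$.

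For part~(ii), fix a protocol $\prot$ that internal $\eps$-solves \SI over $\distSI$ and assume, by symmetry, that the first alternative of \Cref{def:eps-si} holds (otherwise swap the two players). Conditioned on $A$, the target $\restar=\rA_{\rJ}$ is merely a relabelling of $\rJ$, so $\tvd{\distribution{\restar\mid\Prot,A}}{\distribution{\restar\mid A}}=\tvd{\distribution{\rJ\mid\Prot,A}}{\distribution{\rJ\mid A}}$. Writing $q_j:=\Pr(\rJ=j\mid\Prot,A)=\Pr(\rB_j=\rA_j\mid\Prot,A)$ and $\beta_j:=\EX_{\rProt,\rA}\bracket{(k q_j-1)_+}$ --- the average \emph{multiplicative boost} the transcript (together with Alice's view) gives to the prior value $\Pr(\rJ=j)=1/k$ --- the hypothesis rewrites as $\sum_{j=1}^{k}\beta_j\geq\eps k$. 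On the other hand, $\IC{\prot}{\distSI}\geq\mi{\rProt}{\rB\mid\rA}=\sum_{j=1}^{k}\mi{\rProt}{\rB_j\mid\rA,\rB^{<j}}$ by the chain rule, so it suffices to prove the \emph{per-block bound} $\mi{\rProt}{\rB_j\mid\rA,\rB^{<j}}=\Omega(\beta_j^2)$ (which in particular forces $\beta_j=O(1)$, since $\rB_j\in\mathcal{B}_j$ carries only $2$ bits). Granting this, Cauchy--Schwarz gives $\sum_j\beta_j^2\geq(\sum_j\beta_j)^2/k\geq\eps^2 k$, hence $\IC{\prot}{\distSI}=\Omega(\eps^2 m)$, as claimed.

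The heart of the argument --- and the step I expect to be the main obstacle --- is the per-block bound: to obtain a multiplicative boost $\beta_j$ on block $j$, Bob must leak $\Omega(\beta_j^2)$ bits about $\rB_j$. This is an ``AND-gadget'' phenomenon: Bob holds $\rB_j$ but not $\rA_j$, Alice holds $\rA_j$ but not $\rB_j$, and neither player alone knows the indicator $\mathbf{1}[\rA_j=\rB_j]$, so to shift Alice's estimate of this indicator at all Bob must transmit \emph{generic} information about $\rB_j\in\mathcal{B}_j$ rather than information targeted at the single value $\rA_j$ he does not know --- which is exactly why the cost is $\Omega(\beta_j^2)$ bits about $\rB_j$, a factor $\approx k$ more than the $\Theta(\beta_j^2/k)$ information actually conveyed about the indicator itself. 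I plan to make this rigorous by isolating block $j$ (conditioning on $\rA$, on $\rB^{<j}$, and on the public coins so that a single-gadget protocol remains), using $\rJ$ as an auxiliary variable to handle the global ``exactly-one-special-block'' constraint, replacing the exact single-block law by a mildly perturbed product law and bounding the resulting coupling error, and finishing with a Pinsker/KL estimate: an $\eps_j$-shift in the conditional law of $\rB_j$ forces KL-divergence, hence mutual information, $\Omega(\eps_j^2)$ between Bob's messages and $\rB_j$. The two delicate points --- and the reason this must be argued from scratch rather than inherited from the \emph{external} $\eps$-solving bound of~\cite{AssadiCK19b} (recall the two examples above showing the internal and external notions are genuinely different) --- are (a) cleanly decoupling the global collision constraint so that the single-block problem is well posed, and (b) carrying the divergence estimate through the outer expectations over the transcript and over Alice's input. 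The construction of $\distSI$, part~(i), and the reduction above I expect to be routine.
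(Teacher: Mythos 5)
Your distribution is legitimate and part~(i) is fine, as is the rewriting of the hypothesis as $\sum_{j}\beta_j\geq \eps k$. The proof breaks, however, at the very first step of part~(ii), where you discard half of the internal information cost and try to charge everything to $\mi{\rProt}{\rB\mid\rA}$: the per-block inequality $\mi{\rProt}{\rB_j\mid\rA,\rB^{<j}}=\Omega(\beta_j^2)$ that you isolate as the heart of the argument is \emph{false}. Consider the protocol in which Alice sends her entire input $A$ to Bob and Bob replies with the special block index $j^\star$ (equivalently, with $\estar$ itself). This protocol internal $(1-\nicefrac{1}{k})$-solves $\SI$ via the first alternative of \Cref{def:eps-si}, since Alice's posterior on $\restar$ collapses to a point mass; correspondingly every $\beta_j$ equals $1-\nicefrac{1}{k}$, so your per-block bound would force $\mi{\rProt}{\rB\mid\rA}=\Omega(k)=\Omega(m)$. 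Yet conditioned on $\rA$ the transcript is a deterministic function of $\rJ$, so $\mi{\rProt}{\rB\mid\rA}\leq \en{\rJ}=\log k$. The entire information cost of this protocol lives in the term $\mi{\rProt}{\rA\mid\rB}$ that you dropped. The failure is not an artifact of your block distribution: on any hard distribution, a shift in Alice's posterior on $\estar$ can be purchased either by Bob revealing $B$ or by Alice revealing $A$ and Bob answering with $O(\log m)$ bits, so a correct argument must control both information terms simultaneously, and no improved single-gadget analysis can rescue a one-sided per-coordinate bound.

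This is exactly the difficulty the paper's proof is engineered to sidestep. Instead of a direct-sum decomposition, it runs the given protocol $O(1/\eps^2)$ times on publicly re-randomized copies of the input, uses the consistent bias of $\estar$ toward the top half of Alice's posterior to extract a candidate set $T$ of size $O(m)$ containing $\estar$, finishes with the H{\aa}stad--Wigderson protocol on $(T,B)$, and then invokes the $\Omega(m)$ information-cost lower bound of Jayram--Kumar--Sivakumar for \emph{exactly} finding the intersection element. Because that exact-solving lower bound already controls both information terms symmetrically, no per-coordinate internal bound is ever needed. To salvage your route you would have to formulate and prove a symmetrized per-block inequality of the form $\mi{\rProt}{\rA_j\mid\rB,\rA^{<j}}+\mi{\rProt}{\rB_j\mid\rA,\rB^{<j}}=\Omega(\beta_j^2)$, which is a genuinely new statement rather than a routine adaptation of the standard gadget analysis, and is essentially the whole content of the theorem.
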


In the rest of this section, we give a simple proof of~\Cref{thm:eps-si} by relating it to known results for exact solving of the $\SI$ problem (our proof idea can also be used to give a much simpler proof of the
lower bound of~\cite{AssadiCK19b} for the external $\eps$-solving of $\SI$). However, the reader may skip this section entirely and jump to the next one for the proof of our graph streaming lower bounds directly (the only
result used from this section that is used outside is~\Cref{thm:eps-si}).

\subsection{The Lower Bound for $\eps$-Internal Solving of Set Intersection}\label{sec:proof-eps-si}
\subsection*{Background on Set Intersection} Consider the following distribution: 
\begin{tbox}
	\textbf{Distribution $\distSI$.} An input distribution $(A,B)$ to $\SI$ over the universe $[m]$.
\begin{enumerate}[label=$(\roman*)$]
	\item Sample two \emph{disjoint} sets $\tilde{A}$ and $\tilde{B}$ of size $m/4-1$ each uniformly at random from $[m]$. 
	\item Sample $\estar \in [m] \setminus \tilde{A} \cup \tilde{B}$ uniformly at random and let $A := \tilde{A} \cup \set{\estar}$ and $B := \tilde{B} \cup \set{\estar}$. 
\end{enumerate}
\end{tbox}
\noindent
It is well-known that $\distSI$ is a ``hard'' distribution for $\SI$~\cite{Razborov92}. In particular, the following result follows from the lower bound of~\cite{JayramKS03}. 
\begin{proposition}[cf. \cite{JayramKS03}]\label{prop:si}
	Any protocol $\prot$ that \underline{finds} the target element in $\SI$ over  $\distSI$ with probability of success at least $2/3$ has  information cost $\IC{\prot}{\distSI} = \Omega(m)$. 
\end{proposition}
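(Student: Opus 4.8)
The plan is to obtain Proposition~\ref{prop:si} from the classical information-complexity lower bound for set disjointness of Jayram, Kumar and Sivakumar~\cite{JayramKS03}, using the fact that the search version of $\SI$ over $\distSI$ is at least as hard as \emph{deciding} (balanced) unique-intersection set disjointness, for which $\distSI$ is---up to lower-order details in the set sizes---exactly the standard hard ``intersecting'' branch.

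Concretely, let $\dist_{0}$ be the distribution that samples a disjoint pair $(A,B)$ with $|A|=|B|=m/4$ in the same way $\distSI$ does, except that no common element is planted, and set $\mu := \tfrac{1}{2}\distSI + \tfrac{1}{2}\dist_{0}$, the hard distribution for the decision problem ``is $A\cap B=\emptyset$?''. The first step is to record the form of~\cite{JayramKS03} I will use: any protocol $\rho$ that decides this problem over $\mu$ with success probability at least $2/3$ has $\IC{\rho}{\distSI}=\Omega(m)$. The point is that the BJKS/JKS direct-sum argument lower-bounds a sum of $m$ per-coordinate mutual-information terms, each of which is forced by the protocol's correctness on the corresponding ``planted'' instance and is insensitive to which hard branch of $\mu$ one averages the transcript over; hence the $\Omega(m)$ can be charged to the structured branch $\distSI$ itself.

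The reduction is then short. Given $\prot$ that finds $\estar$ over $\distSI$ with probability $\geq 2/3$, build a protocol $\rho$ for the decision problem: run $\prot$ to obtain a candidate $e$; Alice sends the bit $[e\in A]$, Bob sends the bit $[e\in B]$; output ``intersecting'' iff both bits are $1$. On $\dist_{0}$-inputs $A\cap B=\emptyset$, so the two bits cannot both be $1$ and $\rho$ is always correct; on $\distSI$-inputs, with probability $\geq 2/3$ we have $e=\estar\in A\cap B$ and $\rho$ is correct; thus $\rho$ succeeds with probability $\geq 5/6$ over $\mu$. Appending two single-bit messages increases the internal information cost by at most an additive $O(1)$, so $\IC{\prot}{\distSI}\geq\IC{\rho}{\distSI}-O(1)=\Omega(m)$ by the previous step, which is the claim. (The minor discrepancy between the exact set sizes $m/4$ in $\distSI$ and the $\mathrm{Binomial}(m,1/4)$-type sizes that are natural in a coordinate-wise hard distribution only affects constants, via a Chernoff bound as in~\Cref{prop:chernoff}.)

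The step I expect to be the real obstacle is the second paragraph: justifying that the $\Omega(m)$ information is already incurred under $\distSI$ (the unique-intersection branch), not merely under the full mixture $\mu$ or under the all-disjoint ``collapsing'' distribution used in some textbook statements. A protocol constrained only on $\distSI$-inputs could a priori leak a great deal about its input on disjoint instances, so one cannot simply quote ``$\IC{\rho}{\mu}=\Omega(m)$'' and transfer it. I would settle this either by invoking the version of~\cite{JayramKS03} whose hard distribution already plants a common element (so that its $\Omega(m)$ information bound is stated with respect to that distribution), or by re-running the direct-sum/cut-and-paste computation with the transcript's information measured under $\distSI$ and checking that the per-coordinate Hellinger-distance estimates are unchanged. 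Everything else---the verification gadget, the $O(1)$ accounting, and the size-parameter reconciliation---is routine.
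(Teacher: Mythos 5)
Your route is the same one the paper takes: \Cref{prop:si} is not proved from scratch there but obtained by reducing the search version to deciding disjointness and invoking the information-cost lower bound of~\cite{JayramKS03} for protocols measured on an \emph{intersecting} distribution (with details deferred to~\cite{AssadiCK19b}), and the obstacle you flag---that the standard BJKS-style bounds only charge information on non-intersecting inputs and hence cannot simply be transferred to $\distSI$---is precisely the subtlety the paper's accompanying remark is devoted to. Your reduction gadget and $O(1)$ accounting are correct, so modulo quoting the right form of~\cite{JayramKS03} this matches the intended argument.
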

A remark on the origin of~\Cref{prop:si} is in order. Firstly, if one only cares about proving a communication cost lower bound for $\SI$, then the proposition follows from the classical results
on the communication complexity of the set disjointness problem~\cite{KalyanasundaramS92,Razborov92,Bar-YossefJKS02}. At the same time, the lower bound on the information cost 
in~\Cref{prop:si} does \emph{not} follow from the standard information cost lower bounds for set disjointness (see, e.g.~\cite{Bar-YossefJKS02,BravermanM13,BravermanGPW13,WeinsteinW15}). 
This is because all aforementioned works bound the information cost of protocols for set disjointness on \emph{non-intersecting} distributions, while $\distSI$ is crucially intersecting (one can fix this however by applying the recent
 result of~\cite{GoosJPW18}, itself based on~\cite{BravermanW15}). As such, one needs to apply the result
of~\cite{JayramKS03} on the information cost of set disjointness protocols on intersecting distributions (see~\cite{AssadiCK19b} for  details). 

\subsection*{Proof of~\Cref{thm:eps-si}}

Let $\protSI$ be any protocol that internal $\eps$-solves $\SI$ over $\distSI$ and without loss of generality, let us assume it does so by changing the distribution of $\estar$ from the perspective of Alice, i.e., 
\begin{align}
	\EX_{\rProt,\rA}{\tvd{\distribution{\restar \mid \Prot,A}}{\distribution{\restar \mid A}}} \geq \eps. \label{eq:protSI}
\end{align}
Define the following two constants: 
\begin{equation}\label{eq:constants}
\begin{aligned}
	\gamma_1 &:= \text{minimum of $1/100$ and the constant of $\Omega$-notation of~\Cref{prop:si}}, \\ 
	\gamma_2 &:= \text{$100 \cdot \log{(1/\gamma_1)}$ for the constant $\gamma_1$ above}.
\end{aligned}
\end{equation}
We design a protocol $\prot$ based on $\protSI$ that finds the target element $\estar$ of a given instance $(A,B)$ of $\SI$ sampled from $\distSI$ with probability at least $2/3$. The general idea of the protocol 
is to use a ``self-reducibility'' property by turning each fixed instance of $\distSI$ to many independent instances sampled uniformly from $\distSI$ and run the protocol $\protSI$ on each one. Alice then collects the elements 
that she deems to be ``more likely'' to be the target element in each of these instances into a set $T$ much smaller than $A$, and communicates $T$ directly to Bob. Our analysis  uses the fact that 
the distribution of $\estar$ from the perspective of Alice in each of these instances was sufficiently far from uniform and prove that this smaller set $T$ still should contain the target element $\estar$ with a constant probability. 

\begin{tbox}
	\textbf{Protocol $\prot$.} A protocol for (exact) solving $\SI$ on $\distSI$ using $\protSI$ as a black-box. 
	\begin{enumerate}[label=$(\roman*)$]
	\item Define the parameters: 
	\[
	k := \frac{32}{\eps^2} \cdot \ln{(\frac{100\,\gamma_2}{\gamma_1})}, \quad t := \frac{\gamma_1}{\gamma_2} \cdot \frac{m}{2}, \quad  \tau := \paren{\frac{1}{2}+\frac{\eps}{4}} \cdot k.
	\]
	\item For $i = 1$ to $k$ times: 
	\begin{enumerate}
		\item Sample a permutation $\sigma_i$ of $[m]$ uniformly at random using \underline{public randomness}.
		\item Run $\protSI$ on the inputs $(\sigma_i(A),\sigma_i(B))$ (where for $S \in \set{A,B}$, $\sigma(S) := \set{\sigma(i) \mid i \in S}$). 
		\item Let $\Prot_i$ be the transcript of $\protSI$. Alice computes $\distribution{\estar \mid \Prot_i,\sigma_i(A)}$ and let $S_i$ be the 
		top half of elements in $A$ with largest probability in this distribution.  
	\end{enumerate}
	\item For any element $e \in A$, Alice computes $c_e$ as the number of $i \in [k]$ such that $\sigma_i(e)$ appears in $S_i$. Let $T$ be the set of elements $e \in A$ where $c_e > \tau $. 
	
	\item If $\card{T} > t $, Alice reports `fail'. Otherwise, Alice communicates the set $T$ directly to Bob (appended with arbitrary elements from $A$ until its size reaches exactly $t$ if it is smaller). 
	Bob returns the element in $B \cap T$ or reports `fail' if no element was found. 
\end{enumerate}
\end{tbox}
\begin{observation}\label{obs:si-key}
	The distribution of each $(\sigma_i(A),\sigma_i(B))$  is $\distSI$, and $\set{(\sigma_i(A),\sigma_i(B))}_{i \in [k]}$ are mutually independent. 
\end{observation}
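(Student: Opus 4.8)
The plan is to exploit the fact that $\distSI$ is just the uniform distribution over a single ``orbit'' of pairs of subsets, and is therefore invariant under relabelings of the universe. Let $\mathcal{I}$ denote the collection of ordered pairs $(A,B)$ with $A,B\subseteq[m]$, $\card{A}=\card{B}=m/4$, and $\card{A\cap B}=1$. First I would check that $\distSI$ is exactly the uniform distribution on $\mathcal{I}$: the map $(A',B',\estar)\mapsto(A'\cup\set{\estar},\,B'\cup\set{\estar})$ is a bijection from the set of objects ``(ordered pair of disjoint $(m/4-1)$-subsets of $[m]$, together with an element of the complement of their union)'' onto $\mathcal{I}$ — its inverse reads off $\estar=A\cap B$, $A'=A\setminus\set{\estar}$, $B'=B\setminus\set{\estar}$ — and $\distSI$ samples the former object uniformly at random. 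In particular $\distSI$ is \emph{permutation-invariant}: for every $\rho\in S_m$, if $(A,B)\sim\distSI$ then $(\rho(A),\rho(B))\sim\distSI$.

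Next I would prove the following strengthening: for \emph{every} fixed $(A,B)\in\mathcal{I}$ and $\sigma$ drawn uniformly from $S_m$ independently of $(A,B)$, the pair $(\sigma(A),\sigma(B))$ is distributed exactly as $\distSI$, and this law does not depend on which $(A,B)$ was fixed. Indeed, the law of $(\sigma(A),\sigma(B))$ is supported on $\mathcal{I}$ (applying the bijection $\sigma$ preserves the three cardinalities) and is permutation-invariant (post-composing the outcome with $\rho\in S_m$ just replaces $\sigma$ by $\rho\sigma$, which is still uniform). Since $S_m$ acts transitively on $\mathcal{I}$ — any member $(A,B)$ of $\mathcal{I}$ induces the ordered partition of $[m]$ into the blocks $A\cap B$, $A\setminus B$, $B\setminus A$, $[m]\setminus(A\cup B)$ of sizes $1,\ m/4-1,\ m/4-1,\ m/2+1$, the pair is recoverable from this tuple of blocks, and any two such ordered partitions are related by a permutation — the only permutation-invariant probability distribution on the finite set $\mathcal{I}$ is the uniform one, namely $\distSI$. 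Hence $(\sigma(A),\sigma(B))\sim\distSI$ regardless of $(A,B)$.

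Finally I would assemble the pieces. The input instance $(A,B)$ lies in $\mathcal{I}$ almost surely, and $\sigma_1,\dots,\sigma_k$ are drawn i.i.d.\ uniformly from $S_m$ via public randomness, independently of $(A,B)$. Therefore, conditioned on $(A,B)$, each $(\sigma_i(A),\sigma_i(B))$ is a deterministic function of $\sigma_i$ alone, so the $k$ pairs are mutually independent, and by the previous paragraph each has law $\distSI$; that is, their conditional joint law is $\distSI^{\otimes k}$. Because this conditional joint law does not depend on the value of $(A,B)$, it is also the unconditional joint law, which simultaneously yields that each $(\sigma_i(A),\sigma_i(B))$ is distributed as $\distSI$ and that the $k$ pairs are mutually independent. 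The one point deserving care is exactly this last move: it is the independence of the conditional distribution from the conditioning variable that lets us lift both conclusions from ``conditioned on $(A,B)$'' to the unconditional statement; the rest is a routine verification of cardinalities and of transitivity of the $S_m$-action on $\mathcal{I}$.
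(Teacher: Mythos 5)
Your proof is correct and is essentially a fully spelled-out version of the paper's argument: the paper observes that $\distSI$ can equivalently be sampled by fixing a canonical pair and applying a uniformly random permutation of $[m]$, which is exactly your transitivity/permutation-invariance argument, followed by the same conditioning step. The paper declares this ``immediate''; you have simply filled in the details (uniformity on the orbit $\mathcal{I}$, transitivity of the $S_m$-action, and the lift from conditional to unconditional independence), all of which check out.
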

To see the proof of~\Cref{obs:si-key} consider the following alternative way of sampling from the distribution $\distSI$: Let $A=\set{1,\ldots,m/4}$ and $B= \set{m/4,\ldots,(m/2)-1}$, and then apply a random permutation $\sigma$ and 
return $\sigma(A),\sigma(B)$. Under this view,~\Cref{obs:si-key} is immediate.

\Cref{obs:si-key} identifies the role of $\sigma_i$'s in the protocol -- they basically ``re-randomize'' the input instance so that we can try an \emph{independent} version of $\protSI$ on the original input multiple times. 

The following two claims are the key parts of the proof. 

\begin{claim}\label{clm:prot-estar}
For any $i \in [k]$, $\Pr\paren{\sigma_i(\estar) \in S_i} \geq \dfrac{1+\eps}{2}.$
\end{claim}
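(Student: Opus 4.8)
\textbf{Proof proposal for Claim~\ref{clm:prot-estar}.}

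The plan is to read off the claim directly from the definition of internal $\eps$-solving (\Cref{def:eps-si}, equation~\eqref{eq:protSI}) together with the structure of the set $S_i$. Fix $i \in [k]$. By \Cref{obs:si-key}, the pair $(\sigma_i(A),\sigma_i(B))$ is distributed according to $\distSI$, and $\sigma_i(\estar)$ is the corresponding target element. So for notational simplicity I would just argue about a single run of $\protSI$ on a fresh sample from $\distSI$: let $(A,B) \sim \distSI$ with target $\estar$, let $\Prot$ be the transcript (plus public coins) of $\protSI$, and let $S$ be the top half of $A$ (so $|S| = |A|/2 = m/8$) under the posterior $\distribution{\restar \mid \Prot, A}$. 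I must show $\Pr(\estar \in S) \ge \tfrac12 + \tfrac{\eps}{2}$.

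The key point is that by part~(i) of \Cref{thm:eps-si}, the prior $\distribution{\restar \mid A}$ is \emph{uniform} on $A$. Hence conditioned on $A$ and $\Prot$, the event ``$\estar \in S$'' has probability exactly $\sum_{e \in S} \Pr(\restar = e \mid \Prot, A)$, i.e. the posterior mass placed on the top-half set $S$. I would now invoke the elementary fact that for any probability distribution $q$ on a finite set $A$, if $S$ is the set of the $|A|/2$ largest-probability elements, then $q(S) = \tfrac12 + \tfrac12\,\tvd{q}{u}$ where $u$ is the uniform distribution on $A$ — more precisely $q(S) \ge u(S) + \tvd{q}{u} = \tfrac12 + \tvd{q}{u}$, since the optimal set witnessing the total variation distance between $q$ and the uniform distribution, which puts mass $\tfrac12$ on every half, is contained in (or can be taken to be) the top half. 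Taking $q = \distribution{\restar \mid \Prot, A}$ and $u = \distribution{\restar \mid A}$, this gives
\[
\Pr\paren{\estar \in S \mid \Prot, A} \ge \tfrac12 + \tvd{\distribution{\restar \mid \Prot,A}}{\distribution{\restar \mid A}}.
\]
Taking expectation over $\Prot$ and $A$ and applying~\eqref{eq:protSI} yields $\Pr(\estar \in S) \ge \tfrac12 + \EX_{\rProt,\rA}\tvd{\distribution{\restar\mid\Prot,A}}{\distribution{\restar\mid A}} \ge \tfrac12 + \eps$, which is even stronger than what is claimed (the claim only asks for $\tfrac12 + \tfrac{\eps}{2}$, so there is slack, presumably because later uses need the top half to be exactly half of $A$ rather than half of the universe, or to absorb a minor rounding in $|A|/2$).

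The only real subtlety — and the step I would be most careful about — is the ``top half captures total variation'' inequality, including the edge cases where the posterior is not strictly monotone or where ties in probability values force an arbitrary tie-breaking choice of $S$. The cleanest way around this is: since both $q$ and $u$ live on $A$ with $u$ uniform, $\tvd{q}{u} = \sum_{e: q(e) > 1/|A|} (q(e) - 1/|A|)$, and every such "heavy" element $e$ has $q(e) > 1/|A| \ge q(e')$ for the lightest element $e'$, so all heavy elements lie in the top half $S$; therefore $q(S) \ge \sum_{e \in S, \text{ heavy}} q(e) + (\text{mass needed to fill } S \text{ to } |S| \text{ elements}) \ge (\text{heavy mass on } S) + |S|/|A| = \tvd{q}{u} + \tfrac12$. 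A one-line justification of this flavor suffices; everything else is bookkeeping with \Cref{obs:si-key} to transfer back to $\sigma_i(\estar) \in S_i$.
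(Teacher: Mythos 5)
Your overall route is the same as the paper's: condition on $\Prot_i$ and $\sigma_i(A)$, observe that the prior $\distribution{\restar \mid A}$ is uniform on $A$, apply a ``top half captures total variation distance'' lemma, and then take expectations and invoke \Cref{eq:protSI}. The paper does exactly this via \Cref{prop:uniform-tvd}. However, the auxiliary inequality you rely on is stated too strongly, and the justification you give for it is flawed, so as written there is a genuine (if localized) gap. The inequality $q(S) \geq \frac{1}{2} + \tvd{q}{u}$ is false; the correct bound --- the one \Cref{prop:uniform-tvd} proves --- is $q(S) \geq \frac{1}{2} + \frac{1}{2}\cdot\tvd{q}{u}$, and the factor $\frac{1}{2}$ is tight. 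For a counterexample, take $\card{A}=4$ and $q = \paren{\tfrac14+\delta,\ \tfrac14-\tfrac{\delta}{3},\ \tfrac14-\tfrac{\delta}{3},\ \tfrac14-\tfrac{\delta}{3}}$: then $\tvd{q}{u} = \delta$ but the top half receives mass only $\tfrac12 + \tfrac{2\delta}{3} < \tfrac12 + \delta$. The step in your justification that fails is the assertion that every heavy element (one with $q(e) > 1/\card{A}$) lies in the top half $S$: being heavier than the lightest element does not place an element among the top $\card{A}/2$, and there can be as many as $\card{A}-1$ heavy elements. Moreover, even when all heavy elements do fit inside $S$, the accounting $q(S) \geq (\text{heavy excess}) + \card{S}/\card{A}$ does not follow, since the non-heavy elements used to fill $S$ carry mass \emph{at most} $1/\card{A}$ each, not at least. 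Handling the two regimes (few versus many heavy elements) is precisely the two-case analysis in the paper's proof of \Cref{prop:uniform-tvd}.

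The good news is that the claim only asserts $\frac{1}{2}+\frac{\eps}{2}$, so replacing your inequality by the correct $q(S) \geq \frac{1}{2} + \frac{1}{2}\cdot\tvd{q}{u}$ and taking expectations over $\Prot_i, \sigma_i(A)$ recovers the paper's bound exactly; the ``slack'' you speculate about is not slack at all but exactly this factor of $\frac{1}{2}$. The remainder of your argument --- reducing to a single fresh sample via \Cref{obs:si-key} and identifying the conditional probability of $\sigma_i(\estar) \in S_i$ with the posterior mass on $S_i$ --- matches the paper.
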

\begin{proof}
	Considering~\Cref{obs:si-key}, any choice of $\Prot_i,\sigma_i(A)$, fixes $\distribution{\sigma_i(\estar) \mid \Prot_i,\sigma_i(A)}$ and the distribution of $\sigma_i(\estar)$ over $\sigma_i(A)$ now is exactly this. 
	At the same time, $\distribution{\sigma_i(\estar) \mid \sigma_i(A)}$ is uniform over $\sigma_i(A)$ by construction of $\distSI$. 
	We can now use~\Cref{prop:uniform-tvd} (on probability mass of larger half of distribution based on the total variation distance from uniform) to obtain,  
	\begin{align*}
		\Pr\paren{\sigma_i(\estar) \in S_i} &= \Ex_{\Prot_i,\sigma_i(A)}\bracket{\Pr\paren{\sigma_i(\estar) \in S_i \mid \Prot_i,\sigma_i(A)}} \\
		&\geq \Ex_{\Prot_i,\sigma_i(A)}\bracket{\frac{1}{2}+\frac{1}{2} \cdot \tvd{\distribution{\sigma_i(\estar) \mid \Prot_i,\sigma_i(A)}}{\distribution{\sigma_i(\estar) \mid \sigma_i(A)}}} 
		\tag{by~\Cref{prop:uniform-tvd} and the choice of $S_i$ in the protocol} \\
		&\geq \frac{1}{2}+\frac{\eps}{2},
	\end{align*}
	where the final equation is by the guarantee of $\protSI$ for internal $\eps$-solving in~\Cref{eq:protSI}. 
\end{proof}

\begin{claim}\label{clm:prot-e}
	For any $i \in [k]$ and $e \neq \estar \in A$, $\Pr\paren{\sigma_i(e) \in S_i} \leq \dfrac{1}{2} + \dfrac{1}{\card{A}}$. 
\end{claim}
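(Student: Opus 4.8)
The plan is to condition on Alice's view — the transcript $\Prot_i$ together with her (re-randomized) set $\sigma_i(A)$ — which simultaneously freezes the set $S_i$ and the posterior $p := \distribution{\sigma_i(\estar) \mid \Prot_i, \sigma_i(A)}$, and then to exploit the exchangeability of the non-target elements of $A$ under the re-randomization $\sigma_i$. First I would invoke \Cref{obs:si-key} to pass to the equivalent picture in which $A, B, \estar$ are the fixed canonical instance and $\sigma_i$ is a uniformly random permutation of $[m]$; then $e \in A \setminus \set{\estar}$ is a fixed element and $\sigma_i(e)$ is the quantity to control. Fix a transcript value $\pi$ and a set value $\mathcal{A}$ of $\sigma_i(A)$; these determine $S_i \subseteq \mathcal{A}$ with $\card{S_i} \le \ceil{\card{A}/2}$, and $p(a) = \Pr(\sigma_i(\estar) = a \mid \Prot_i = \pi, \sigma_i(A) = \mathcal A)$ for $a \in \mathcal A$.

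The crux is the following symmetry sub-claim: for every $a, a^* \in \mathcal A$,
\[
  \Pr\paren{\sigma_i(e) = a \mid \Prot_i = \pi,\ \sigma_i(A) = \mathcal A,\ \sigma_i(\estar) = a^*} \;=\; \frac{\mathbb{1}[a \neq a^*]}{\card{\mathcal A} - 1}.
\]
To prove it I would further condition on $\sigma_i(B) = \mathcal B$. Conditioned on $(\sigma_i(A), \sigma_i(B), \sigma_i(\estar)) = (\mathcal A, \mathcal B, a^*)$, the permutation $\sigma_i$ decomposes into mutually independent uniform bijections between the matching blocks $A \setminus \set{\estar} \to \mathcal A \setminus \set{a^*}$, $B \setminus \set{\estar} \to \mathcal B \setminus \set{a^*}$, and $[m] \setminus (A \cup B) \to [m] \setminus (\mathcal A \cup \mathcal B)$; since the transcript $\Prot_i$ is a function of $(\mathcal A, \mathcal B)$ and of $\protSI$'s private coins only (the latter being independent of $\sigma_i$), further conditioning on $\Prot_i = \pi$ leaves $\sigma_i|_{A \setminus \set{\estar}}$ uniform, so $\sigma_i(e)$ is uniform on $\mathcal A \setminus \set{a^*}$; averaging over $\mathcal B$ preserves this. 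Plugging the sub-claim into the law of total probability over $\sigma_i(\estar)$ (whose conditional law given $(\pi,\mathcal A)$ is exactly $p$) gives $\Pr(\sigma_i(e) = a \mid \pi, \mathcal A) = \frac{1 - p(a)}{\card{\mathcal A} - 1}$, hence
\[
  \Pr(\sigma_i(e) \in S_i \mid \Prot_i = \pi,\, \sigma_i(A) = \mathcal A) \;=\; \sum_{a \in S_i} \frac{1 - p(a)}{\card{\mathcal A} - 1} \;=\; \frac{\card{S_i} - p(S_i)}{\card{\mathcal A} - 1}.
\]

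To finish, I would use that $S_i$ consists of the $\card{S_i}$ elements of largest $p$-mass, so $p(S_i) \ge \card{S_i}/\card{\mathcal A}$; substituting yields $\Pr(\sigma_i(e) \in S_i \mid \pi, \mathcal A) \le \card{S_i}/\card{\mathcal A} \le \ceil{\card{A}/2}/\card{A} \le \tfrac12 + \tfrac{1}{\card{A}}$, and taking the expectation over $(\Prot_i, \sigma_i(A))$ proves the claim (here $\card{A} = m/4 \ge 2$, so $\card{\mathcal A}-1 \ge 1$ and there is no division issue). The step I expect to demand the most care is the symmetry sub-claim: one must be precise about what the transcript does and does not reveal — namely that $\Prot_i$ depends on the players' inputs only through the re-randomized sets $\sigma_i(A), \sigma_i(B)$ and not through the internal labeling of their elements — so that conditioning on $\Prot_i$ does not break the uniformity of $\sigma_i$ restricted to the non-target part of Alice's set.
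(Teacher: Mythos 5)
Your proposal is correct and follows essentially the same route as the paper: the paper also conditions on the re-randomized inputs $(\sigma_i(A),\sigma_i(B))$, observes that this fixes $\Prot_i$ and hence $S_i$ while $\sigma_i(e)$ remains uniform over $\sigma_i(A)\setminus\set{\sigma_i(\estar)}$, and bounds the probability by $\card{S_i}/(\card{A}-1) \le \frac12 + \frac{1}{\card{A}}$. Your write-up merely spells out the permutation-symmetry step in more detail and adds an unnecessary refinement via $p(S_i)\ge\card{S_i}/\card{\mathcal A}$, which yields a marginally sharper constant but changes nothing downstream.
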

\begin{proof}
	Consider any fixing of $(\sigma_i(A),\sigma_i(B))$. Over the randomness of $\sigma_i$, any element $e \neq \estar \in A$ is mapped to $\sigma_i(A) \setminus \sigma_i(\estar)$ \emph{uniformly at random} even 
	conditioned on $\sigma_i(A),\sigma_i(B)$ (because $\sigma_i(e) \notin B$). At the same time, fixing of $(\sigma_i(A),\sigma_i(B))$  also fixes $\Prot_i$ and subsequently the set $S_i$. As such, 
	\begin{align*}
		\Pr\paren{\sigma_i(e) \in S_i} \leq \frac{\card{S_i}}{\card{A}-1} = \frac{1}{2} \cdot \frac{1}{1-\frac{1}{\card{A}}} \leq \frac{1}{2} \cdot \paren{1+\frac{2}{\card{A}}} = \frac{1}{2} + \frac{1}{\card{A}},
	\end{align*}
	concluding the proof. 
\end{proof}
\Cref{clm:prot-estar,clm:prot-e} imply that $\estar$ tends to appear more frequently in the sets $S_i$ for $i \in [k]$ compared to other elements $e \in A$. 
Hence, by picking the threshold $\tau$ for adding the elements into $T$ carefully, we are able to ensure that $\estar \in T$ with large probability, while not too many 
other elements belong to $T$ as well. We formalize this in the following claim whose proof is a basic application of Chernoff bound. 

\begin{claim}\label{clm:prot-chernoff}
	With probability at least $9/10$, $\estar \in T$ and $\card{T} \leq t$. 
\end{claim}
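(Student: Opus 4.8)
The plan is to decompose the failure event into two bad events, ``$\estar\notin T$'' and ``$\card{T}>t$'', bound each by a small absolute constant, and finish with a union bound. Two observations make the bookkeeping clean: $(i)$ by \Cref{obs:si-key} the $k$ runs of $\protSI$ are on mutually independent instances, so the counters $c_e$ are sums of independent $\set{0,1}$ variables and \Cref{prop:chernoff} applies directly; $(ii)$ the value $k=\tfrac{32}{\eps^2}\ln(100\gamma_2/\gamma_1)$ does not depend on $m$, so for $m$ large any term of the form $\tfrac{1}{\card{A}}=\tfrac{4}{m}$ or $\tfrac{1}{t}$ is negligible --- this is where the ``$m$ large enough'' hypothesis is used. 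We may also assume $\gamma_1\le\gamma_2$ without loss of generality, since the constant hidden in the $\Omega$ of \Cref{prop:si} can only be shrunk (in fact $\gamma_1\le\gamma_2/4$ is forced, as \Cref{prop:hw} run on $\distSI$ is a protocol satisfying the hypothesis of \Cref{prop:si}, and information cost lower bounds communication cost).

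\emph{Bounding $\prob{\estar\notin T}$.} The counter $c_{\estar}$ is the number of iterations $i\in[k]$ with $\sigma_i(\estar)\in S_i$; the indicator of this event in iteration $i$ is a randomized function of the $i$-th permuted instance only, so by \Cref{obs:si-key} these indicators are mutually independent, and by \Cref{clm:prot-estar} each equals $1$ with probability at least $\tfrac{1}{2}+\tfrac{\eps}{2}$, whence $\expect{c_{\estar}}\ge(\tfrac{1}{2}+\tfrac{\eps}{2})k$. Since $\tau=(\tfrac{1}{2}+\tfrac{\eps}{4})k$, the event $c_{\estar}\le\tau$ forces a deviation of at least $\tfrac{\eps}{4}k$ below the mean, so \Cref{prop:chernoff} (with $n=k$, $b=\tfrac{\eps}{4}k$) gives $\prob{\estar\notin T}\le 2\exp(-\eps^2k/32)$; substituting $k$ makes the right-hand side $2\gamma_1/(100\gamma_2)\le\tfrac{1}{50}$.

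\emph{Bounding $\prob{\card{T}>t}$.} Here I would bound $\expect{\card{T\setminus\set{\estar}}}$ and invoke Markov. Fix $e\in A$ with $e\neq\estar$. As in the proof of \Cref{clm:prot-e}, conditioned on the $i$-th instance $(\sigma_i(A),\sigma_i(B))$ the point $\sigma_i(e)$ is uniform over the $\card{A}-1$ elements of $\sigma_i(A)\setminus\set{\sigma_i(\estar)}$ and independent of the then-determined set $S_i$, so the indicator of ``$\sigma_i(e)\in S_i$'' has conditional mean at most $\tfrac{1}{2}+\tfrac{1}{\card{A}}$; by \Cref{obs:si-key} these indicators are independent across $i$, whence $\expect{c_e}\le(\tfrac{1}{2}+\tfrac{4}{m})k$. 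Now $c_e>\tau$ forces a deviation of at least $(\tfrac{\eps}{4}-\tfrac{4}{m})k$ above the mean, and expanding the square shows the resulting Chernoff exponent is $\eps^2k/32-o(1)$ --- crucially an \emph{additive} loss, since $k$ does not grow with $m$ --- so for $m$ large $\prob{c_e>\tau}\le 3\exp(-\eps^2k/32)=3\gamma_1/(100\gamma_2)$. Summing over the at most $\card{A}=m/4$ choices of $e\neq\estar$ gives $\expect{\card{T\setminus\set{\estar}}}\le\tfrac{m}{4}\cdot\tfrac{3\gamma_1}{100\gamma_2}$, and since $t=\tfrac{\gamma_1}{\gamma_2}\cdot\tfrac{m}{2}$ the factor $\gamma_1/\gamma_2$ cancels: Markov yields $\prob{\card{T\setminus\set{\estar}}\ge t/2}\le\tfrac{3}{100}$. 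Off both bad events we have $\estar\in T$ and $\card{T}\le\card{T\setminus\set{\estar}}+1\le t/2+1\le t$ (using $t\ge 2$, valid for $m$ large), so a union bound gives $\prob{\estar\in T\text{ and }\card{T}\le t}\ge 1-\tfrac{1}{50}-\tfrac{3}{100}=1-\tfrac{1}{20}>\tfrac{9}{10}$.

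The single point needing genuine care is the Chernoff estimate for $c_e$: one must confirm that replacing the ``null'' mean $\tfrac{k}{2}$ by $(\tfrac{1}{2}+\tfrac{4}{m})k$ costs only an additive $o(1)$ in the exponent, so that $\prob{c_e>\tau}$ stays within a constant factor of $\exp(-\eps^2k/32)=\gamma_1/(100\gamma_2)$ and this constant factor cancels against $t$ after the sum over $e$; everything else is routine bookkeeping with the explicit constants.
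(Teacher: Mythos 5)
Your proposal is correct and follows essentially the same route as the paper: Chernoff for $c_{\estar}$, Chernoff plus Markov for the other elements of $A$, and a union bound, with the same choice of where the constants $\gamma_1,\gamma_2$ enter. The only differences are cosmetic refinements --- you track the $\tfrac{4}{m}$ slack in the deviation explicitly as an additive $o(1)$ loss in the exponent and apply Markov at threshold $t/2$, whereas the paper simply assumes $m>4k$ and applies Markov at $t$ directly.
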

\begin{proof}
	For any element $e \in A$, let $X(e)$ denote the number indices $i \in [k]$ where $\sigma_i(e)$ appears in $S_i$, i.e., $X(e) = c_e$. By~\Cref{clm:prot-estar}, we have 
	\[
	\expect{X(\estar)} \geq \paren{\frac{1}{2} + \frac{\eps}{2}} \cdot k = \tau + \frac{\eps}{4} \cdot k \tag{by the choice of $\tau = \paren{\frac{1}{2}+\frac{\eps}{4}} \cdot k$}.
	\]
	Moreover, by~\Cref{obs:si-key}, the choices across the $k$ instances $(\sigma_i(A),\sigma_i(B))$ are independent. As such, by Chernoff bound (\Cref{prop:chernoff}), 
	\begin{align*}
		\Pr\paren{X(\estar) \leq \tau} \leq \Pr\Paren{\card{X(\estar) - \expect{X(\estar)}} \geq \frac{\eps}{4} \cdot k} \leq 2 \cdot \exp\paren{-\frac{\eps^2}{32} \cdot k } \leq 
		\frac{\gamma_1}{100 \cdot \gamma_2} \leq \frac{1}{100}. \tag{by the choice of $k = \dfrac{32}{\eps^2} \cdot \ln{(\dfrac{100 \cdot \gamma_2}{\gamma_1}})$ in $\prot$, and since $\gamma_1 \leq 1$ and $\gamma_2 \geq 1$ in~\Cref{eq:constants}}
	\end{align*}
	This means that $\estar \in T$ with probability at least $99/100$.
	
	At the same time, for any $e \neq \estar$, by~\Cref{clm:prot-e}, 
	\[
	\expect{X(e)} \leq \paren{\frac{1}{2}+\frac{4}{m}} \cdot k \leq \tau - \frac{\eps}{4} \cdot k+1, \tag{by the choice of $\tau = \paren{\dfrac{1}{2}+\dfrac{\eps}{4}} \cdot k$}
	\]
	where we assumed that $m > 4k$ as otherwise $\eps^2 \cdot m = \Theta(1)$ and thus~\Cref{thm:eps-si} hold vacuously. 
	
	As before, by Chernoff bound, 
	\begin{align*}
		\Pr\paren{X(e) > \tau} \leq \Pr\Paren{\card{X(e) - \expect{X(e)}} \geq \frac{\eps}{4} \cdot k} \leq  \frac{\gamma_1}{100\cdot \gamma_2}. \tag{as calculated above}
	\end{align*}
	As such, 
	\begin{align*}
		\Ex\card{T} \leq 1+\sum_{e\neq \estar \in A} \expect{X(e)} \leq 1+\frac{m}{4} \cdot \frac{\gamma_1}{100 \cdot \gamma_2}, 
	\end{align*}
		and thus, by Markov bound, with probability less than $1/100$, $\card{T} \leq \frac{\gamma_1}{\gamma_2} \cdot \frac{m}{2} = t$, as desired.  
	The claim now follows from a union bound. 
\end{proof}

We need yet another claim that bounds the information cost of the protocol $\prot$. The proof  uses the fact that~\Cref{obs:si-key} ensures each $(\sigma_i(A),\sigma_i(B),\Prot_i)$ are distributed
as in $\distSI$ (and hence we can bound its information cost) and the final set $T$ communicated by Alice is sufficiently small. 

\begin{claim}\label{clm:prot-info}
	$\IC{\prot}{\distSI} \leq k \cdot \IC{\protSI}{\distSI} + \dfrac{\gamma_1}{2} \cdot m$. 
\end{claim}
\begin{proof}
	Let $\rProt_1,\ldots,\rProt_k$ and $\sigma_1,\ldots,\sigma_k$ denote the random variables for, respectively, the transcript of the protocol $\protSI$ concatenated with its internal public randomness, and the random permutations 
	in iterations $1$ to $k$ of the for-loop in $\prot$. Additionally, let $\rT$ be the random variable for the set $T$ communicated by Alice to Bob. By the definition of internal information cost and the chain rule of mutual information (\itfacts{chain-rule}):
	$\IC{\prot}{\distSI}$ is sum of the following three terms: 
	\begin{align}
		&\mi{\rsigma_1,\ldots,\rsigma_k}{\rA \mid \rB} + \mi{\rsigma_1,\ldots,\rsigma_k}{\rB \mid \rA}; \label{eq:bounding1} \\ 
		&\mi{\rProt_1,\ldots,\rProt_k}{\rA \mid \rB, \rsigma_1,\ldots,\rsigma_k} + \mi{\rProt_1,\ldots,\rProt_k}{\rB \mid \rA, \rsigma_1,\ldots,\rsigma_k}; \label{eq:bounding2} \\
		&\mi{\rT}{\rA \mid \rB,\rProt_1,\ldots,\rProt_k,\rsigma_1,\ldots,\rsigma_k} + \mi{\rT}{\rB \mid \rA,\rProt_1,\ldots,\rProt_k,\rsigma_1,\ldots,\rsigma_k}. \label{eq:bounding3}
	\end{align}
	We bound each term separately below. 
	
	\paragraph{Bounding~\Cref{eq:bounding1}.} This term is zero because the permutations are chosen independent of $(\rA,\rB)$ and thus the given mutual information term is zero by~\itfacts{info-zero}. 
	
	\paragraph{Bounding~\Cref{eq:bounding2}.} Define $\rSigma:=(\rsigma_1,\ldots,\rsigma_k)$. We have, 
	\begin{align*}
	\mi{\rProt_1,\ldots,\rProt_k}{\rA \mid \rB,\rSigma} + \mi{\rProt_1,\ldots,\rProt_k}{\rB \mid \rA,\rSigma} &= \sum_{i=1}^{k} \mi{\rProt_{i}}{\rA \mid \rB,\rSigma,\rProt_{<i}} + \mi{\rProt_{i}}{\rB \mid \rA,\rSigma,\rProt_{<i}} \tag{by chain rule of mutual information (\itfacts{chain-rule})} \\
	&\leq \sum_{i=1}^{k} \mi{\rProt_{i}}{\rA \mid \rB,\rsigma_{i}} + \mi{\rProt_{i}}{\rB \mid \rA,\rsigma_i} \tag{as ${\rProt}_i$ $\perp \rSigma_{<i},\rProt_{<i} \mid \rA,\rB,\sigma_i$ since the conditioning fixes $\rProt_i$  and we can apply~\Cref{prop:info-decrease}} \\
	&= \sum_{i=1}^{k} \mi{\rProt_i}{\rsigma_i(\rA) \mid \rsigma_i(\rB)} + \mi{\rProt_{i}}{\rsigma_i(\rB) \mid \rsigma_i(\rA)} \tag{exactly as above as $\rProt_i$ only depends on the ``inner'' instance defined by $\sigma_i(A)$ and $\sigma_i(B)$} \\
	&= k \cdot \IC{\protSI}{\distSI} \tag{by~\Cref{obs:si-key}}. 
\end{align*}
\paragraph{Bounding~\Cref{eq:bounding3}.} We use $\rSigma$ defined in the previous part here as well:
\begin{align*}
\mi{\rT}{\rA \mid \rB,\rProt_1,\ldots,\rProt_k,\rSigma} + \mi{\rT}{\rB \mid \rA,\rProt_1,\ldots,\rProt_k,\rSigma} &= \mi{\rT}{\rA \mid \rB,\rProt,\rSigma} \tag{as $\rA,\rProt,\rSigma$ fixes $\rT$ and hence second term is zero (\itfacts{info-zero})} \\
&\leq \en{\rT \mid \rB,\rProt,\rSigma} \tag{by definition of mutual information and non-negativity of entropy)} \\
	&\leq \en{\rT} \tag{as conditioning cannot increase entropy (\itfacts{cond-reduce} )} \\
	&\leq \log\!{{m}\choose{t}} \tag{as $\rT$ is supported on $t$-subsets of $[m]$ and by~\itfacts{uniform}} \\
	&\leq t \cdot \log{\paren{\frac{e \cdot m}{t}}} \tag{by standard upper bounds on binomial coefficients} \\
	&\leq \frac{\gamma_1}{\gamma_2} \cdot \frac{m}{2} \cdot \log{\paren{\frac{2e \cdot \gamma_2}{\gamma_1}}} \tag{by the choice of $t$ in the protocol $\prot$} \\
	&\leq \gamma_1 \cdot m \cdot \frac{1}{{{100 \cdot \log{(1/\gamma_1)}}}} \cdot \log{\paren{\frac{1}{{\gamma_1}^3}}} \tag{by the choice of $\gamma_2 = 100 \cdot \log{(1/\gamma_1)} \leq 100 \cdot (1/\gamma_1)$ and since $\gamma_1 \leq 1/100$ in~\Cref{eq:constants}} \\
	&< \frac{\gamma_1}{2} \cdot m,
\end{align*}
by the definition of $\gamma_1$ and~\Cref{prop:si}. Putting the bounds in~\Cref{eq:bounding1},~\Cref{eq:bounding2}, and~\Cref{eq:bounding3} for $\IC{\prot}{\distSI}$ proves the claim. 
\end{proof}

We are now ready to conclude the proof of~\Cref{thm:eps-si}. Firstly, by~\Cref{clm:prot-chernoff}, the probability that the protocol $\prot$ finds the correct index $\estar$ is at least $9/10$, because conditioned 
on the events in~\Cref{clm:prot-chernoff}, Alice simply sends all of $T$ to Bob which includes $\estar$ and Bob will be able to output the answer. As such, by~\Cref{prop:si}, we know that the information cost of $\prot$
has to be $\gamma_1 \cdot m$ at least. Combining this with~\Cref{clm:prot-info} implies that
\[
	\gamma_1 \cdot m \leq \IC{\prot}{\distSI} \leq k \cdot \IC{\protSI}{\distSI} + \dfrac{\gamma_1}{2} \cdot m,
\]
and in turn results in 
\[
	\IC{\protSI}{\distSI} \geq \frac{\gamma_1}{2k} \cdot m. 
\]
which is $\Omega(\eps^2 \cdot m)$ by the choice of $k$ and since $\gamma_1 = \Theta(1)$ in~\Cref{eq:constants} (by~\Cref{prop:si}). \Cref{thm:eps-si} now follows from this and the definition of the distribution $\distSI$ for part $(i)$. \qed


\newcommand{\rM}{\rv{M}}

\section{The Unique-Reach Communication Problem}\label{sec:ureach}

We now start with our main lower bounds. Define the following two-player communication problem. 

\begin{problem}[\ureach]\label{def:ureach}
	The $\ureach$ problem is defined as follows. Consider a digraph $G=(V,E)$ on $n$ vertices where $V := \set{s} \sqcup V_1 \sqcup V_2 \sqcup V_3$ and any edge $(u,v) \in E$ is directed from $s$ to $V_1$ or some $V_i$ to $V_{i+1}$ for $i \in [2]$ (we refer to each 
	$V_i$ as a \emph{layer}). We are {promised} that there is a \textbf{unique} vertex $\sstar$ in the layer $V_3$ reachable from $s$. 
	
	\medskip
	 Alice is given edges in $E$ from $V_1$ to $V_2$, denoted by $E_A$, and Bob is given the remainder of the edges in $E$, denoted by $E_B$ (the partitioning of vertices of $V$ is known to both players). The
	 goal for the players is to \textbf{find} $\sstar$ by Alice sending 
	 a single message to Bob (i.e., in the one-way communication model). 
\end{problem}

It is easy to prove a lower bound of $\Omega(n^2)$ on the one-way communication complexity of $\ureach$ using a reduction from the Index problem. It is also easy to see that this problem can be solved with $O(n\log{n})$ bits of communication, if 
we allow Bob to send a single message to Alice: By the uniqueness promise on $\sstar$, no vertex with out-degree more than one in $V_2$ should be reachable from $s$ and thus Bob can communicate all the remaining edges in $E_B$ 
to Alice. 

Nevertheless, in this paper, we are interested in an algorithmically simpler variant of this problem similar-in-spirit to $\eps$-solving for $\SI$ (\Cref{def:eps-si}).

\begin{definition}\label{def:eps-ureach}
	Let $\dist$ be any distribution  of valid inputs $G=(V,E_A \sqcup E_B)$ for $\ureach$ (known to both players). We say that a protocol $\prot$ \textbf{internal $\eps$-solves} $\ureach$ over $\dist$ iff: 
	\begin{align}
		\EX_{\rProt,\rE_B}{\tvd{\distribution{\rsstar \mid \Prot,E_B}}{\distribution{\rsstar \mid E_B}}} \geq \eps, 
	\end{align}
	where all variables are defined with respect to the distribution $\dist$ and the internal randomness of $\prot$ (recall that $\Prot$ includes the transcript and the public randomness). 
\end{definition}

\Cref{def:eps-ureach} basically states that a protocol can internal $\eps$-solve the problem iff 
the message sent from Alice can change the distribution of the unique vertex $\sstar$ {from the perspective of Bob} by at least $\eps$ in the total variation distance (in expectation over Alice's message and Bob's input).

Our main theorem in this section is the following. 

\begin{theorem}\label{thm:ureach}
	There is a distribution $\distUR$ for $\ureach$ and an integer $b := \frac{n}{2^{\Theta(\sqrt{\log{n}})}}$ with the following properties: 
	\begin{enumerate}[label=(\roman*)]
	\item For any $E_B$ sampled from $\distUR$, $\distribution{\rsstar \mid E_B}$ is a uniform distribution over a subset $\Vstar_3$ of $b$ vertices in the layer $V_3$ of the input graph; 
	\item for any $\eps \in (0,1)$, any \underline{one-way} protocol $\prot$ that internal $\eps$-solves $\ureach$ over the distribution $\distUR$ (\Cref{def:eps-ureach}) has communication cost $\CC{\prot}{} = \Omega(\eps^2 \cdot n \cdot b)$. 
	\end{enumerate}
\end{theorem}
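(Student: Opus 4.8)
I would prove \Cref{thm:ureach} by planting $t$ independent copies of the hard \SI-distribution $\distSI$ of \Cref{thm:eps-si} — one per induced matching of a Ruzsa–Szemer\'edi digraph — into a single \ureach instance, and then running a direct-sum reduction from \Cref{thm:eps-si}. Concretely: take an $(r,t)$-RS digraph $\GRS=(V_1,V_2,\ERS)$ from \Cref{prop:rs} with $r=\Theta(N/e^{\Theta(\sqrt{\log N})})$ and $t=N/3$, write its induced matchings as $\MRS_1,\dots,\MRS_t$, and identify the $r$ edges of $\MRS_j$ with a universe $[r]$; let $V_3$ be a fresh copy of $V_2$, so $n:=|V|=\Theta(N)$. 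A sample from $\distUR$ draws i.i.d.\ pairs $(A_j,B_j)\sim\distSI$ over $[r]$ for $j\in[t]$ and $\jstar\in[t]$ uniformly, sets $E_A:=\bigcup_{j}\{\,i\text{-th edge of }\MRS_j:\ i\in A_j\,\}\subseteq\ERS$ (Alice's $V_1\to V_2$ edges), and sets $E_B$ to be the edges $s\to v$ for every left endpoint $v$ of $\MRS_{\jstar}$ together with, for each $i\in B_{\jstar}$, the edge from the right endpoint of the $i$-th edge of $\MRS_{\jstar}$ to its copy in $V_3$; thus $E_A\leftrightarrow(A_1,\dots,A_t)$ and $E_B\leftrightarrow(\jstar,B_{\jstar})$.

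The uniqueness promise holds \emph{precisely because the matchings are induced}: $s$ reaches exactly the left side of $\MRS_{\jstar}$, through $E_A$ it reaches a subset of $\MRS_{\jstar}$'s right side indexed by $A_{\jstar}$ plus some vertices \emph{not} on $\MRS_{\jstar}$'s right side (there is no $\ERS$-edge between the two sides of $\MRS_{\jstar}$ other than $\MRS_{\jstar}$ itself), and Bob's copy-edges leave only $\MRS_{\jstar}$'s right side; hence the unique $V_3$-vertex reached is the copy of the right endpoint of the $i$-th edge of $\MRS_{\jstar}$ with $i\in A_{\jstar}\cap B_{\jstar}=\{\estar_{\jstar}\}$. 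Part~(i) follows at once: given $E_B$, i.e.\ given $\jstar$ and $B_{\jstar}$, the reachable vertex $\sstar$ is the image of $\estar_{\jstar}$ under an injection, and by \Cref{thm:eps-si}(i) the distribution of $\estar_{\jstar}$ given $B_{\jstar}$ is uniform over $B_{\jstar}$, a set of size $r/4$; so $\distribution{\rsstar\mid E_B}$ is uniform over a set $\Vstar_3\subseteq V_3$ of size $b:=r/4=n/2^{\Theta(\sqrt{\log n})}$.

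For part~(ii), let $\prot$ be a one-way protocol that internal $\eps$-solves \ureach over $\distUR$ (I carry its public coins silently through every conditioning); write $\rProt$ for its transcript-with-coins and $\rM$ for Alice's message. Conditioned on $\jstar=j$ the map $\estar_j\mapsto\sstar$ is a bijection and $\distribution{\restar_j\mid B_j}=\mathrm{Unif}(B_j)$, so the hypothesis unfolds as $\eps\le\EX_{\rProt,\rE_B}\tvd{\distribution{\rsstar\mid\Prot,E_B}}{\distribution{\rsstar\mid E_B}}=\tfrac1t\sum_{j}\delta_j$ with $\delta_j:=\EX_{\rProt,\rB_j}\tvd{\distribution{\restar_j\mid\Prot,B_j}}{\mathrm{Unif}(B_j)}$, hence $\sum_j\delta_j\ge\eps t$. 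Now for each $j$ I build a one-way \SI-protocol $\prot_j$ on $(A_j,B_j)\sim\distSI$: Alice samples $A_1,\dots,A_{j-1}$ from public coins and $A_{j+1},\dots,A_t$ from private coins (all from the correct marginals), assembles $E_A$, and runs $\prot$; Bob holds $B_j$. Using independence of the $t$ copies one checks (a) $\prot_j$ internal $\eps_j$-solves \SI from Bob's side, where $\eps_j:=\EX\,\tvd{\distribution{\restar_j\mid\Prot,A_1,\dots,A_{j-1},B_j}}{\mathrm{Unif}(B_j)}\ge\delta_j$ (extra conditioning only increases the expected total-variation distance, by convexity), and (b) $\IC{\prot_j}{\distSI}=\mi{\rProt}{\rA_j\mid\rB_j,\rA_1,\dots,\rA_{j-1}}\le\mi{\rProt}{\rA_j\mid\rA_1,\dots,\rA_{j-1}}$, the inequality because, given $(\rA_j,\rA_1,\dots,\rA_{j-1})$, the message $\rM$ is a function of $(\rA_{j+1},\dots,\rA_t,\text{private coins})$ and so independent of $\rB_j$. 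Summing (b) and applying the chain rule, $\sum_j\IC{\prot_j}{\distSI}\le\sum_j\mi{\rProt}{\rA_j\mid\rA_1,\dots,\rA_{j-1}}=\mi{\rProt}{E_A}\le\en{\rM}\le\CC{\prot}{}$, while \Cref{thm:eps-si}(ii) gives $\IC{\prot_j}{\distSI}=\Omega(\eps_j^{2}r)$. Combining these with Cauchy–Schwarz and $\sum_j\eps_j\ge\sum_j\delta_j\ge\eps t$ yields $\CC{\prot}{}\ge\Omega\!\big(r\sum_{j}\eps_j^{2}\big)\ge\Omega\!\big(r(\eps t)^2/t\big)=\Omega(\eps^2 tr)=\Omega(\eps^2 nb)$.

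The main obstacle, and where the design is forced, is twofold. First, the \emph{promise} in \ureach is exactly what demands \emph{induced} matchings (otherwise reachability leaks between matchings via $E_A$ and unique-reachability fails), and this is what ties the quantitative loss to the $e^{\Theta(\sqrt{\log n})}$ gap of \Cref{prop:rs}. Second, one must thread the direct sum so that the per-copy information costs telescope against a single message: this is why $\prot_j$ is forced to expose the prefix $A_1,\dots,A_{j-1}$ (so the chain rule applies), which in turn makes Bob's effective \SI-input $(B_j,A_1,\dots,A_{j-1})$ rather than $B_j$ and thus requires the convexity step relating $\eps_j$ to $\delta_j$, and it is also why $E_B$ is made to depend only on the \emph{active} copy $(\jstar,B_{\jstar})$ — that is what lets the \ureach guarantee translate cleanly into $\sum_j\delta_j\ge\eps t$. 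Everything else is routine manipulation with \Cref{thm:eps-si} used as a black box.
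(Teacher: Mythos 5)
Your proposal is correct and follows essentially the same route as the paper: plant $t$ independent copies of $\distSI$ into the induced matchings of an RS digraph from \Cref{prop:rs}, use the induced property to get the uniqueness promise and part (i), and run a one-way direct-sum reduction to \Cref{thm:eps-si} exploiting the independence of $\istar$ from Alice's input. The only (immaterial) difference is in the direct-sum bookkeeping: the paper publicly randomizes the active coordinate and bounds the information cost of a single $\SI$-protocol by $\CC{\prot}{}/t$ with solving parameter exactly $\eps$, whereas you build $t$ per-coordinate protocols with public prefixes and recombine the parameters $\eps_j$ via convexity and Cauchy--Schwarz; both yield $\Omega(\eps^2\cdot r\cdot t)=\Omega(\eps^2\cdot n\cdot b)$.
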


Proof of~\Cref{thm:ureach} is by a reduction from~\Cref{thm:eps-si} using RS graphs (see \Cref{sec:rs}). 

\subsection{Distribution $\distUR$ in~\Cref{thm:ureach}}

To continue, we need to set up some notation. Let $\GRS=(L,R,E)$ be an $(r,t)$-RS digraph with induced matchings $\MRS_1,\ldots,\MRS_t$ as defined in~\Cref{sec:rs}. For each induced matching $\MRS_i$, we assume an arbitrary ordering of edges $e_{i,1},\ldots,e_{i,r}$ in
the matching and for each $j \in [r]$ denote $e_{ij} := (u_{ij},v_{ij})$ for $u_{ij} \in L$ and $v_{ij} \in R$; moreover, we let $L(\MRS_i) := \set{u_{i1},\ldots,u_{ir}}$ and $R(\MRS_i) := \set{v_{i1},\ldots,v_{ir}}$. 
Based on these, we have the following definition: 
\begin{itemize}
	\item For any matching $\MRS_i$ and any set $S \subseteq [r]$, we define $\MRS_i | S$ as the matching in $\GRS$ consisting of the edges $e_{ij} \in \MRS_i$ for all $j \in S$. 
\end{itemize}
\noindent
We are now ready to define our distribution. See \Cref{fig:ureach} for an illustration.

\begin{figure*}[h!]
    \centering
    \begin{subfigure}[t]{0.5\textwidth}
        \centering

\begin{tikzpicture}
	\node[ellipse, black, fill=black!10, draw, line width=1pt, minimum width=40pt, minimum height=100pt] (V1) {};
		\node[ellipse, black, fill=black!0, draw, line width=1pt, minimum width=30pt, minimum height=40pt] (A1) [above=-50pt of V1]{};
			\node [circle, black, fill=Blue!50, line width=1pt, draw, minimum width=1pt, minimum height=1pt] (A11) [above=-15pt of A1]{};
			\node [circle, black, fill=Blue!50, line width=1pt, draw, minimum width=1pt, minimum height=1pt] (A12) [above=-27pt of A1]{};
			\node [circle, black, fill=Blue!50, line width=1pt, draw, minimum width=1pt, minimum height=1pt] (A13) [above=-39pt of A1]{};
		\node[ellipse, black, fill=black!0, draw, line width=1pt, minimum width=30pt, minimum height=40pt] (B1) [below=-50pt of V1]{};
			\node [circle, black, fill=Blue!50, line width=1pt, draw, minimum width=1pt, minimum height=1pt] (B11) [above=-15pt of B1]{};
			\node [circle, black, fill=Blue!50, line width=1pt, draw, minimum width=1pt, minimum height=1pt] (B12) [above=-27pt of B1]{};
			\node [circle, black, fill=Blue!50, line width=1pt, draw, minimum width=1pt, minimum height=1pt] (B13) [above=-39pt of B1]{};
	
	\node[ellipse, black, fill=black!10, draw, line width=1pt, minimum width=40pt, minimum height=100pt] (V2) [right=1cm of V1]{};
		\node[ellipse, black, fill=black!0, draw, line width=1pt, minimum width=30pt, minimum height=40pt] (A2) [above=-50pt of V2]{};
			\node [circle, black, fill=Blue!50, line width=1pt, draw, minimum width=1pt, minimum height=1pt] (A21) [above=-15pt of A2]{};
			\node [circle, black, fill=Blue!50, line width=1pt, draw, minimum width=1pt, minimum height=1pt] (A22) [above=-27pt of A2]{};
			\node [circle, black, fill=Blue!50, line width=1pt, draw, minimum width=1pt, minimum height=1pt] (A23) [above=-39pt of A2]{};
	\node[ellipse, black, fill=black!0, draw, line width=1pt, minimum width=30pt, minimum height=40pt] (B2) [below=-50pt of V2]{};
			\node [circle, black, fill=Blue!50, line width=1pt, draw, minimum width=1pt, minimum height=1pt] (B21) [above=-15pt of B2]{};
			\node [circle, black, fill=Blue!50, line width=1pt, draw, minimum width=1pt, minimum height=1pt] (B22) [above=-27pt of B2]{};
			\node [circle, black, fill=Blue!50, line width=1pt, draw, minimum width=1pt, minimum height=1pt] (B23) [above=-39pt of B2]{};

	\draw[line width=1pt, black]
		(A11.center) -- (A21.center)
		(A12.center) -- (A22.center)
		(A13.center) -- (A23.center)
		
		(B11.center) -- (A21.center)
		(B12.center) -- (A22.center)
		(B13.center) -- (A23.center)
		
		(A11.center) -- (B21.center)
		(A12.center) -- (B22.center)
		(A13.center) -- (B23.center)
		(B11.center) -- (B21.center)
		(B12.center) -- (B22.center)
		(B13.center) -- (B23.center);

	\node (n1) [below=0.25cm of V1]{$L$};
	\node (n2) [below=0.25cm of V2]{$R$};

\end{tikzpicture}
        \caption{A fixed $(3,4)$-RS digraph in the distribution.}
    \end{subfigure}%
    ~ 
    \begin{subfigure}[t]{0.5\textwidth}
        \centering

\begin{tikzpicture}
	\node[ellipse, black, fill=black!10, draw, line width=1pt, minimum width=40pt, minimum height=100pt] (V1) {};
		\node[ellipse, black, fill=black!0, draw, line width=1pt, minimum width=30pt, minimum height=40pt] (A1) [above=-50pt of V1]{};
			\node [circle, black, fill=Yellow, line width=1pt, draw, minimum width=1pt, minimum height=1pt] (A11) [above=-15pt of A1]{};
			\node [circle, black, fill=Yellow, line width=1pt, draw, minimum width=1pt, minimum height=1pt] (A12) [above=-27pt of A1]{};
			\node [circle, black, fill=Yellow, line width=1pt, draw, minimum width=1pt, minimum height=1pt] (A13) [above=-39pt of A1]{};
		\node[ellipse, black, fill=black!0, draw, line width=1pt, minimum width=30pt, minimum height=40pt] (B1) [below=-50pt of V1]{};
			\node [circle, black, fill=ForestGreen!75, line width=1pt, draw, minimum width=1pt, minimum height=1pt] (B11) [above=-15pt of B1]{};
			\node [circle, black, fill=ForestGreen!75, line width=1pt, draw, minimum width=1pt, minimum height=1pt] (B12) [above=-27pt of B1]{};
			\node [circle, black, fill=ForestGreen!75, line width=1pt, draw, minimum width=1pt, minimum height=1pt] (B13) [above=-39pt of B1]{};
	
	\node[ellipse, black, fill=black!10, draw, line width=1pt, minimum width=40pt, minimum height=100pt] (V2) [right=1cm of V1]{};
		\node[ellipse, black, fill=black!0, draw, line width=1pt, minimum width=30pt, minimum height=40pt] (A2) [above=-50pt of V2]{};
			\node [circle, black, fill=Yellow, line width=1pt, draw, minimum width=1pt, minimum height=1pt] (A21) [above=-15pt of A2]{};
			\node [circle, black, fill=Yellow, line width=1pt, draw, minimum width=1pt, minimum height=1pt] (A22) [above=-27pt of A2]{};
			\node [circle, black, fill=Yellow, line width=1pt, draw, minimum width=1pt, minimum height=1pt] (A23) [above=-39pt of A2]{};
			
	\node[ellipse, black, fill=black!0, draw, line width=1pt, minimum width=30pt, minimum height=40pt] (B2) [below=-50pt of V2]{};
			\node [circle, black, fill=ForestGreen!75, line width=1pt, draw, minimum width=1pt, minimum height=1pt] (B21) [above=-15pt of B2]{};
			\node [circle, black, fill=ForestGreen!75, line width=1pt, draw, minimum width=1pt, minimum height=1pt] (B22) [above=-27pt of B2]{};
			\node [circle, black, fill=ForestGreen!75, line width=1pt, draw, minimum width=1pt, minimum height=1pt] (B23) [above=-39pt of B2]{};

	\draw[line width=1pt, black]
		(A12.center) -- (A22.center)
		(A13.center) -- (A23.center)
		
		(B11.center) -- (A21.center)
		(B13.center) -- (A23.center)
		
		(A11.center) -- (B21.center)
		(A13.center) -- (B23.center)

		(B11.center) -- (B21.center)
		(B12.center) -- (B22.center);

	\draw[line width=0.5pt, gray, dashed]
		(A11.center) -- (A21.center)
		(B12.center) -- (A22.center)
		(A12.center) -- (B22.center)
		(B13.center) -- (B23.center);
	
		\node[ellipse, black, fill=black!0, draw, line width=1pt, minimum width=30pt, minimum height=40pt] (C1) [right=1cm of V2]{};
			\node [circle, black, fill=Blue!50, line width=1pt, draw, minimum width=1pt, minimum height=1pt] (C11) [above=-15pt of C1]{};
			\node [circle, black, fill=blue!50, line width=1pt, draw, minimum width=1pt, minimum height=1pt] (C12) [above=-27pt of C1]{};
			\node [circle, black, fill=Blue!50, line width=1pt, draw, minimum width=1pt, minimum height=1pt] (C13) [above=-39pt of C1]{};

	\node [circle, black, fill=blue!50, line width=1pt, draw, minimum width=1pt, minimum height=1pt] (s) [left=1cm of V1]{};
				
	\draw[line width=1pt, black]
		(C11.center) -- (A21.center)
		(C12.center) -- (A22.center);

	\draw[line width=1pt, black]
		(s) -- (A11.center)
		(s) -- (A12.center);

	\draw[line width=1.5pt, blue]
		(s) -- (A12.center)
		(A12.center) -- (A22.center)
		(A22.center) -- (C12.center);

	\node (n1) [below=0.25cm of V1]{$V_1$};
	\node (n2) [below=0.25cm of V2]{$V_2$};
	\node (n3) [below=0.25cm of C1]{$V_3$};
	\node (ns) [below=0.25cm of s]{$s$};

\end{tikzpicture}
        \caption{The  graph $G$ of $\distUR$; dashed edges no longer belong to the graph, and yellow vertices are incident on $\MRS_{\istar}$.}
    \end{subfigure}
   \caption{An illustration of the  input distribution $\distUR$. Here, directions of all edges are from left to right and hence omitted. The marked vertex (blue) in $V_3$ denotes the unique vertex $\sstar$ in this example along with the path connecting $s$
   to $\sstar$.}
\label{fig:ureach}
\end{figure*}
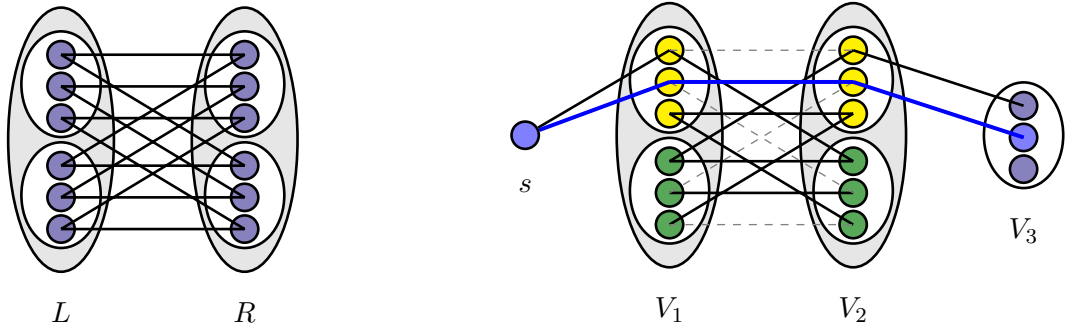

\begin{tbox}
	\textbf{Distribution $\distUR$.} An input distribution on graphs $G=(\set{s} \sqcup V_1 \sqcup V_2 \sqcup V_3, E_A \sqcup E_B)$. 
	
	\begin{enumerate}[label=$(\arabic*)$]
		\item Let $\GRS = (L,R,\ERS)$ be a \emph{fixed} $(r,t)$-RS digraph on $2N$ vertices from~\Cref{prop:rs} with parameters $r = \frac{N}{2^{\Theta(\sqrt{\log{N}})}}$, and $t = \frac{N}{3}$. We note that this graph is known to both players. 
		\item Let $V_1 = L = \set{u_1,\ldots,u_N}$, $V_2 = R = \set{v_1,\ldots,v_N}$, and $V_3$ be $r$ new vertices $\set{w_1,\ldots,w_r}$. 
		\item Sample $t$ \emph{independent} instances $(S_1,T_1),\ldots,(S_t,T_t)$ of $\SI$ on the universe $[r]$ from the distribution $\distSI$ in~\Cref{thm:eps-si}. 
		\item The input $E_A$ to Alice is $E_A:= (\MRS_1|S_1) \cup \ldots \cup (\MRS_t|S_t)$. 
		\item Sample $\istar \in [t]$ \emph{uniformly at random}. 
		\item The input $E_B$ to Bob is the set of edges $(s,u_{\istar j})$ for $j \in T_{\istar}$ and  $(v_{\istar j},w_j)$ for $j \in T_{\istar}$. 
	\end{enumerate}
\end{tbox}

\begin{observation}\label{obs:distUR}
	Several observations are in order:
	\begin{enumerate}[label=$(\roman*)$]
		\item For any $G \sim \distUR$, there is a unique vertex $\sstar$ reachable from $s$ in $V_3$. Moreover, $\sstar = w_{\estar}$ where $\estar \in [r]$ is the unique element in the 
		intersection of $S_{\istar}$ and $T_{\istar}$. \\
		(proof: $\sstar$ is reachable from $s$ through the path $s \rightarrow u_{\istar,\estar} \rightarrow v_{\istar,\estar} \rightarrow w_{\estar} (= \sstar)$ and is the only such vertex 
		by the uniqueness of $\estar$ in $S_{\istar} \cap T_{\istar}$.)
		\item For any $E_B \sim \distUR$, $\distribution{\rsstar \mid E_B}$ is uniform over vertices $w_j \in V_3$ for $j \in T_{\istar}$. \\
		(proof: considering the one-to-one mapping between $\sstar$ and $w_{\estar}$, and since $\distribution{\restar \mid T_{\istar}}$ is uniform over $T_{\istar}$ by~\Cref{thm:eps-si}, $\distribution{\rsstar \mid E_B}$ is also
		 uniform over $w_j$ for $j \in T_{\istar}$.)
		\item In $\distUR$, the index $\istar \in [t]$ is \emph{independent} of the sets $(S_1,T_1),\ldots,(S_t,T_t)$. Moreover, the pairs $(S_1,T_1),\ldots,(S_t,T_t)$ are \emph{mutually independent}. 
		\item The input $E_A$ to Alice in $\distUR$ is uniquely determined by $S_1,\ldots,S_t$, and the input $E_B$ to Bob is determined by $\istar$ and $T_{\istar}$. 
	\end{enumerate}
\end{observation}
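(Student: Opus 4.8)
The plan is to treat parts $(iii)$ and $(iv)$ as immediate consequences of the sampling procedure defining $\distUR$, and to regard $(i)$ and $(ii)$ as the substantive claims, deriving them from the induced-matching property of $\GRS$ and from~\Cref{thm:eps-si}$(i)$, respectively. For $(iv)$: Step $(4)$ sets $E_A=\bigcup_{i\in[t]}(\MRS_i|S_i)$, and since $\GRS$ together with the matchings $\MRS_1,\dots,\MRS_t$ (hence the vertices $u_{ij},v_{ij}$) is fixed and known, this is a deterministic function of $(S_1,\dots,S_t)$; symmetrically Step $(6)$ writes $E_B=\set{(s,u_{\istar j}):j\in T_{\istar}}\cup\set{(v_{\istar j},w_j):j\in T_{\istar}}$ explicitly, a deterministic function of $\istar$ and $T_{\istar}$. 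For $(iii)$: Step $(3)$ draws the $t$ pairs $(S_i,T_i)$ mutually independently (each from $\distSI$), and Step $(5)$ draws $\istar$ uniformly from $[t]$ with no coupling to those draws, so mutual independence of the pairs and independence of $\istar$ from all of them follow directly from the construction.

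For $(i)$ I would first exploit that $G$ is \emph{layered}: every edge goes from $s$ into $V_1$, from $V_1$ into $V_2$, or from $V_2$ into $V_3$, so any directed path leaving $s$ that reaches $V_3$ has exactly the shape $s\to u\to v\to w$ with $u\in V_1$, $v\in V_2$, $w\in V_3$. By Step $(6)$, the out-neighbors of $s$ are precisely $\set{u_{\istar a}:a\in T_{\istar}}$, and the only possible edge into a vertex $w_c\in V_3$ is $(v_{\istar c},w_c)$, present iff $c\in T_{\istar}$; Alice contributes no edges incident to $s$ or to $V_3$. Hence such a path must be $s\to u_{\istar a}\to v_{\istar c}\to w_c$ with $a,c\in T_{\istar}$, where the middle edge $(u_{\istar a},v_{\istar c})$ lies in $E_A$, i.e. equals some $e_{i\ell}=(u_{i\ell},v_{i\ell})$ with $\ell\in S_i$. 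If $i=\istar$ then $\ell=a$ and $\ell=c$, so $a=c=\ell\in S_{\istar}\cap T_{\istar}$, which is the singleton $\set{\estar}$ by construction of $\distSI$, forcing the path $s\to u_{\istar\estar}\to v_{\istar\estar}\to w_{\estar}$. If $i\ne\istar$ then $(u_{\istar a},v_{\istar c})$ is an edge of $\GRS$ whose two endpoints both lie on the induced matching $\MRS_{\istar}$; since $\MRS_{\istar}$ is \emph{induced}, the only edge of $\GRS$ between these endpoints is $e_{\istar a}$ when $a=c$ and there is none when $a\ne c$, and $e_{\istar a}$ belongs to $\MRS_{\istar}$, not to $\MRS_i$, because the $\MRS_i$'s partition $E(\GRS)$ — a contradiction in either case. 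Thus $w_{\estar}$ is the only candidate reachable vertex of $V_3$; and it \emph{is} reachable, since $\estar\in S_{\istar}$ gives $e_{\istar\estar}\in E_A$ and $\estar\in T_{\istar}$ gives $(s,u_{\istar\estar}),(v_{\istar\estar},w_{\estar})\in E_B$. So $\sstar=w_{\estar}$, which is $(i)$.

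For $(ii)$ I would observe that $(i)$ furnishes the bijection $\sstar=w_{\estar}\leftrightarrow\estar$, so it suffices to show $\distribution{\restar\mid E_B}$ is uniform over $T_{\istar}$. By Step $(6)$, $E_B$ is a function of $(\istar,T_{\istar})$, and conversely $T_{\istar}$ is recoverable from $E_B$ as the set of labels $j$ for which $w_j$ has an incoming edge in $E_B$. Fixing $E_B$ therefore fixes $T_{\istar}$ and restricts $\istar$ to some subset of $[t]$; using $(iii)$ (independence of $\istar$ from all the pairs $(S_i,T_i)$), conditioned on $\istar=i$ and on the value of $T_i=T_{\istar}$, the element $\estar$ is the target of the $\SI$ instance $(S_i,T_i)$ conditioned on Bob's set, which by~\Cref{thm:eps-si}$(i)$ is uniform over that set. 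Hence $\distribution{\restar\mid E_B}$ is a mixture of uniform distributions all supported on the single set $T_{\istar}$, so it is itself uniform over $T_{\istar}$, giving that $\distribution{\rsstar\mid E_B}$ is uniform over $\set{w_j:j\in T_{\istar}}$.

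The step I expect to be the main obstacle is the case analysis in $(i)$ that rules out \emph{cross-matching} paths — those threading through an edge of $\MRS_i$ with $i\ne\istar$ — since a single such path would create a spurious second reachable vertex in $V_3$ and break both $(i)$ and $(ii)$. Here one genuinely needs the two defining features of an RS graph: that each $\MRS_i$ is an \emph{induced} matching (so two vertices saturated by $\MRS_{\istar}$ have no other edge between them), and that the matchings \emph{partition} $E(\GRS)$ (so no edge can simultaneously belong to $\MRS_{\istar}$ and to $\MRS_i$). The rest is bookkeeping about which input each of $E_A,E_B$ is a function of, plus the routine observation that a mixture of identical uniform distributions is uniform.
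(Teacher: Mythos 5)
Your proposal is correct and takes essentially the same route as the paper, which establishes all four parts directly from the construction: part $(i)$ via the path $s \rightarrow u_{\istar\estar} \rightarrow v_{\istar\estar} \rightarrow w_{\estar}$ and the uniqueness of $\estar \in S_{\istar} \cap T_{\istar}$, and part $(ii)$ via the bijection $\sstar \leftrightarrow w_{\estar}$ together with \Cref{thm:eps-si}$(i)$. The only difference is that you spell out the case analysis ruling out cross-matching paths (using that the $\MRS_i$ are \emph{induced} matchings partitioning the edges of $\GRS$), a detail the paper leaves implicit in its parenthetical justification, and your careful handling of the mixture over possible values of $\istar$ consistent with $E_B$ in part $(ii)$ is likewise a correct elaboration of the same argument.
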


\subsection{Proof of \Cref{thm:ureach}}

Let $\protUR$ be any one-way protocol that internal $\eps$-solves $\ureach$ on the distribution $\distUR$. We will prove that $\CC{\protUR}{} = \Omega(\eps^2 \cdot r \cdot t)$ which proves~\Cref{thm:ureach}. 
The argument relies on the following two claims: $(i)$ internal $\eps$-solving of $\ureach$ on $\distUR$ is equivalent to internal $\eps$-solving of $\SI$ on $\distSI$ for the pair $(S_{\istar},T_{\istar})$; and
$(ii)$ the information revealed by $\protUR$ about the instance $(S_{\istar},T_{\istar})$ is \emph{at least $t$ times smaller} than $\CC{\protUR}{}$. Having both these steps, we can then invoke~\Cref{thm:eps-si} to conclude the proof. 

We shall emphasize that this is \emph{not} an immediate reduction from~\Cref{thm:eps-si} as we are aiming to gain an \emph{additional} factor of $t$ in the information cost lower bound for $\protUR$ compared to the lower bound for $\SI$. This  part crucially relies on the fact that $\protUR$ is a one-way protocol and that index $\istar \in [t]$ in the distribution is independent of Alice's input (and thus her message). 

We now present the formal proof. Consider the following protocol $\protSI$ for $\SI$ on the distribution $\distSI$ using $\protUR$ as a subroutine. 

\paragraph{Protocol $\protSI$.} Given an instance $(A,B) \sim \distSI$ on universe $[r]$, Alice and Bob do as follows:  
\begin{enumerate}[label=$(\roman*)$]
	\item Alice and Bob sample $\istar \in [t]$ using \underline{public randomness}. 
	\item Alice sets $S_{\istar} = A$ and samples the remaining sets $S_i$ for $i \neq \istar \in [t]$ independently from $\distSI$ using \underline{private randomness} (this is doable by part $(iii)$ of~\Cref{obs:distUR}). This allows Alice to generate the set $E_A$ of edges for $\protUR$ as in  $\distUR$ (by part $(iv)$ of~\Cref{obs:distUR}). 
	\item Bob sets $T_{\istar} = B$ and creates the set of edges $E_B$ for $\protUR$ as in  $\distUR$ (again doable by part $(iv)$ of~\Cref{obs:distUR} as Bob also knows $\istar$). 
	\item The players then run the protocol $\protUR$ on the input $(E_A,E_B)$ with Alice sending the message in $\protUR$ to Bob. 
\end{enumerate}

We first prove the following  claim. 
\begin{claim}\label{clm:protsi-eps-solves}
	$\protSI$ internal $\eps$-solves $\SI$ on $\distSI$. 
\end{claim}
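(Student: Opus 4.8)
The plan is to argue that the reduction defining $\protSI$ does nothing but relabel a sample of $\distUR$, and then to transport the internal-$\eps$-solving guarantee of $\protUR$ across this relabeling. First, observe that when $\protSI$ is run on a fresh input $(A,B)\sim\distSI$, the data $(\istar,S_1,\dots,S_t,T_{\istar})$ it builds has exactly the distribution prescribed inside $\distUR$: by part $(iii)$ of~\Cref{obs:distUR}, $\istar$ is uniform in $[t]$ and independent of the pairs $(S_i,T_i)$, which are i.i.d.\ from $\distSI$, so the reduction's choices (set $(S_{\istar},T_{\istar})=(A,B)$, sample the other $S_i$ independently from $\distSI$, sample $\istar$ publicly) reproduce it. By part $(iv)$ of~\Cref{obs:distUR} the edge sets $(E_A,E_B)$ handed to $\protUR$ are therefore distributed exactly as in $\distUR$, and we may work in one probability space carrying $G\sim\distUR$, the unique target $\sstar=w_{\estar}$ with $\set{\estar}=S_{\istar}\cap T_{\istar}$ (part $(i)$ of~\Cref{obs:distUR}), the coins of $\protUR$, and its single message. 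Let $\Prot$ denote the transcript together with the public randomness of $\protUR$; on this space the hypothesis on $\protUR$ reads $\EX_{\rProt,\rE_B}{\tvd{\distribution{\rsstar\mid\Prot,E_B}}{\distribution{\rsstar\mid E_B}}}\ge\eps$. Two bookkeeping facts: the transcript-plus-public-randomness of $\protSI$ is $\ProtSI=(\istar,\Prot)$ (the only extra public coin is $\istar$, the only message is that of $\protUR$), Bob's input is $B=T_{\istar}$, and since $E_B$ is a deterministic function of $(\istar,T_{\istar})$ by part $(iv)$ of~\Cref{obs:distUR}, the pair $(\ProtSI,B)$ determines $(\Prot,E_B)$ — so conditioning on $(\ProtSI,B)$ is a refinement of conditioning on $(\Prot,E_B)$.

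Next I would line up the objects compared by the two definitions. Since $e\mapsto w_e$ is a bijection from $[r]$ onto $V_3$ and $\sstar=w_{\estar}$, under any conditioning the law of $\sstar$ is the push-forward of the law of $\estar$ along this bijection, so the corresponding conditional total variation distances coincide; in particular $\tvd{\distribution{\restar\mid\ProtSI,B}}{\distribution{\restar\mid B}}=\tvd{\distribution{\rsstar\mid\ProtSI,B}}{\distribution{\rsstar\mid B}}$. Moreover, by part $(i)$ of~\Cref{thm:eps-si} the reference $\distribution{\restar\mid B}$ is uniform on $B$, so $\distribution{\rsstar\mid B}$ is the uniform distribution $Q$ on $\set{w_j : j\in T_{\istar}}$; and by part $(ii)$ of~\Cref{obs:distUR} the $\ureach$ reference $\distribution{\rsstar\mid E_B}$ equals this same $Q$ (note that $E_B$ determines the set $T_{\istar}$). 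Crucially, $Q$ depends only on $T_{\istar}$, hence is a function of $E_B$ — i.e.\ it is measurable with respect to the coarser conditioning $(\Prot,E_B)$.

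The one substantive step — and the one I expect to be the main obstacle — is a monotonicity estimate: for a reference $Q$ measurable with respect to the coarser data, refining the conditioning can only increase the expected distance to $Q$. Concretely, by the law of total probability $\distribution{\rsstar\mid\Prot,E_B}$ is the convex combination of the finer laws $\distribution{\rsstar\mid\istar,\Prot,B}$ over the values of $\istar$ consistent with the given $(\Prot,E_B)$ (with $B=T_{\istar}$ determined by $E_B$), weighted by their posterior probabilities; since $Q$ is the same across this combination, convexity of total variation distance (the triangle inequality) gives, after averaging, $\EX_{\rProt,\rE_B}{\tvd{\distribution{\rsstar\mid\Prot,E_B}}{Q}}\le\EX_{\rProtSI,\rB}{\tvd{\distribution{\rsstar\mid\ProtSI,B}}{Q}}$ — equivalently, $\distribution{\rsstar\mid\Prot,E_B}=\expect{\distribution{\rsstar\mid\ProtSI,B}\mid\Prot,E_B}$ and $Q=\expect{Q\mid\Prot,E_B}$, then Jensen. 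The delicate point is precisely that $Q$ must be this common, coarse-measurable reference: a priori conditioning on $(\ProtSI,B)$ could reveal $\istar$, which $(\Prot,E_B)$ need not, so without a shared $Q$ the two sides need not compare. Chaining the hypothesis on $\protUR$, this estimate, and the bijection identity yields
\[
\eps \;\le\; \EX_{\rProt,\rE_B}{\tvd{\distribution{\rsstar\mid\Prot,E_B}}{Q}} \;\le\; \EX_{\rProtSI,\rB}{\tvd{\distribution{\rsstar\mid\ProtSI,B}}{Q}} \;=\; \EX_{\rProtSI,\rB}{\tvd{\distribution{\restar\mid\ProtSI,B}}{\distribution{\restar\mid B}}},
\]
which is the Bob-side condition of~\Cref{def:eps-si}; hence $\protSI$ internal $\eps$-solves $\SI$ over $\distSI$.
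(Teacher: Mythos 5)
Your proof is correct and follows essentially the same route as the paper: match the distribution of $(E_A,E_B)$ generated by the reduction with $\distUR$, identify $\ProtSI=(\ristar,\ProtUR)$, and transport the guarantee through the bijection $e\mapsto w_e$ between $[r]$ and $V_3$. The only difference is that where the paper simply asserts that $(\istar,B)$ and $E_B$ determine each other (so the two conditionings coincide exactly), you replace this with a one-sided Jensen/convexity bound against the common, $E_B$-measurable reference $Q$ — a slightly more careful treatment that still yields the inequality in the needed direction.
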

\begin{proof}
	The distribution of instances $E_A,E_B$ created in the protocol $\protSI$ (using the randomness of the input) is the same as $\distUR$. We can now examine whether $\protSI$ $\eps$-solves 
	$\SI$ as follows: 
	\begin{align*}
		\Ex_{\rProtSI,\rB}{\tvd{\distribution{\restar \mid \ProtSI,B}}{\distribution{\restar \mid B}}} &= \Ex_{\rProtUR,\ristar,\rB}{\tvd{\distribution{\restar \mid \ProtUR,\istar,B}}{\distribution{\restar \mid B}}} \tag{since $\rProtSI = (\rProtUR,\ristar)$ as the only \emph{extra} public randomness of $\protSI$ over $\protUR$ is the choice of $\istar$} \\ 
		&= \Ex_{\rProtUR,\rE_B}{\tvd{\distribution{\rsstar \mid \ProtUR,E_B}}{\distribution{\rsstar \mid E_B}}}, 
	\end{align*}
	since $(\istar,B)$ uniquely identifies $E_B$ and vice versa, and since by~\Cref{obs:distUR}, modulo a renaming, we have $\distribution{\restar \mid B} = \distribution{\sstar \mid E_B}$. The last term above is now
	at least $\eps$ since $\protUR$ internal $\eps$-solves $\SI$, which concludes the proof. 
\end{proof}

We now bound the internal information cost of $\protSI$ which allows us to apply~\Cref{thm:eps-si} and conclude the proof. The proof of this lemma is by a direct-sum style argument. 
We note that these arguments (based on information theory tools) are by now mostly standard in the literature. 
\begin{lemma}\label{lem:ds-ureach}
	$\IC{\protSI}{\distSI} \leq \frac{1}{t} \cdot \CC{\protUR}{}$. 
\end{lemma}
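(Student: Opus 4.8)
The plan is the standard direct-sum reduction; the only real point requiring care is that, under $\distSI$, the sets $A$ and $B$ are \emph{correlated} (they share $\estar$), so within a coordinate $S_i$ and $T_i$ are not independent, and one must track exactly which variables may be conditioned away.

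I would first fix notation. Since $\protUR$ is one-way, its transcript together with the public coins splits as $\rProtUR=(\rR,\rM)$, where $\rR$ is the public randomness and $\rM$ is Alice's single message; hence $\rProtSI=(\ristar,\rR,\rM)$, the only public coin of $\protSI$ beyond those of $\protUR$ being the choice $\ristar\in[t]$, which is uniform and independent of all other variables. The crucial structural fact is part $(iv)$ of~\Cref{obs:distUR}: $\rM$ is a deterministic function of $(S_1,\ldots,S_t,\rR)$ alone. I would then observe that the ``backward'' information term vanishes: conditioned on $\rA=S_{\ristar}$, the coins $(\ristar,\rR)$ are independent of everything, and the other coordinates $(S_i)_{i\ne\ristar}$ are re-sampled by Alice using private coins independently of $\rB=T_{\ristar}$, so $\rProtSI\perp\rB\mid\rA$ and $\mi{\rProtSI}{\rB\mid\rA}=0$. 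Thus $\IC{\protSI}{\distSI}=\mi{\rProtSI}{\rA\mid\rB}$; applying the chain rule to $\rProtSI=(\ristar,\rR,\rM)$, dropping the two zero terms contributed by $\ristar$ and $\rR$ (independent of the inputs), and averaging over the uniform, independent $\ristar$ gives
\begin{align*}
	\IC{\protSI}{\distSI}=\mi{\rM}{S_{\ristar}\mid T_{\ristar},\ristar,\rR}=\frac1t\sum_{i=1}^t\mi{\rM}{S_i\mid T_i,\rR}.
\end{align*}

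Next I would remove the conditioning on $T_i$: for each fixed $i$, given $(S_i,\rR)$ Alice forms $\rM$ out of $S_i$, $\rR$, the independently drawn $(S_j)_{j\ne i}$ and her private coins, all independent of $T_i$, so $\rM\perp T_i\mid S_i,\rR$, and therefore $\mi{\rM}{S_i\mid T_i,\rR}\le\mi{\rM}{S_i\mid\rR}$ (expand $\mi{\rM}{S_i,T_i\mid\rR}$ in both orders and discard the term $\mi{\rM}{T_i\mid S_i,\rR}=0$). Finally, $S_1,\ldots,S_t$ are mutually independent even given $\rR$, so by superadditivity of mutual information over independent variables (see~\Cref{app:info}),
\begin{align*}
	\sum_{i=1}^t\mi{\rM}{S_i\mid\rR}\le\mi{\rM}{S_1,\ldots,S_t\mid\rR}\le\en{\rM\mid\rR}\le\CC{\protUR}{},
\end{align*}
the last inequality because a one-way transcript is just $\rM$, of length at most $\CC{\protUR}{}$. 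Chaining the displays yields $\IC{\protSI}{\distSI}\le\frac1t\CC{\protUR}{}$.

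The one step I expect to need genuine care is establishing $\rProtSI\perp\rB\mid\rA$ (equivalently $\rM\perp T_{\ristar}\mid S_{\ristar},\ristar,\rR$): this is exactly where one-wayness of $\protUR$ and the independence of $\ristar$ from Alice's input are used, and it is what lets the direct sum spread the information evenly across all $t$ coordinates, gaining the full factor of $t$ over~\Cref{thm:eps-si} rather than the weaker savings a two-way reduction would afford. Everything else is routine information-theoretic manipulation.
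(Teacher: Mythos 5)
Your proof is correct and follows essentially the same direct-sum argument as the paper: kill the backward information term via one-wayness, decompose over the uniformly random independent coordinate $\ristar$, use independence of the $S_i$'s to sum the per-coordinate information into $\mi{\rM}{S_1,\ldots,S_t\mid\rR}$, and bound that by the message length. The only (immaterial) difference is ordering: the paper discards the conditioning on $\rB$ globally at the outset via $\mi{\rProtSI}{\rA\mid\rB}\le\mi{\rProtSI}{\rA}$, whereas you carry $T_i$ into the per-coordinate terms and remove it there using $\rM\perp T_i\mid S_i,\rR$.
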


\begin{proof}
	By definition of the internal information cost, 
	\begin{align*}
		\IC{\protSI}{\distSI} = \mi{\rProtSI}{\rA \mid \rB} + \mi{\rProtSI}{\rB \mid \rA} = \mi{\rProtSI}{\rA \mid \rB} \leq \mi{\rProtSI}{\rA};
	\end{align*}
	this is because $\protSI$ is a one-way protocol and thus  $\rProtSI \perp \rB \mid \rA$ and so we can apply~\itfacts{info-zero} for the
	 last equality and~\Cref{prop:info-decrease} 
	for the final inequality. We  now bound the RHS above:
	\begin{align*}
		\mi{\rProtSI}{\rA} &= \mi{\rProtUR,\ristar}{\rA} \tag{as the only additional public randomness of $\protSI$ \emph{over} $\protUR$ is the choice of $\istar$} \\
		&= \mi{\ristar}{\rA} + \mi{\rProtUR}{\rA \mid \ristar} \tag{by the chain rule of mutual information (\itfacts{chain-rule})} \\
		&= \mi{\rProtUR}{\rA \mid \ristar} \tag{as $\ristar \perp \rA$ and so $\mi{\ristar}{\rA} = 0$ by~\itfacts{info-zero}} \\
		&=  \sum_{\istar=1}^{t} \frac{1}{t} \cdot \mi{\rProtUR}{\rA \mid \istar} \tag{as $\istar \in [t]$ is chosen uniformly at random} \\
		&= \frac{1}{t} \cdot \sum_{\istar=1}^{t} \mi{\rProtUR}{\rS_{\istar} \mid \istar} \tag{as $\rA = \rS_{\istar}$ conditioned on $\ristar=\istar$ by definition of the protocol} \\
		&= \frac{1}{t} \cdot \sum_{\istar=1}^{t} \mi{\rProtUR}{\rS_{\istar}} \tag{as the input of Alice and  her one-way message are independent of $\ristar=\istar$ by~\Cref{obs:distUR}} \\
		&\leq \frac{1}{t} \cdot \sum_{\istar=1}^{t}  \mi{\rProtUR}{\rS_{\istar} \mid \rS^{<\istar}} \tag{by~\Cref{prop:info-increase} since $\rS_{\istar} \perp \rS^{<\istar}$ by~\Cref{obs:distUR}} \\
		&= \frac{1}{t} \cdot \mi{\rProtUR}{\rS_1,\ldots,\rS_t}. \tag{by the chain rule of mutual information in~\itfacts{chain-rule}}
	\end{align*}
	We can now bound the $\mi{\rProtUR}{\rS_1,\ldots,\rS_t}$ term above by the communication cost of the protocol $\protUR$ as follows. Define $\rR$ as the public randomness
	of the protocol $\protUR$ and $\rM$ as its message sent from Alice to Bob. We have, 
	\begin{align*}
		\mi{\rProtUR}{\rS_1,\ldots,\rS_t} &= \mi{\rM,\rR}{\rS_1,\ldots,\rS_t} = \mi{\rR}{\rS_1,\ldots,\rS_t} + \mi{\rM}{\rS_1,\ldots,\rS_t \mid \rR} \tag{by the chain rule of mutual information (\itfacts{chain-rule})} \\
		&= \mi{\rM}{\rS_1,\ldots,\rS_t \mid \rR} \tag{as $\rR$ is independent of the input and thus the first term is zero by~\itfacts{info-zero}} \\
		&\leq \log{\card{\supp{\rM}}} = \CC{\protUR}{} \tag{by~\Cref{fact:info-cc}}.  
	\end{align*}
	Plugging the equations above together, we obtain that, 
	\begin{align*}
		\IC{\protSI}{\distSI} \leq \mi{\rProtSI}{\rA} \leq \frac{1}{t} \cdot \mi{\rProtUR}{\rS_1,\ldots,\rS_t} \leq \frac{1}{t} \cdot \CC{\protUR}{},
	\end{align*}
	proving the lemma.  
\end{proof}

We now conclude the proof of~\Cref{thm:ureach}. By~\Cref{clm:protsi-eps-solves}, $\protSI$ internal $\eps$-solves $\SI$ and thus by~\Cref{thm:eps-si}, we  have $\IC{\protSI}{\distSI} = \Omega(\eps^2 \cdot r)$. 
Plugging in this bound in~\Cref{lem:ds-ureach}, we obtain that
\[
	\CC{\protUR}{} = \Omega(\eps^2 \cdot r \cdot t) = \Omega(\eps^2 \cdot \frac{N^2}{2^{\Theta(\sqrt{\log{N}})}}) = \Omega(\eps^2 \cdot \frac{n^2}{2^{\Theta(\sqrt{\log{n}})}}), 
\]
as the number of vertices $n$ in the graph is $O(N)$. Setting $b=r/4 = \frac{n}{2^{\Theta(\sqrt{\log{n}})}}$ now concludes the proof of~\Cref{thm:ureach} (part~$(i)$ of the theorem already follows from~\Cref{obs:distUR}).

\subsection{The Inverse Unique-Reach Problem}\label{sec:ureach-inv} 

In addition to the $\ureach$ problem, we also need another (almost identical) variant of this problem which we call the \emph{inverse} of the $\ureach$ problem, denoted by $\ureachinv$. This problem is basically what one would naturally expect
if we \emph{reverse} the direction of all edges in an instance of $\ureach$ and ask for finding the unique vertex that can now reach the end-vertex $t$ (corresponding to $s$). Formally, we define this problem as follows. 

In $\ureachinv$, we have a digraph $\Ginv=(U,\Einv)$ on $n$ vertices where $U := U_3 \sqcup U_2 \sqcup U_1 \sqcup \set{t}$, all edges of the graph are directed from $U_1$ to $t$ or some $U_{i+1}$ to $U_{i}$ for $i \in [2]$,  
and we are promised that there is a \emph{unique} vertex $\tstar$ in $U_3$ that can reach $t$. The goal is to find this vertex $\tstar$, or rather, internal $\eps$-solve it exactly as in~\Cref{def:eps-ureach}. As before, 
the edges between $U_2$ and $U_1$, denoted by $\EinvA$, are given to Alice, and the remaining edges, denoted by $\EinvB$, are given to Bob. The communication is also one-way from Alice to Bob.  

We also define a hard input distribution for $\ureachinv$, named $\distURinv$, in exact analogy with $\distUR$ for $\ureach$: $\distURinv$ is a distribution over graphs $\Ginv=(U_3 \sqcup U_2 \sqcup U_1 \sqcup \set{t}, \EinvA \sqcup \EinvB)$, 
obtained by sampling a graph $G = (\set{s} \sqcup V_1 \sqcup V_2 \sqcup V_3, E_{A} \sqcup E_{B})$ from $\distUR$, 
setting $U_3 = V_3$, $U_2 = V_2$, $U_1 = V_1$, and $t=s$, and \emph{reversing} the direction of all edges in $E_A$ and $E_B$ to obtain $\EinvA$ and $\EinvB$.


\newcommand{\tdist}{\widetilde{\dist}}
\newcommand{\rO}{\rv{O}}

\newcommand{\rx}{\rv{x}}
\newcommand{\ry}{\rv{y}}

\newcommand{\BB}{\mathcal{B}}

\section{The $\mathbf{st}$-Reachability Communication Problem}\label{sec:streach}

We now define the main two-player communication problem (the setting of this problem is rather non-standard in terms of the communication model). 

\begin{problem}[\streach]
	Consider a digraph $G=(V,E)$ with two designated vertices $s,t$ and $E := E_1 \sqcup E_2 \sqcup E_3$. The goal is to determine whether or not $s$ can reach $t$ in $G$. 
	
	Initially, Alice receives $E_1$ and Bob receives $E_2$ (the vertices $s,t$ are known to both players). 
	Next, Alice and Bob will have one round of communication by Alice sending a message $\Prot_{A1}$ to Bob and Bob responding back with a message $\Prot_{B1}$. 
	At this point, the edges $E_3$ are revealed to \textbf{both} players. Finally, Alice is allowed to send yet another message $\Prot_{A2}$ to Bob (which this time depends on $E_3$ as well)
	and Bob outputs the answer (again also a function of $E_3$). 
\end{problem}

The following theorem is the main result of our paper. 

\begin{theorem}\label{thm:streach}
	For any $\eps \in (n^{-1/2},1/2)$, any communication protocol  for $\streach$ that succeeds with probability  at least $\frac{1}{2}+\eps$ requires $\Omega(\eps^2 \cdot \frac{n^2}{2^{\Theta(\sqrt{\log{n}})}})$ bits of communication. 
\end{theorem}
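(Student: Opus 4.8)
The plan is to build a hard instance of \streach by gluing together an instance of \ureach (governing the first two layers, between $s$ and a middle set) with an instance of \ureachinv (governing the last two layers, between a middle set and $t$), and to link the two halves by a single bipartite layer of edges $E_3$ that is revealed only after the first round. Concretely, I would set up the distribution $\dist_{\ST}$ as follows: sample $G \sim \distUR$ giving vertices $\{s\}\sqcup V_1\sqcup V_2\sqcup V_3$ with unique reachable vertex $\sstar = w_{\estar}\in V_3$, and \emph{independently} sample $\Ginv \sim \distURinv$ giving vertices $U_3\sqcup U_2\sqcup U_1\sqcup\{t\}$ with unique vertex $\tstar$ that reaches $t$. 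Alice gets $E_1 := E_A \sqcup \EinvA$, Bob gets $E_2 := E_B \sqcup \EinvB$. The third edge set $E_3$ is a uniformly random perfect matching between $V_3$ and $U_3$ — equivalently, a uniformly random bijection $\pi : V_3 \to U_3$ — revealed to both players only after $\Prot_{A1},\Prot_{B1}$. Then $s$ reaches $t$ in $G$ iff $\pi(\sstar) = \tstar$, i.e. iff the one random edge out of $\sstar$ in $E_3$ happens to hit $\tstar$. Since there are $\sim b = r/4$ candidates on each side, a priori $\Pr[\pi(\sstar)=\tstar] \approx 1/b$, so the YES/NO answer is essentially a coin flip conditioned on the players' knowledge — unless the first round already revealed a lot about $\sstar$ and $\tstar$.

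The core argument is then: if the protocol succeeds with probability $\tfrac12 + \eps$ on $\dist_{\ST}$, then \emph{either} the first-round messages $\Prot_{A1},\Prot_{B1}$ must have significantly skewed Bob's posterior on $\sstar$ away from uniform, \emph{or} skewed Alice's posterior on $\tstar$ away from uniform (where here the symmetric roles force me to also run the reversed communication direction — this is exactly why \ureachinv was introduced and why the $\streach$ communication model is two-sided in the first round but Alice-to-Bob in the last). Quantitatively: by a correctness-to-advantage lemma (à la \Cref{prop:uniform-tvd}), beating $1/2$ by $\eps$ forces, on average over transcript and Bob's input, a total-variation shift of order $\eps$ in the distribution of the single relevant coordinate; and because $\pi$ is independent of everything in the first round and $E_3$ carries no usable information until it's revealed, all of that shift must come from $\Prot_{A1}\Prot_{B1}$ acting on the \ureach / \ureachinv instances. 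Peeling off the last message $\Prot_{A2}$: it is at most $\CC{\prot}{}$ bits and is revealed \emph{together with} $E_3$, but once $E_3$ is fixed the problem reduces to: Bob must, from $\Prot_{A1}\Prot_{B1}\Prot_{A2}$ and his inputs, identify whether $\pi(\sstar)=\tstar$; a standard argument shows this forces $\Prot_{A1}$ or $\Prot_{B1}$ (the genuinely first-round, $E_3$-oblivious messages) to internal-$\Omega(\eps)$-solve one of $\ureach$, $\ureachinv$. Invoking \Cref{thm:ureach} on whichever side, we get $\CC{\prot}{} = \Omega(\eps^2 \cdot n\cdot b) = \Omega(\eps^2 \cdot n^2 / 2^{\Theta(\sqrt{\log n})})$, as claimed. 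I would handle the deterministic-vs-randomized and the "either Alice or Bob" dichotomy by the usual averaging/public-coin fixing, and note the range $\eps \in (n^{-1/2},1/2)$ is exactly what makes $b = \Theta(n/2^{\Theta(\sqrt{\log n})})$ large enough that $1/b \ll \eps^2$, so the "coin flip" baseline is genuinely $1/2$ up to negligible terms.

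The main obstacle — and the step deserving the most care — is the reduction showing that the \emph{last} message $\Prot_{A2}$, which may depend on the freshly-revealed $E_3$, cannot substitute for information that should have been sent in the first round. The subtlety is that $E_3 = \pi$ is a random matching of size $\approx b$, and in principle Alice could, after seeing $\pi$, send in $\Prot_{A2}$ a short "hint" of size $\ll n b$ that, combined with $\pi$, lets Bob locate $\sstar$. One must argue this doesn't help: intuitively, $\pi$ is independent of the \ureach instance and of $\sstar$, so conditioned on $\pi$ the posterior on $\sstar$ is unchanged, and $\Prot_{A2}$ has only $\CC{\prot}{}$ bits — so if $\CC{\prot}{} = o(\eps^2 n b)$ it cannot by itself move $\sstar$'s distribution by $\eps$; hence the pre-$E_3$ transcript must already have done so. Formally I expect to phrase this as: define the "first-round transcript" $\Phi := (\Prot_{A1},\Prot_{B1},\mathsf{R})$, note $\sstar \perp \pi \mid \Phi, E_B$, use that Bob's final output is a function of $\Phi, \Prot_{A2}, \pi, E_B$, and bound the advantage by the TV-shift induced by $\Phi$ plus a term controlled by $|\Prot_{A2}| / (\eps^2 n b)$ via a counting/Fano-type step — then choose constants so the second term is $\le \eps/2$ unless $\CC{\prot}{}$ is already $\Omega(\eps^2 nb)$. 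The rest (constructing the distribution, verifying the reachability characterization, combining TV shifts, invoking \Cref{thm:ureach}) is routine bookkeeping by comparison.
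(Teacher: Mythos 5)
Your high-level plan (glue a $\ureach$ and a $\ureachinv$ instance through a linking layer revealed only in round two, reduce the first round to \Cref{thm:ureach}, and argue the short second message cannot convey the one relevant bit) matches the paper's strategy, but your concrete distribution is broken in two ways. First, the answer is not balanced: with $E_3$ a uniformly random bijection $\pi:V_3\to U_3$ and $(\sstar,\tstar)$ ranging over $\Vstar_3\times \Ustar_3$ of size $b\times b$, we have $\Pr[\pi(\sstar)=\tstar]=1/b=o(1)$, so the protocol that always outputs ``no'' succeeds with probability $1-1/b\gg \tfrac12+\eps$ and no lower bound of the stated form can hold. Calling this ``essentially a coin flip'' is the error; the condition $1/b\ll\eps^2$ does not repair a $1/b$-biased answer. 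The paper instead makes the linking layer a random bipartite graph with each edge present independently with probability $\tfrac12$, so the answer $E_1(\sstar,\tstar)$ is a genuinely fair coin independent of everything else.

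Second, and independently fatal, your allocation of inputs trivializes the problem. You give Alice $E_A\sqcup\EinvA$ and Bob $E_B\sqcup\EinvB$. But $E_B$ and $\EinvB$ each contain only $O(b)$ edges, so Bob can send his \emph{entire} input to Alice as $\Prot_{B1}$ using $O(n\log n)$ bits; Alice then knows $\sstar$ and $\tstar$ exactly, and once $E_3$ is revealed she sends the one-bit answer. This is precisely the observation in \Cref{sec:ureach} that $\ureach$ becomes easy with $O(n\log n)$ bits of Bob-to-Alice communication, and it is why the paper arranges the pieces the other way around: the independent linking layer $E_1$ is Alice's first-round input, the dense middle layers $E_A\sqcup\EinvA$ are \emph{Bob's} first-round input (so the only first-round message about the $\ureach$ instances flows in the hard, Alice$_{\UR}$-to-Bob$_{\UR}$ direction), and the sparse outer layers $E_B\sqcup\EinvB$ form the publicly revealed $E_3$. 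Your closing ``Fano-type'' sketch for controlling $\Prot_{A2}$ is also where the real work lies (the paper's \Cref{lem:rx-ry} does this by a direct sum over the $b^2$ candidate pairs, which is why the bound is $\eps^2 b^2$ rather than $\eps^2 n b$), but as stated the construction must be fixed before that step can even be formulated.
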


We note that the $n^{-1/2}$ lower bound on $\eps$ in~\Cref{thm:streach} is not sacrosanct and any term which is $\omega\paren{\frac{\log{n}}{b}}$ still works where $b = \frac{n}{2^{\Theta(\sqrt{\log{n}})}}$ is the parameter in~\Cref{thm:ureach}. 


\subsection{A Hard Distribution for $\streach$}\label{sec:dist-streach}

Recall the distributions $\distUR$ and $\distURinv$ from~\Cref{sec:ureach}. We will use them to define our distribution for $\streach$. See \Cref{fig:streach} for an illustration. 
\begin{tbox}
	\textbf{Distribution $\distST$.} A hard input distribution for the $\streach$ problem. 
	\begin{enumerate}[label=$(\arabic*)$]
		\item Let $V := \set{s} \sqcup V_1 \sqcup V_2 \sqcup V_3 \sqcup U_3 \sqcup U_2 \sqcup U_1 \sqcup \set{t}$ -- each $V_i$ or $U_i$ is called a \emph{layer} of $G$ (this partitioning is known to both players).
		\item Sample the graph $G_1 := (V_3 \sqcup U_3,E_1)$ by picking each edge $(v,u) \in V_3 \times U_3$ independently and with probability half. 
		\item Sample the following {two} graphs \emph{independently}: 
		\begin{enumerate}[label=$(\roman*)$]
			\item $H := (\set{s} \sqcup V_1 \sqcup V_2 \sqcup V_3, E_{A} \sqcup E_{B})$ sampled from the distribution $\distUR$;
			\item $\Hinv := (U_3 \sqcup U_2 \sqcup U_1 \sqcup \set{t}, \EinvA \sqcup \EinvB)$ sampled from the  distribution $\distURinv$. 
		\end{enumerate}
		\item The initial input to Alice and Bob are, respectively, $E_1$ and $E_2 := E_A \cup \Einv_A$, and the input revealed to both players in the
		 second round is $E_3 := E_B \cup \Einv_B$. 
	\end{enumerate}
\end{tbox}

To avoid potential confusion, we should note right away that Bob in the distribution $\distST$ is receiving the input of Alice in $\distUR$ and $\distURinv$.

\begin{figure*}[t!]
    \centering


\begin{tikzpicture}
	\node[ellipse, black, fill=black!10, draw, line width=1pt, minimum width=40pt, minimum height=100pt] (V1) {};
		\node[ellipse, black, fill=black!0, draw, line width=1pt, minimum width=30pt, minimum height=40pt] (A1) [above=-50pt of V1]{};
			\node [circle, black, fill=Yellow, line width=1pt, draw, minimum width=1pt, minimum height=1pt] (A11) [above=-15pt of A1]{};
			\node [circle, black, fill=Yellow, line width=1pt, draw, minimum width=1pt, minimum height=1pt] (A12) [above=-27pt of A1]{};
			\node [circle, black, fill=Yellow, line width=1pt, draw, minimum width=1pt, minimum height=1pt] (A13) [above=-39pt of A1]{};
		\node[ellipse, black, fill=black!0, draw, line width=1pt, minimum width=30pt, minimum height=40pt] (B1) [below=-50pt of V1]{};
			\node [circle, black, fill=ForestGreen!75, line width=1pt, draw, minimum width=1pt, minimum height=1pt] (B11) [above=-15pt of B1]{};
			\node [circle, black, fill=ForestGreen!75, line width=1pt, draw, minimum width=1pt, minimum height=1pt] (B12) [above=-27pt of B1]{};
			\node [circle, black, fill=ForestGreen!75, line width=1pt, draw, minimum width=1pt, minimum height=1pt] (B13) [above=-39pt of B1]{};
	
	\node[ellipse, black, fill=black!10, draw, line width=1pt, minimum width=40pt, minimum height=100pt] (V2) [right=1cm of V1]{};
		\node[ellipse, black, fill=black!0, draw, line width=1pt, minimum width=30pt, minimum height=40pt] (A2) [above=-50pt of V2]{};
			\node [circle, black, fill=Yellow, line width=1pt, draw, minimum width=1pt, minimum height=1pt] (A21) [above=-15pt of A2]{};
			\node [circle, black, fill=Yellow, line width=1pt, draw, minimum width=1pt, minimum height=1pt] (A22) [above=-27pt of A2]{};
			\node [circle, black, fill=Yellow, line width=1pt, draw, minimum width=1pt, minimum height=1pt] (A23) [above=-39pt of A2]{};
	\node[ellipse, black, fill=black!0, draw, line width=1pt, minimum width=30pt, minimum height=40pt] (B2) [below=-50pt of V2]{};
			\node [circle, black, fill=ForestGreen!75, line width=1pt, draw, minimum width=1pt, minimum height=1pt] (B21) [above=-15pt of B2]{};
			\node [circle, black, fill=ForestGreen!75, line width=1pt, draw, minimum width=1pt, minimum height=1pt] (B22) [above=-27pt of B2]{};
			\node [circle, black, fill=ForestGreen!75, line width=1pt, draw, minimum width=1pt, minimum height=1pt] (B23) [above=-39pt of B2]{};

	\draw[line width=1pt, black]
		(A12.center) -- (A22.center)
		(A13.center) -- (A23.center)
		
		(B11.center) -- (A21.center)
		(B13.center) -- (A23.center)
		
		(A11.center) -- (B21.center)
		(A13.center) -- (B23.center)

		(B11.center) -- (B21.center)
		(B12.center) -- (B22.center);

	\draw[line width=1pt, gray, dashed]
		(A11.center) -- (A21.center)
		(B12.center) -- (A22.center)
		(A12.center) -- (B22.center)
		(B13.center) -- (B23.center);

		\node[ellipse, black, fill=black!0, draw, line width=1pt, minimum width=30pt, minimum height=40pt] (C1) [right=1cm of V2]{};
			\node [circle, black, fill=Blue!50, line width=1pt, draw, minimum width=1pt, minimum height=1pt] (C11) [above=-15pt of C1]{};
			\node [circle, black, fill=blue!50, line width=1pt, draw, minimum width=1pt, minimum height=1pt] (C12) [above=-27pt of C1]{};
			\node [circle, black, fill=Blue!50, line width=1pt, draw, minimum width=1pt, minimum height=1pt] (C13) [above=-39pt of C1]{};
	
	\draw[line width=1pt, black]
		(C11.center) -- (A21.center)
		(C12.center) -- (A22.center);

		\node[ellipse, black, fill=black!0, draw, line width=1pt, minimum width=30pt, minimum height=40pt] (D1) [right=1cm of C1]{};
			\node [circle, black, fill=Blue!50, line width=1pt, draw, minimum width=1pt, minimum height=1pt] (D11) [above=-15pt of D1]{};
			\node [circle, black, fill=Blue!50, line width=1pt, draw, minimum width=1pt, minimum height=1pt] (D12) [above=-27pt of D1]{};
			\node [circle, black, fill=blue!50, line width=1pt, draw, minimum width=1pt, minimum height=1pt] (D13) [above=-39pt of D1]{};

	\draw[line width=0.5pt, gray, dashed]
		(C11.center) -- (D11.center)
		(C11.center) -- (D12.center)
		(C11.center) -- (D13.center)
		
		(C12.center) -- (D11.center)
		(C12.center) -- (D12.center)
		(C12.center) -- (D13.center)
		
		(C13.center) -- (D11.center)
		(C13.center) -- (D12.center)
		(C13.center) -- (D13.center);
		
	\draw[line width=1pt, black]
		(C11.center) -- (D11.center)
		(C11.center) -- (D13.center)
		
		(C12.center) -- (D12.center)
		(C12.center) -- (D13.center)
		
		(C13.center) -- (D11.center)
		(C13.center) -- (D13.center);
		
	\draw[line width=1.5pt, red!50]
		(C12.center) -- (D13.center);
		
		\node[ellipse, black, fill=black!10, draw, line width=1pt, minimum width=40pt, minimum height=100pt] (V5) [right=1cm of D1]{};
		\node[ellipse, black, fill=black!0, draw, line width=1pt, minimum width=30pt, minimum height=40pt] (E1) [above=-50pt of V5]{};
			\node [circle, black, fill=ForestGreen!75, line width=1pt, draw, minimum width=1pt, minimum height=1pt] (E11) [above=-15pt of E1]{};
			\node [circle, black, fill=ForestGreen!75, line width=1pt, draw, minimum width=1pt, minimum height=1pt] (E12) [above=-27pt of E1]{};
			\node [circle, black, fill=ForestGreen!75, line width=1pt, draw, minimum width=1pt, minimum height=1pt] (E13) [above=-39pt of E1]{};
			
		\node[ellipse, black, fill=black!0, draw, line width=1pt, minimum width=30pt, minimum height=40pt] (F1) [below=-50pt of V5]{};
			\node [circle, black, fill=Yellow, line width=1pt, draw, minimum width=1pt, minimum height=1pt] (F11) [above=-15pt of F1]{};
			\node [circle, black, fill=Yellow, line width=1pt, draw, minimum width=1pt, minimum height=1pt] (F12) [above=-27pt of F1]{};
			\node [circle, black, fill=Yellow, line width=1pt, draw, minimum width=1pt, minimum height=1pt] (F13) [above=-39pt of F1]{};

	\node[ellipse, black, fill=black!10, draw, line width=1pt, minimum width=40pt, minimum height=100pt] (V6) [right=1cm of V5]{};
		\node[ellipse, black, fill=black!0, draw, line width=1pt, minimum width=30pt, minimum height=40pt] (E2) [above=-50pt of V6]{};
			\node [circle, black, fill=ForestGreen!75, line width=1pt, draw, minimum width=1pt, minimum height=1pt] (E21) [above=-15pt of E2]{};
			\node [circle, black, fill=ForestGreen!75, line width=1pt, draw, minimum width=1pt, minimum height=1pt] (E22) [above=-27pt of E2]{};
			\node [circle, black, fill=ForestGreen!75, line width=1pt, draw, minimum width=1pt, minimum height=1pt] (E23) [above=-39pt of E2]{};
			
	\node[ellipse, black, fill=black!0, draw, line width=1pt, minimum width=30pt, minimum height=40pt] (F2) [below=-50pt of V6]{};
			\node [circle, black, fill=Yellow, line width=1pt, draw, minimum width=1pt, minimum height=1pt] (F21) [above=-15pt of F2]{};
			\node [circle, black, fill=Yellow, line width=1pt, draw, minimum width=1pt, minimum height=1pt] (F22) [above=-27pt of F2]{};
			\node [circle, black, fill=Yellow, line width=1pt, draw, minimum width=1pt, minimum height=1pt] (F23) [above=-39pt of F2]{};

	\draw[line width=0.5pt, gray, dashed]
		(E11.center) -- (E21.center)
		(E12.center) -- (E22.center)
		(E13.center) -- (E23.center)
		
		(F11.center) -- (E21.center)
		(F12.center) -- (E22.center)
		(F13.center) -- (E23.center)
		
		(E11.center) -- (F21.center)
		(E12.center) -- (F22.center)
		(E13.center) -- (F23.center)
		
		(F11.center) -- (F21.center)
		(F12.center) -- (F22.center)
		(F13.center) -- (F23.center);

	\draw[line width=1pt, gray, black]
		(E12.center) -- (E22.center)
		(E13.center) -- (E23.center)
		
		(F11.center) -- (E21.center)
		(F13.center) -- (E23.center)
		
		(E11.center) -- (F21.center)
		(E12.center) -- (F22.center)
		
		(F12.center) -- (F22.center)
		(F13.center) -- (F23.center);

	\draw[line width=1pt, black]
		(D12.center) to (F12.center)
		(D13.center) to (F13.center);
			
	\node [circle, black, fill=Blue!50, line width=1pt, draw, minimum width=1pt, minimum height=1pt] (s) [left=1cm of V1]{};
	\node [circle, black, fill=Blue!50, line width=1pt, draw, minimum width=1pt, minimum height=1pt] (t) [right=1cm of V6]{};

	\draw[line width=1pt, black]
		(s) -- (A11.center)
		(s) -- (A12.center);

\draw[line width=1pt, black]
		(t) -- (F23.center)
		(t) -- (F21.center);
		
	\draw[line width=1.5pt, blue!50]
		(s) -- (A12.center)
		(A12.center) -- (A22.center)
		(A22.center) -- (C12.center)
		
		(t) -- (F23.center)
		(D13.center) -- (F13.center)
		(F13.center) -- (F23.center);

	\node (p1) [above right=0.5cm and 0.35cm of V1]{$E_A$};
	\node (p2) [above right=0.5cm and 0.35cm of C1]{$E_1$};
	\node (p3) [above right=0.75cm and 0.15cm of s]{$E_B$};	
	\node (p33) [above right=-0.30cm and 0.35cm of V2]{$E_B$};	
	\node (p4) [above right=0.5cm and 0.35cm of V5]{$\EinvA$};
	\node (p5) [above left=0.75cm and 0.25cm of t]{$\EinvB$};	
	\node (p55) [above left=-0.30cm and 0.25cm of V5]{$\EinvB$};	
	
	\node (n1) [below=0.25cm of V1]{$V_1$};
	\node (n2) [below=0.25cm of V2]{$V_2$};
	\node (n3) [below=0.25cm of C1]{$V_3$};
	\node (n4) [below=0.25cm of D1]{$U_3$};
	\node (n5) [below=0.25cm of V5]{$U_2$};
	\node (n6) [below=0.25cm of V6]{$U_1$};
	\node (ns) [below=0.25cm of s]{$s$};
	\node (nt) [below=0.25cm of t]{$t$};

\end{tikzpicture}
   \caption{An illustration of the  input distribution $\distST$. Here, the directions of all edges are from left to right and hence omitted. The vertices $\sstar \in V_3$ and $\tstar \in U_3$ are marked blue and the potential edge $(\sstar,\tstar)$ is marked red--existence or non-existence of this edge uniquely determines whether or not $s$ can reach $t$ in $G$. 
   }
\label{fig:streach}
\end{figure*}
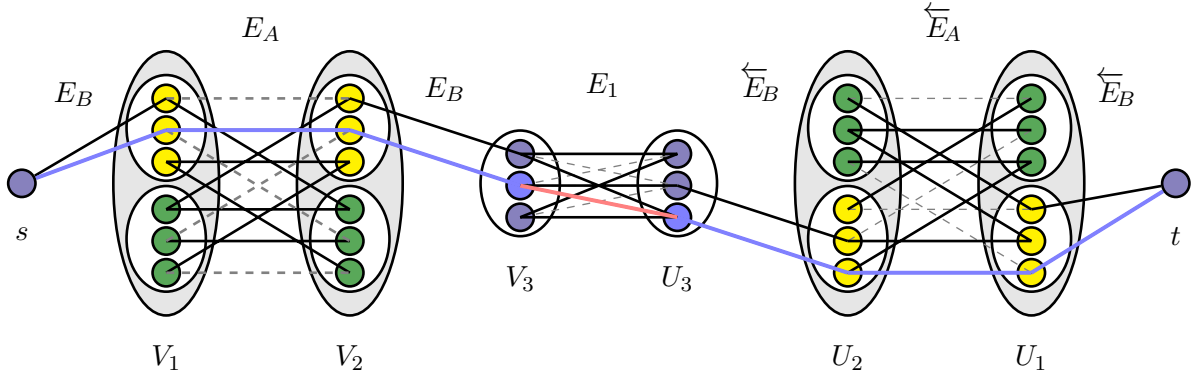

\begin{observation}\label{obs:distST}
	The following two remarks are in order:
	\begin{enumerate}[label=$(\roman*)$]
		\item The distributions of $E_1$, $H$, and $\Hinv$ are mutually independent in $\distST$. 
		\item $s$ can reach $t$ in $G$ iff the edge $(\sstar,\tstar) \in E_1$. \\
		(proof: the only vertex in $V_3$ reachable from $s$ is $\sstar$ and the only vertex in $U_3$ that reaches $t$ is $\tstar$, thus the only potential $s$-$t$ path is $s \leadsto \sstar \rightarrow \tstar \leadsto t$.)
	\end{enumerate}
\end{observation}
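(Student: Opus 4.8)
The plan is to prove the two parts separately, both essentially by unwinding the definition of $\distST$ together with the structural facts about $\distUR$ and $\distURinv$ recorded in \Cref{obs:distUR}.

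For part $(i)$, I would just trace the sampling procedure. Step $(2)$ of $\distST$ produces the bipartite graph $G_1 = (V_3 \sqcup U_3, E_1)$ by independent coin flips; step $(3)$ then draws $H \sim \distUR$ and $\Hinv \sim \distURinv$, with the word ``independently'' asserting that the randomness for $H$, the randomness for $\Hinv$, and (since step $(2)$ is a separate step) the randomness defining $E_1$ come from three disjoint blocks of independent bits. As $E_1$, $H$, and $\Hinv$ are measurable functions of these three disjoint blocks, their joint law factors as the product of the marginals, which is exactly the claim. (The remark after the box — that Bob's $\distST$-input $E_2 = E_A \cup \EinvA$ is the Alice-part of $H,\Hinv$ while the round-two input $E_3 = E_B \cup \EinvB$ is the Bob-part — concerns only who receives what and is irrelevant to $(i)$.)

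For part $(ii)$, the key point is that $G$ is a layered DAG whose layers, in order, are $s, V_1, V_2, V_3, U_3, U_2, U_1, t$, and that every edge of $G$ runs from one layer to the immediately following layer. Indeed, by the definition of $\distUR$ the edges of $H$ run $s \to V_1$, $V_1 \to V_2$, and $V_2 \to V_3$; by the reversal-and-renaming defining $\distURinv$ the edges of $\Hinv$ run $U_3 \to U_2 \to U_1 \to t$; and $E_1 \subseteq V_3 \times U_3$ by step $(2)$. Hence every directed $s$-$t$ walk in $G$ has length exactly $7$ and visits one vertex of each layer, so it has the form $s \leadsto a \to b \leadsto t$ with $a \in V_3$, $b \in U_3$, where the prefix $s \leadsto a$ uses only $H$-edges and the suffix $b \leadsto t$ uses only $\Hinv$-edges. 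Applying \Cref{obs:distUR} to $H$, the only vertex of $V_3$ reachable from $s$ via $H$-edges is $\sstar$, so $a = \sstar$; by the inverse analogue (the statement of \Cref{obs:distUR} transported through the definition of $\distURinv$), the only vertex of $U_3$ that reaches $t$ via $\Hinv$-edges is $\tstar$, so $b = \tstar$; thus the crossing edge is exactly $(\sstar,\tstar) \in E_1$. For the converse, if $(\sstar,\tstar) \in E_1$ then concatenating the $H$-path $s \leadsto \sstar$ (which exists because $\sstar$ is reachable from $s$ in $H$), the edge $(\sstar,\tstar)$, and the $\Hinv$-path $\tstar \leadsto t$ gives an $s$-$t$ path in $G$.

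The one place I would be careful is the claim that reachability ``from $s$ into $V_3$'' in the full graph $G$ coincides with reachability in $H$ alone (and dually for $t$ and $\Hinv$). This holds because no edge of $G$ enters $\{s\} \cup V_1 \cup V_2 \cup V_3$ from the $U$-side — all edges of $E_1$, $\EinvA$, $\EinvB$ are directed within or away from $U_3 \cup U_2 \cup U_1 \cup \{t\}$ — so the subgraph of $G$ induced on $\{s\} \cup V_1 \cup V_2 \cup V_3$ is precisely $H$, and symmetrically for $\Hinv$. Given the explicit layering this is immediate, so I do not anticipate a genuine obstacle: the observation is essentially definitional once the layered structure is made precise.
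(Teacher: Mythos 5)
Your proof is correct and follows essentially the same route as the paper: the paper's parenthetical argument is exactly your observation that the layered structure forces any $s$-$t$ path to have the form $s \leadsto \sstar \rightarrow \tstar \leadsto t$, with $\sstar$ and $\tstar$ pinned down by \Cref{obs:distUR} (and its inverse analogue), while part $(i)$ is read off directly from the independent sampling steps of $\distST$. Your extra care about reachability into $V_3$ using only $H$-edges is a fine (if routine) elaboration of what the paper leaves implicit.
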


\subsection{Setup and Notation} 
Let $\protST$ be any \emph{deterministic} protocol for $\streach$ over the distribution $\distST$ 
with
\begin{align}
	\CC{\protST}{} = o(\eps^2 \cdot b^2), \label{eq:CCpST}
\end{align}
where $b := \frac{n}{2^{\Theta(\sqrt{\log{n}})}}$ is the parameter in~\Cref{thm:ureach} for instances of $\distUR$ and $\distURinv$. We will prove that the probability that $\protST$ outputs the correct
answer to $\streach$ is  $\frac{1}{2}+o(\eps)$, hence proving~\Cref{thm:streach} for deterministic protocols. The results for 
randomized protocols  follows immediately from this and an averaging argument (i.e., the easy direction of Yao's minimax principle~\cite{Yao83}). 

To facilitate our proofs, the following notation would be useful. For brevity, we use 
\begin{align*}
\Prot &:= (\Prot_{A1},\Prot_{B1},\Prot_{A2}), \tag{the messages communicated by Alice and Bob in both rounds of the protocol $\protST$} \\
Z_1 &:= (\Prot_{A1},\Prot_{B1},E_3), \tag{the \emph{extra} information known to Alice at the end of the first round of communication} \\
Z_2 &:= (\Prot,E_3,\sstar,\tstar). \tag{the \emph{extra} information known to Bob at the end of the second round of communication}
\end{align*}
We also use $O \in \set{0,1}$ to denote the output Bob at the end of the protocol. 

For any pair of vertices $v,u \in V_3 \times U_3$, 
we use the notation $E_1(v,u) \in \set{0,1}$ to denote whether or not the edge $(v,u) \in E_1$. For a fixed choice of $E_3 = E_B \cup \EinvB$ in $\distST$, we use $\Vstar_3$ and $\Ustar_3$ to denote the sets from which $\sstar$ and $\tstar$ are chosen uniformly at random
from conditioned on $E_B$ and $\EinvB$, respectively (see part $(i)$ of~\Cref{thm:ureach}). We also define: 
\[
E_1(\Vstar_3,\Ustar_3) := \set{E_1(v_i,u_i) \mid (v_i,u_i) \in \Vstar_3 \times \Ustar_3}.
\] 
We further assume a fixed arbitrary ordering of pairs $v,u \in V_3 \times U_3$ and define: 
\[
E_1^{<(v,u)} := E_1(v_1,u_1),E_1(v_2,u_2),\ldots
\]
for all pairs $(v_i,u_i) \in E_1(\Vstar_3,\Ustar_3)$ that appear before $(v,u)$ in this ordering (note that we ignore the other edges of $E_1$ that are not in $E_1(\Vstar_3,\Ustar_3)$ here).

Throughout the proof, we will do a couple of reductions from our lower bound for $\ureach$ (on the distribution $\distST$). To avoid ambiguity, in these cases, we use the notation ``Alice$_\ST$ and Bob$_\ST$'' to denote the players in the protocol $\protST$ 
for the $\streach$ problem and use ``Alice$_\UR$ and Bob$_\UR$'' to denote the players in the protocol for the $\ureach$ problem (similar notation will be used for $\ureachinv$ as well). As an aside, we emphasize that in these reductions 
 an ``Alice-player'' in one instance may need to play the role of a ``Bob-player'' in the other instance. 

\subsection*{Crucial Independence Properties} 

The following independence properties are crucial for our proofs. They are all based on 
the rectangle property of communication protocols and part $(i)$ of~\Cref{obs:distST}. 
\begin{align}
	&\rProt_{A2} \perp \rsstar,\rtstar \mid \rZ_1 \label{eq:ind1} \\
	&\rE_1 \perp \rsstar,\rtstar \mid \rZ_1,\rProt_{A2} \label{eq:ind2} \\
	&\rE_2 \perp \rE_1(\rsstar,\rtstar) \mid \rZ_1,\rZ_2 \label{eq:ind3}.
\end{align}
\begin{proof}[Proof of~\Cref{eq:ind1}]
	\begin{align*}
		\mi{\rsstar,\rtstar}{\rProt_{A2} \mid  \rZ_1} &= \mi{\rsstar,\rtstar}{\rProt_{A2} \mid  \rProt_{A1},\rProt_{A2},\rE_3} \tag{by the definition of $Z_1 = (\Prot_{A1},\Prot_{B1},E_3)$}\\
		&\leq \mi{\rE_2}{\rProt_{A2} \mid \rProt_{A1},\rProt_{B1},\rE_3} \tag{by the data processing inequality (\itfacts{data-processing}) as $\rsstar,\rtstar$ is fixed by $\rE_2$ conditioned on $\rE_3$}\\
		&\leq \mi{\rE_2}{\rE_1 \mid \rProt_{A1},\rProt_{B1},\rE_3} \tag{by the data processing inequality (\itfacts{data-processing}) as $\rProt_{A2}$ is fixed by $\rE_1$ conditioned on $\rE_{3},\rProt_{B1}$} \\
		&\leq \mi{\rE_2}{\rE_1 \mid \rProt_{A1},\rE_3} \tag{by~\Cref{prop:info-decrease} since $\rProt_{B1} \perp \rE_1 \mid \rE_2,\rProt_{A1},\rE_3$ as $\rE_2,\rProt_{A1}$ fixes $\rProt_{B1}$} \\
		&\leq \mi{\rE_2}{\rE_1 \mid \rE_3} \tag{by~\Cref{prop:info-decrease} since $\rProt_{A1} \perp \rE_2 \mid \rE_1,\rE_3$ as $\rE_1$ fixes $\rProt_{A1}$} \\
		&=0 \tag{by~\itfacts{info-zero} since $\rE_1 \perp \rE_2,\rE_3$ by part $(i)$ of~\Cref{obs:distST}}. 
	\end{align*}
	The proof now follows from~\itfacts{info-zero}. 
\end{proof}
\begin{proof}[Proof of~\Cref{eq:ind2}]
	\begin{align*}
		\mi{\rsstar,\rtstar}{\rE_1 \mid  \rZ_1,\rProt_{A2}} &\leq \mi{\rE_1}{\rE_2 \mid \rProt_{A1},\rProt_{B1},\rProt_{A2},\rE_3} \tag{by the data processing inequality (\itfacts{data-processing}) as $\rsstar,\rtstar$ is fixed by $\rE_2$ conditioned on $\rE_3$}\\
		&\leq \mi{\rE_1}{\rE_2 \mid \rProt_{A1},\rProt_{B1},\rE_3} \tag{by~\Cref{prop:info-decrease} since $\rProt_{A2} \perp \rE_2 \mid \rE_1,\rProt_{A1},\rProt_{B1},\rE_3$ as $\rE_1,\rProt_{B1},\rE_3$ fixes $\rProt_{A2}$} \\
		&=0, 
	\end{align*}
	as was shown in the previous proof. The proof now follows from~\itfacts{info-zero}. 
\end{proof}
\begin{proof}[Proof of~\Cref{eq:ind3}]
	\begin{align*}
		\mi{\rE_1(\rsstar,\rtstar)}{\rE_2 \mid \rZ_1,\rZ_2} &= \mi{\rE_1(\rsstar,\rtstar)}{\rE_2 \mid \rProt_{A1},\rProt_{B1},\rProt_{A2},\rE_3,\rsstar,\rtstar} \\
		&\leq \mi{\rE_1}{\rE_2 \mid \rProt_{A1},\rProt_{B1},\rProt_{A2},\rE_3,\rsstar,\rtstar} \tag{by the data processing inequality (\itfacts{data-processing}) as $\rE_1(\rsstar,\rtstar)$ is determined by $\rE_1 \mid \rsstar,\rtstar$}\\
		&\leq \mi{\rE_1}{\rE_2 \mid \rProt_{A1},\rProt_{B1},\rProt_{A2},\rE_3} \tag{by~\Cref{prop:info-decrease} since $\rsstar,\rtstar \perp \rE_1 \mid \rE_2,\rProt_{A1},\rProt_{B1},\rProt_{A2},\rE_3$ as $\rE_2,\rE_3$ fixes $\rsstar,\rtstar$} \\
		&=0,
		\end{align*}
	as was shown in the previous proof. The proof now follows from~\itfacts{info-zero}. 
\end{proof}

\subsection{Part One: The First Round of Communication}

In the following lemma, we prove that after the first round of the protocol, the (joint) distribution of $(\sstar,\tstar)$ conditioned on $Z_1 = (\Prot_{A1},\Prot_{B1},E_3)$ is almost the same as if we only conditioned on $E_3$. 
This is basically through a reduction from~\Cref{thm:ureach} considering $\sstar,\tstar$ are distributed (originally) according to $\distUR$ and $\distURinv$ and the public information $E_3$ provides the input of Bob in the instances of $\ureach$ and 
$\ureachinv$ in this reduction.  

\begin{lemma}\label{lem:first-round}
$\Ex_{\rZ_1} {\tvd{\distribution{\rsstar,\rtstar \mid Z_1}}{\distribution{\rsstar,\rtstar \mid E_3}}} = o(\eps).$
\end{lemma}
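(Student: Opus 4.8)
The plan is to prove this by two separate reductions to \Cref{thm:ureach}, one for the forward instance $H\sim\distUR$ and one for the reverse instance $\Hinv\sim\distURinv$, and then combine them by a union bound using the fact that $\rsstar$ and $\rtstar$ are conditionally independent given $E_3$ together with $(\Prot_{A1},\Prot_{B1})$. First I would observe that, by part $(i)$ of \Cref{obs:distST}, the graphs $H$ and $\Hinv$ are sampled independently, and $E_1$ is independent of both; since $\sstar$ is a deterministic function of $H$ and $\tstar$ a deterministic function of $\Hinv$, conditioning on $E_3=E_B\cup\Einv_B$ makes $\rsstar\perp\rtstar$. I would then use the triangle inequality for total variation distance to split
\[
\tvd{\distribution{\rsstar,\rtstar\mid Z_1}}{\distribution{\rsstar,\rtstar\mid E_3}}\le \tvd{\distribution{\rsstar\mid Z_1}}{\distribution{\rsstar\mid E_3}}+\tvd{\distribution{\rtstar\mid Z_1,\rsstar}}{\distribution{\rtstar\mid E_3}},
\]
or, more symmetrically, bound the joint TVD by the sum of the two marginal TVDs after establishing (via the rectangle/independence structure) that conditioning on the transcript keeps the two coordinates "almost independent." Taking expectation over $\rZ_1$, it then suffices to show each of $\Ex_{\rZ_1}\tvd{\distribution{\rsstar\mid Z_1}}{\distribution{\rsstar\mid E_3}}$ and the corresponding quantity for $\rtstar$ is $o(\eps)$.

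For the $\sstar$ term, the key step is to view the first round $(\Prot_{A1},\Prot_{B1})$ of $\protST$ as inducing a one-way protocol for $\ureach$ on $\distUR$. Here Bob$_\ST$ holds $E_A$ (Alice$_\UR$'s input in $\distUR$) together with $\Einv_A$ and $E_1$; Alice$_\ST$ holds $E_1$; and $E_3$ contains $E_B$ (Bob$_\UR$'s input). The subtlety is the direction of communication: in $\ureach$ the message goes from Alice$_\UR$ (holder of $E_A$) to Bob$_\UR$ (holder of $E_B$), so I would set up the reduction so that the \emph{combined} transcript $(\Prot_{A1},\Prot_{B1})$, which Bob$_\ST$ (who holds $E_A$) can compute by simulating Alice$_\ST$'s side using private randomness for $E_1$ and $\Einv_A$, is sent as a single one-way message to the party holding $E_B$. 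Formally: a hypothetical Alice$_\UR$ given $E_A$ samples $E_1,\Einv_A$ and an entire fake instance, runs both rounds of $\protST$ internally, and sends $(\Prot_{A1},\Prot_{B1})$ to Bob$_\UR$ (who holds $E_B$, hence $E_3$ modulo the $\Einv_B$ part which can also be privately sampled). If this induced protocol internal-$\delta$-solved $\ureach$ — i.e. if $\Ex\tvd{\distribution{\rsstar\mid \Prot_{A1},\Prot_{B1},E_3}}{\distribution{\rsstar\mid E_3}}\ge\delta$ — then by \Cref{thm:ureach} it would cost $\Omega(\delta^2\cdot n\cdot b)=\Omega(\delta^2 b^2)$ bits. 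But by assumption $\CC{\protST}{}=o(\eps^2 b^2)$, so $\delta=o(\eps)$. The identical argument with roles reversed (Bob$_\ST$'s input containing $\Einv_A$, and $\Einv_B\subseteq E_3$) gives the bound for $\rtstar$ via $\ureachinv$ and \Cref{thm:ureach} applied to $\distURinv$.

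The main obstacle I expect is getting the reduction's bookkeeping exactly right: verifying that the information a $\ureach$-player would need to \emph{simulate} the extra pieces ($E_1$, the other-direction gadget $\Hinv$ or $H$, and the unseen half of $E_3$) is indeed available to that player as \emph{private randomness independent of the relevant input}, so that the simulated transcript has the correct distribution and the internal-solving quantity transfers cleanly. This is where part $(i)$ of \Cref{obs:distST} (mutual independence of $E_1$, $H$, $\Hinv$) does the real work. A secondary point to be careful about is that $\protST$'s first-round messages may depend on $E_2=E_A\cup\Einv_A$, which is \emph{both} $\ureach$-Alice's input \emph{and} $\ureachinv$-Alice's input — but since we handle the two reductions separately and in each one the "other" gadget's edges are freshly sampled by the simulator, this does not cause circularity. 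Once these simulation details are in place, the triangle-inequality splitting and the $o(\eps)+o(\eps)=o(\eps)$ accounting finish the lemma.
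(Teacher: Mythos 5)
Your decomposition and both reductions follow the paper's own proof almost exactly (the paper splits the joint TVD into the same two terms and proves them as \Cref{clm:first-round-first} and \Cref{clm:first-round-second}), but there is one genuine gap in your treatment of the second term. Your triangle inequality correctly leaves the $\tstar$-term conditioned on $\sstar$, i.e.\ you must bound $\Ex_{\rZ_1}\Ex_{\rsstar\mid \rZ_1}\tvd{\distribution{\rtstar\mid Z_1,\sstar}}{\distribution{\rtstar\mid \EinvB}}$, yet you then claim the second reduction is ``identical with roles reversed.'' It is not: in the reduction to $\ureachinv$, the receiving player Bob$_\UR$ holds $\EinvB$ and the publicly sampled $E_B$ but has no access to $E_A$, so he cannot compute $\sstar$, and the internal-solving guarantee of the protocol that sends only $(\Prot_{A1},\Prot_{B1})$ controls $\Ex\,\tvd{\distribution{\rtstar\mid \Prot_{A1},\Prot_{B1},E_3}}{\distribution{\rtstar\mid\EinvB}}$, which is the \emph{unconditioned} quantity. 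Averaging over $\sstar$ only decreases total variation distance, so bounding the unconditioned quantity does not bound the conditioned one. The fix (which the paper flags as a ``minor yet crucial'' change) is to have Alice$_\UR$, who knows $E_A\cup E_B$ and hence $\sstar$, append $\sstar$ to her one-way message; this costs an extra $O(\log n)$ bits, and it is precisely to absorb this overhead that \Cref{thm:streach} assumes $\eps > n^{-1/2}$ (so that $\eps^2\cdot n\cdot b \gg \log n$). Your plan never mentions sending $\sstar$, so as written the second term is not controlled.

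A secondary bookkeeping point: you suggest that the half of $E_3$ not held by Bob$_\UR$ ``can also be privately sampled'' by him. It must instead be sampled with \emph{public} randomness (as the paper does), so that it becomes part of $\Prot_{\UR}$ and the internal-solving quantity $\Ex\,\tvd{\distribution{\rsstar\mid\Prot_{\UR},E_B}}{\distribution{\rsstar\mid E_B}}$ coincides with $\Ex_{\rZ_1}\tvd{\distribution{\rsstar\mid Z_1}}{\distribution{\rsstar\mid E_B}}$, where $Z_1$ contains all of $E_3$. If it were Bob's private randomness it would appear in neither $\Prot_{\UR}$ nor his input, and by convexity the resulting (averaged) TVD could be strictly smaller than the quantity you need, so the lower bound of \Cref{thm:ureach} would not transfer. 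With these two corrections your argument becomes the paper's proof.
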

\noindent
We start with writing,
\begin{align}
	\text{LHS in~\Cref{lem:first-round}} &= \Ex_{\rZ_1} {\tvd{\distribution{\rsstar,\rtstar \mid Z_1}}{\distribution{\rsstar,\rtstar \mid E_3}}} \notag \\
	&\leq 	\Ex_{\rZ_1} {\tvd{\distribution{\rsstar \mid Z_1}}{\distribution{\rsstar \mid E_3}}} \notag \\
	&\hspace{3cm}+ 	\Ex_{\rZ_1}\,\Ex_{\rsstar \mid Z_1} {\tvd{\distribution{\rtstar \mid Z_1 , \sstar}}{\distribution{\rtstar \mid E_3,\sstar}}} \notag \\
	&= 	\Ex_{\rZ_1} {\tvd{\distribution{\rsstar \mid Z_1}}{\distribution{\rsstar \mid E_B}}} \notag \\
	&\hspace{3cm}+ 	\Ex_{\rZ_1}\,\Ex_{\rsstar \mid Z_1} {\tvd{\distribution{\rtstar \mid Z_1 , \rsstar}}{\distribution{\rtstar \mid \EinvB}}}, \label{eq:fr-final}
\end{align}
where the final equality holds because $E_3 = E_B \cup \EinvB$, $\rsstar \perp \rEinvB \mid E_B$ and $\rtstar \perp \rsstar,\rE_B \mid \EinvB$ by~\Cref{obs:distST}. In the following two claims, we bound each of the terms in \Cref{eq:fr-final} by $o(\eps)$.

\begin{claim}\label{clm:first-round-first}
	$\Ex_{\rZ_1} {\tvd{\distribution{\rsstar \mid \rZ_1}}{\distribution{\rsstar \mid E_B}}} = o(\eps).$
\end{claim}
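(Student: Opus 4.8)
The plan is to derive \Cref{clm:first-round-first} from the one-way lower bound of \Cref{thm:ureach} for $\ureach$ on $\distUR$, using the fact that $H = (\set{s}\sqcup V_1\sqcup V_2\sqcup V_3, E_A\sqcup E_B)\sim\distUR$ sits inside every instance of $\distST$. The structural facts I would lean on are: $\sstar$ is a deterministic function of $(E_A,E_B)$; by part~$(i)$ of \Cref{thm:ureach}, $\distribution{\rsstar\mid E_B}$ is uniform over a set $\Vstar_3$; and by part~$(i)$ of \Cref{obs:distST} the three objects $E_1$, $H$, $\Hinv$ are mutually independent in $\distST$. The crucial point is that the first-round messages $\Prot_{A1},\Prot_{B1}$ of $\protST$ are computed without any access to $E_B$: $\Prot_{A1}$ is a function of Alice$_\ST$'s input $E_1$, and $\Prot_{B1}$ is a function of Bob$_\ST$'s input $E_2 = E_A\cup\Einv_A$ together with $\Prot_{A1}$. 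Consequently $Z_1 = (\Prot_{A1},\Prot_{B1},E_3)$ can be manufactured by a party that knows $E_A$ and has access to independent copies of $E_1$ and $\Hinv$, and is only afterward handed $E_B$ — which is exactly what a one-way $\ureach$ protocol (Alice supplying $E_A$ and the message, Bob supplying $E_B$) is allowed to do.

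Concretely, I would build a one-way protocol $\protUR$ for $\ureach$ on $\distUR$ as follows. Alice$_\UR$ receives $E_A$ and Bob$_\UR$ receives $E_B$; using public randomness (which, per \Cref{def:eps-ureach}, is counted as part of the transcript of $\protUR$) they jointly sample $\Hinv = (\Einv_A,\Einv_B)\sim\distURinv$, and Alice$_\UR$ additionally samples $E_1$ from its $\distST$-marginal privately. By the mutual independence above, the resulting composite object is distributed exactly as $\distST$. Alice$_\UR$ now simulates round one of the deterministic protocol $\protST$ — she can play both of its parties, computing $\Prot_{A1}$ from $E_1$ and then $\Prot_{B1}$ from $E_A\cup\Einv_A$ and $\Prot_{A1}$ — and sends $(\Prot_{A1},\Prot_{B1})$ to Bob$_\UR$. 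The message has $O(\CC{\protST}{})$ bits, so by \Cref{eq:CCpST} we have $\CC{\protUR}{} = o(\eps^2 b^2)$. Writing $\rProtUR$ for the transcript of $\protUR$ (public coins together with the message), the pair $(\rProtUR,E_B)$ determines $Z_1 = (\Prot_{A1},\Prot_{B1},E_B\cup\Einv_B)$ deterministically.

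Now I would finish with two ingredients. First, a data-processing step: since $Z_1$ is a deterministic function of $(\rProtUR,E_B)$ and both random variables carry $E_B$, after conditioning on $E_B$ one can write $\distribution{\rsstar\mid Z_1}$ as the mixture $\Ex\bracket{\distribution{\rsstar\mid \rProtUR, E_B}\mid Z_1}$ and invoke convexity of total variation distance to get
\begin{align*}
	\Ex_{\rZ_1}\tvd{\distribution{\rsstar\mid \rZ_1}}{\distribution{\rsstar\mid E_B}}\;\leq\;\Ex_{\rProtUR,\rE_B}\tvd{\distribution{\rsstar\mid \rProtUR, E_B}}{\distribution{\rsstar\mid E_B}}\;=:\;q.
\end{align*}
Second, $q$ is precisely the internal-$\eps$-solving functional of $\protUR$ from \Cref{def:eps-ureach}. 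Since $\CC{\protUR}{} = o(\eps^2 b^2)$ and the $\ureach$ instance here has $\Theta(n)$ vertices, \Cref{thm:ureach}$(ii)$ (which says that internal-$\delta$-solving $\ureach$ costs $\Omega(\delta^2\cdot n\cdot b)$) gives, in contrapositive form,
\begin{align*}
	q\;\leq\;\sqrt{\frac{\CC{\protUR}{}}{\Omega(n\,b)}}\;=\;\sqrt{\frac{o(\eps^2 b^2)}{\Omega(n\,b)}}\;=\;\eps\cdot\sqrt{\frac{b}{n}}\cdot o(1)\;=\;o(\eps),
\end{align*}
using $b/n = 2^{-\Theta(\sqrt{\log n})}\leq 1$. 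Chaining the two displays yields $\Ex_{\rZ_1}\tvd{\distribution{\rsstar\mid \rZ_1}}{\distribution{\rsstar\mid E_B}} = o(\eps)$, as claimed.

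The one genuinely delicate step — the part I would write out most carefully — is the data-processing inequality and its bookkeeping: one must verify, pointwise after fixing $E_B$, that replacing the richer side information $(\rProtUR,E_B)$ by its coarsening $Z_1$ can only decrease the expected total variation distance to the common reference $\distribution{\rsstar\mid E_B}$, and in particular that the extra public content $E_1,\Einv_A,\Einv_B$ present in $(\rProtUR,E_B)$ but not in $Z_1$ does not reverse the inequality (it cannot, precisely because we are coarsening rather than refining). Everything else is a routine simulation argument, and the numerical margin is comfortable: $\CC{\protST}{} = o(\eps^2 b^2)$ against the $\Omega(\eps^2\cdot n\cdot b) = \Omega\!\big(\eps^2\cdot n^2/2^{\Theta(\sqrt{\log n})}\big)$ barrier, with $b = n/2^{\Theta(\sqrt{\log n})}$ and $\eps \geq n^{-1/2}$.
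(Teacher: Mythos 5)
Your proof is correct and follows essentially the same reduction as the paper: Alice$_\UR$ simulates the first round of $\protST$ by sampling $E_1$ and $\Hinv$ as auxiliary randomness, sends $(\Prot_{A1},\Prot_{B1})$ as the one-way message, and the bound then follows from \Cref{thm:ureach} together with $\CC{\protST}{}=o(\eps^2 b^2)$. The only (harmless) difference is that you place all of $\Hinv$ in the public randomness rather than just $\Einv_B$, so your transcript strictly refines $Z_1$ and you need the convexity/data-processing step, whereas the paper samples $\Einv_A$ privately and obtains an exact equality between the claim's left-hand side and the internal $\delta$-solving functional.
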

\begin{proof}
	Consider the following \emph{one-way} protocol $\prot_{\UR}$ for instances $H := (\set{s} \sqcup V_1 \sqcup V_2 \sqcup V_3, E_{A} \sqcup E_{B})$ of $\ureach$ sampled from $\distUR$ between the two players Alice$_\UR$ and Bob$_\UR$: 
	\begin{enumerate}[label=$(\roman*)$]
		\item Alice$_\UR$ and Bob$_\UR$ sample $\EinvB \sim \distST \mid H$ using \underline{public randomness} (this is doable without either player knowing $H$ by part $(i)$ of~\Cref{obs:distST}). 
		\item Alice$_\UR$ samples $\EinvA \sim \distST \mid H,\EinvB$ and $E_1 \sim \distST \mid H,\EinvA,\EinvB$ 
		using \underline{private randomness} (again, this is doable by part $(i)$ of~\Cref{obs:distST}). 
		\item Alice$_\UR$ has access to the first-round input of Alice$_\ST$ and Bob$_\ST$ and thus can generate $\Prot_{A1}$ and $\Prot_{B1}$ and send them to Bob$_\UR$. 
		\item Bob$_\UR$ has access to $E_3$ and can compute $\distribution{\sstar \mid \Prot_{A1},\Prot_{B1},E_3}$. 
	\end{enumerate}
	
	Let $\delta \in (0,1)$ be the parameter for which $\prot_{\UR}$ internal $\delta$-solves $\ureach$ on $\distUR$. As such,  
	\begin{align*}
		\delta &= \Ex_{\rProt_{\UR},\rE_B}{\tvd{\distribution{\rsstar \mid \Prot_{\UR},E_B}}{\distribution{\rsstar \mid E_B}}} \tag{by~\Cref{def:eps-ureach}}\\
		&= \Ex_{\rProt_{A1},\rProt_{B1},\rEinvB,\rE_B}{\tvd{\distribution{\rsstar \mid \Prot_{A1},\Prot_{B1},\EinvB,E_B}}{\distribution{\rsstar \mid E_B}}} 
		\tag{as the public randomness in $\prot_{\UR}$ is $\EinvB$ and the message is $\Prot_{A1},\Prot_{B1}$} \\
		&= \Ex_{\rZ_1}{\tvd{\distribution{\rsstar \mid Z_1}}{\distribution{\rsstar \mid E_B}}} \tag{as $Z_1 = (\Prot_{A1},\Prot_{B1},E_{B},\Einv_{B})$} \\
		&= \text{LHS in~\Cref{clm:first-round-first}}.
	\end{align*}
	On the other hand, since 
	\begin{align*}
	\CC{\protUR}{} \leq \CC{\protST}{} = o(\eps^2 \cdot b^2) = o(\eps^2 \cdot n \cdot b) \tag{by~\Cref{eq:CCpST} and since $b < n$}, 
	\end{align*}
	by~\Cref{thm:ureach}, we should have that $\delta = o(\eps)$, hence proving the claim. 
\end{proof}

\begin{claim}\label{clm:first-round-second}
	$\Ex_{\rZ_1}\Ex_{\rsstar\mid\rZ_1} {\tvd{\distribution{\rtstar \mid \rZ_1,\sstar}}{\distribution{\rtstar \mid \EinvB}}} = o(\eps).$
\end{claim}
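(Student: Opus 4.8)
The plan is to mirror the proof of~\Cref{clm:first-round-first}, reducing this time from the \emph{inverse} unique-reach problem $\ureachinv$ on $\distURinv$, with one extra step to absorb the conditioning on $\sstar$. First I would record the observation already used when deriving~\Cref{eq:fr-final}: since $E_3 = E_B\cup\EinvB$ and $H\perp\Hinv$ in $\distST$, while $\tstar,\EinvB$ depend only on $\Hinv$ and $\sstar,E_B$ only on $H$, we have $\distribution{\rtstar \mid E_3,\sstar} = \distribution{\rtstar \mid \EinvB}$. Consequently the right-hand side of the claim is exactly the reference distribution in the definition of internal $\delta$-solving of $\ureachinv$ over $\distURinv$ (\Cref{def:eps-ureach} transported to the inverse problem), and writing $A$ for the claim's left-hand side, the whole task is to manufacture a low-communication one-way protocol for $\ureachinv$ whose internal-solving parameter dominates $A$.

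Next I would build that protocol $\prot_{\ureachinv}$ between an Alice holding $\EinvA$ and a Bob holding $\EinvB$, just as in~\Cref{clm:first-round-first} but with the roles of $H$ and $\Hinv$ interchanged: using public randomness the players sample $H\sim\distUR$ and $E_1$ from its (independent) marginal in $\distST$, which, together with their inputs, reconstitutes $(E_1,H,\Hinv)\sim\distST$ without anyone learning $\Hinv$; both players now know $E_A,E_B,\sstar$ and $\Prot_{A1}$, and Alice, who additionally holds $\EinvA$, can form Bob$_\ST$'s input $E_2 = E_A\cup\EinvA$, compute $\Prot_{B1}$, and send $(\Prot_{A1},\Prot_{B1})$ to Bob. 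Only the first round of $\protST$ is simulated, so $\CC{\prot_{\ureachinv}}{} \le \CC{\protST}{} = o(\eps^2 b^2) = o(\eps^2 n b)$ since $b < n$; hence~\Cref{thm:ureach} --- which holds verbatim for $\ureachinv$ and $\distURinv$ by reversing edge directions --- forces the internal-solving parameter $\delta$ of $\prot_{\ureachinv}$ to satisfy $\delta = o(\eps)$. Note the transcript-plus-public-randomness of $\prot_{\ureachinv}$ is $\Prot_{\ureachinv} = (H,E_1,\Prot_{A1},\Prot_{B1})$.

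The remaining step --- the only departure from~\Cref{clm:first-round-first}, and the part I expect to be the real obstacle --- is relating $A$ to $\delta$. In~\Cref{clm:first-round-first} the analogous pair equalled $Z_1$ on the nose, yielding a clean distributional identity; here that fails, because $(\Prot_{\ureachinv},\EinvB) = (H,E_1,\Prot_{A1},\Prot_{B1},\EinvB)$ also pins down $E_A$, and $\Prot_{B1}$ --- a function of $E_A\cup\EinvA$ --- couples $E_A$ with $\EinvA$ and hence with $\tstar$, so exposing $E_A$ genuinely changes the posterior of $\tstar$. I would get around this by observing that $(Z_1,\sstar) = (\Prot_{A1},\Prot_{B1},E_B,\EinvB,\sstar)$ is nonetheless a \emph{deterministic} coarsening of $(\Prot_{\ureachinv},\EinvB)$ (as $\sstar,E_B$ are functions of $H$ and $\Prot_{A1}$ a function of $E_1$), while the reference $\distribution{\rtstar\mid\EinvB}$ is measurable with respect to both; by the tower rule, $\distribution{\rtstar\mid Z_1,\sstar}$ is then the average of $\distribution{\rtstar\mid\Prot_{\ureachinv},\EinvB}$ over the fibre of the coarsening above $(Z_1,\sstar)$, and convexity of the total variation distance gives
\[
\tvd{\distribution{\rtstar\mid Z_1,\sstar}}{\distribution{\rtstar\mid\EinvB}} \;\le\; \Ex_{(\rProt_{\ureachinv},\rEinvB)\,\mid\, Z_1,\sstar}{\tvd{\distribution{\rtstar\mid\Prot_{\ureachinv},\EinvB}}{\distribution{\rtstar\mid\EinvB}}}.
\]
Taking expectation over $(Z_1,\sstar)$ turns the right-hand side into exactly $\delta$, so $A \le \delta = o(\eps)$. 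Everything besides this coarsening/convexity manoeuvre is a verbatim re-run of the reduction behind~\Cref{clm:first-round-first}.
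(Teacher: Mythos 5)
Your proposal is correct, and it follows the same overall reduction as the paper: simulate the first round of $\protST$ inside a one-way protocol for $\ureachinv$ on $\distURinv$ and invoke \Cref{thm:ureach}. The one place where you diverge is exactly the point you flagged, the handling of the conditioning on $\sstar$. The paper keeps $E_A$ and $E_1$ as \emph{private} randomness of Alice$_\UR$ and instead has her append the vertex $\sstar$ to her message; then the transcript-plus-public-randomness is precisely $(\Prot_{A1},\Prot_{B1},E_B,\sstar)$, the conditioning set matches $(Z_1,\sstar)$ on the nose, and $\delta$ equals the claim's left-hand side with no further work. The price is an extra $O(\log n)$ bits of communication, which is exactly why the hypothesis $\eps> n^{-1/2}$ (i.e.\ $\eps^2 nb \gg \log n$) appears in \Cref{thm:streach}. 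You instead make all of $H$ and $E_1$ public, which gives Bob$_\UR$ the vertex $\sstar$ for free but enlarges the conditioning set beyond $(Z_1,\sstar)$; your tower-rule/convexity step correctly shows the claim's left-hand side is dominated by the resulting (larger) internal-solving parameter $\delta$, and $\delta=o(\eps)$ still follows since the communication is unchanged. Both routes are sound; yours trades the $O(\log n)$ overhead for the coarsening argument, and the coarsening inequality you use (total variation of a mixture against a fixed reference is at most the mixture of total variations) is valid.
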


\begin{proof}
	The proof of this claim is  similar to that of~\Cref{clm:first-round-first} with some minor (yet crucial) changes.

	Consider the following \emph{one-way} protocol $\prot_{\UR}$ for instances $\Hinv := (U_3 \sqcup U_2 \sqcup U_1 \sqcup \set{t}, \Einv_A \sqcup \Einv_{B})$ of $\overleftarrow{\ureach}$ sampled from $\distURinv$ between the two players Alice$_\UR$ and Bob$_\UR$: 
	\begin{enumerate}[label=$(\roman*)$]
		\item Alice$_\UR$ and Bob$_\UR$ sample $E_B \sim \distST \mid \Hinv$ using \underline{public randomness} (this is doable without either player knowing $\Hinv$ by part $(i)$ of~\Cref{obs:distST}). 
		\item Alice$_\UR$ samples $E_A \sim \distST \mid \Hinv,E_B$ and $E_1 \sim \distST \mid \Hinv,E_A,E_B$ 
		using \underline{private randomness} (again, this is doable by part $(i)$ of~\Cref{obs:distST}). 
		\item Alice$_\UR$ has access to the first-round input of Alice$_\ST$ and Bob$_\ST$ and thus can generate $\Prot_{A1}$ and $\Prot_{B1}$ and send them to Bob$_\UR$. Additionally, Alice$_\UR$ computes the vertex $\sstar$ and sends that also to 
		Bob$_\UR$ (this is doable as Alice$_\ST$ has full information about $E_A \cup E_B$). 
		\item Bob$_\UR$ has access to $E_3$, and the vertex $\sstar$ and can thus compute $\distribution{\tstar \mid \Prot_{A1},\Prot_{B1},E_3,\sstar}$. 
	\end{enumerate}
	
	Let $\delta \in (0,1)$ be the parameter for which $\prot_{\UR}$ internal $\delta$-solves $\overleftarrow{\ureach}$ on $\distURinv$. As such, 
	\begin{align*}
		\delta &= \Ex_{\rProt_{\UR},\rEinvB}{\tvd{\distribution{\rtstar \mid \Prot_{\UR},\EinvB}}{\distribution{\rtstar \mid \EinvB}}} \tag{by~\Cref{def:eps-ureach}}\\
		&= \Ex_{\rProt_{A1},\rProt_{B1},\rsstar,\rE_B, \rEinvB}{\tvd{\distribution{\rtstar \mid \Prot_{A1},\Prot_{B1},E_B,\sstar,\EinvB}}{\distribution{\rtstar \mid \EinvB}}} 
		\tag{as the public randomness in $\prot_{\UR}$ is $E_B$ and the message is $\Prot_{A1},\Prot_{B1},\sstar$} \\
		&= \Ex_{\rZ_1,\rsstar}{\tvd{\distribution{\rtstar \mid Z_1,\sstar}}{\distribution{\rtstar \mid \EinvB}}} \tag{as $Z_1 = (\Prot_{A1},\Prot_{B1},E_{B},\Einv_{B})$}\\
		&= \text{LHS in \Cref{clm:first-round-second}}.
	\end{align*}
	On the other hand, since the vertex $\sstar$ communicated by Alice$_\UR$ requires additional $O(\log{n})$ bits on top of the message of $\protST$, we have, 
	\begin{align*}
	\CC{\protUR}{} \leq \CC{\protST}{} + O(\log{n}) = o(\eps^2 \cdot b^2) + O(\log{n}) = o(\eps^2 \cdot n \cdot b) \tag{by~\Cref{eq:CCpST} and since $b < n$ and $\eps > n^{-1/2}$ and so $\eps^2 \cdot n \cdot b \gg \log{n}$}. 
	\end{align*}
	By~\Cref{thm:ureach}, this implies that $\delta = o(\eps)$,  hence proving the claim. 
\end{proof}

\Cref{lem:first-round} now follows from plugging in the bounds in~\Cref{clm:first-round-first,clm:first-round-second} in~\Cref{eq:fr-final}.

\subsection{Part Two: The Second Round of Communication}

\Cref{lem:first-round} implies that the extra information $Z_1$ available to Alice at the beginning of the second round does not change the distribution of $(\sstar,\tstar)$ by much. 
We  use this to show that the message of Alice in the second round does not change the distribution of $E_1(\sstar,\tstar) \in \set{0,1}$ by much.

\begin{lemma}\label{lem:second-round}
$\Ex_{\rZ_1,\rZ_2}{\tvd{\distribution{\rE_1(\sstar,\tstar) \mid Z_1,Z_2}}{\distribution{\rE_1(\sstar,\tstar)}}} = o(\eps).$
\end{lemma}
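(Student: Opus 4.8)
The plan is to reduce \Cref{lem:second-round} to two facts. First, conditioned on any value of $E_3$, the block of bits $\set{\rE_1(v,u) : (v,u)\in\Vstar_3\times\Ustar_3}$ consists of $b^2$ mutually independent unbiased bits (because $\rE_1$ is a product of unbiased bits and is independent of $\rE_3$ by part~$(i)$ of~\Cref{obs:distST}), whereas the whole transcript $\Prot$ has only $\CC{\protST}{}=o(\eps^2 b^2)$ bits; hence $\Prot$ carries, averaged over a \emph{uniformly random} entry of this block, only $o(\eps^2)$ bits of information about that entry. Second, by~\Cref{eq:ind1} and~\Cref{lem:first-round}, conditioned on the transcript the pair $(\sstar,\tstar)$ is still $o(\eps)$-close to uniform over $\Vstar_3\times\Ustar_3$, so a uniformly random entry of the block faithfully stands in for the entry at $(\sstar,\tstar)$. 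The property~\Cref{eq:ind2} is what additionally lets us remove $(\sstar,\tstar)$ from the conditioning on the bit $\rE_1(\sstar,\tstar)$ itself.

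Concretely, I would first note that $\distribution{\rE_1(\sstar,\tstar)}$ is the uniform distribution on $\set{0,1}$ (since $(\sstar,\tstar)$ is a function of $H,\Hinv$, which are independent of $E_1$), and that $(Z_1,Z_2)$ and $(\Prot,E_3,\sstar,\tstar)$ determine each other. Using~\Cref{eq:ind2}, which (since $Z_1$ together with $\Prot_{A2}$ is exactly $(\Prot,E_3)$) reads $\rE_1\perp\rsstar,\rtstar\mid\Prot,E_3$, so that conditioning additionally on $(\sstar,\tstar)$ does not change the law of $\rE_1(\sstar,\tstar)$ given $(\Prot,E_3)$, one can rewrite the left-hand side of the lemma as
\[
	\Ex_{\rProt,\rE_3}\;\Ex_{(\sstar,\tstar)\mid\Prot,E_3}\bracket{\tvd{\distribution{\rE_1(\sstar,\tstar)\mid\Prot,E_3}}{\distribution{\rE_1(\sstar,\tstar)}}}.
\]
Now~\Cref{eq:ind1} gives $\distribution{\rsstar,\rtstar\mid\Prot,E_3}=\distribution{\rsstar,\rtstar\mid Z_1}$, and~\Cref{lem:first-round} says this is $o(\eps)$-close, on average over $Z_1$, to $\distribution{\rsstar,\rtstar\mid E_3}$, which by~\Cref{thm:ureach} (and~\Cref{obs:distST}) is uniform over $\Vstar_3\times\Ustar_3$; since the integrand lies in $[0,1]$, swapping the law of $(\sstar,\tstar)$ for this uniform law costs at most $o(\eps)$. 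It therefore remains to prove
\[
	\Ex_{\rProt,\rE_3}\;\Ex_{(v,u)\sim\unif(\Vstar_3\times\Ustar_3)}\bracket{\tvd{\distribution{\rE_1(v,u)\mid\Prot,E_3}}{\distribution{\rE_1(v,u)}}}=o(\eps).
\]

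This last estimate I would establish by Pinsker's inequality together with the superadditivity of mutual information. By Pinsker applied to each $(v,u)$ and then Jensen (concavity of $\sqrt{\cdot}$), the displayed quantity is at most $\sqrt{\tfrac12\,\Delta}$, where $\Delta:=\Ex_{\rE_3}\bracket{\tfrac1{b^2}\sum_{(v,u)\in\Vstar_3\times\Ustar_3}\mi{\rE_1(v,u)}{\rProt\mid E_3}}$ (using that $\distribution{\rE_1(v,u)\mid E_3}$ is the uniform $\set{0,1}$ law, so that its KL-divergence from the posterior, averaged over $\Prot$, equals the conditional mutual information). For each fixed value of $E_3$ the bits $\rE_1(v,u)$ are mutually independent, so by~\Cref{prop:info-increase} the inner sum is at most $\mi{\rE_1(\Vstar_3,\Ustar_3)}{\rProt\mid E_3}\le\en{\rProt\mid E_3}\le\CC{\protST}{}$, since $\Prot$ takes at most $2^{\CC{\protST}{}}$ values; by~\Cref{eq:CCpST} this is $o(\eps^2 b^2)$, so $\Delta=o(\eps^2)$ and $\sqrt{\tfrac12\Delta}=o(\eps)$, finishing the proof. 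The only delicate point is the bookkeeping in the two decoupling steps above — using~\Cref{eq:ind1} and~\Cref{eq:ind2} to peel $(\sstar,\tstar)$ off both the transcript and the edge-bit — after which the quantitative core (an $o(\eps^2 b^2)$-bit message cannot bias a uniformly random one of $b^2$ fair coins) is essentially immediate.
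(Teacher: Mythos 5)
Your proposal is correct and follows essentially the same route as the paper's proof: decouple $(\sstar,\tstar)$ from the edge bit via \Cref{eq:ind2}, swap the posterior of $(\sstar,\tstar)$ for the uniform law over $\Vstar_3\times\Ustar_3$ using \Cref{eq:ind1} together with \Cref{lem:first-round}, and bound the resulting uniform average by Pinsker's inequality, Jensen, and a chain-rule/superadditivity argument giving $\mi{\rE_1(\Vstar_3,\Ustar_3)}{\rProt\mid E_3}\le\CC{\protST}{}=o(\eps^2 b^2)$. The paper packages the last step as its \Cref{lem:rx-ry} and performs the swap afterwards via \Cref{fact:tvd-small}, but this is only a difference in presentation order, not in substance.
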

As $Z_2 = (\Prot,E_3,\sstar,\tstar)$, $\Prot=(\Prot_{A1},\Prot_{B1},\Prot_{A2})$, and $E_3,\Prot_{A1},\Prot_{B1}$ are also part of $Z_1$, we have,  
	\begin{align}
		\text{LHS in~\Cref{lem:second-round}} &= \Ex_{\rZ_1,\rProt_{A2},\rsstar,\rtstar}{\tvd{\distribution{\rE_1(\sstar,\tstar) \mid Z_1,\Prot_{A2},\sstar,\tstar}}{\distribution{\rE_1(\sstar,\tstar)}}}  \notag \\
		&= \Ex_{\rZ_1,\rProt_{A2},\rsstar,\rtstar}{\tvd{\distribution{\rE_1(\sstar,\tstar) \mid Z_1,\Prot_{A2}}}{\distribution{\rE_1(\sstar,\tstar)}}} \tag{as $\rE_1(\sstar,\tstar) \perp \rsstar=\sstar,\rtstar=\tstar \mid Z_1,\Prot_{A2}$ by~\Cref{eq:ind2}} \\
		& = \Ex_{\rZ_1,\rProt_{A2}}\,\Ex_{\rsstar,\rtstar \mid Z_1,\Prot_{A2}}{\tvd{\distribution{\rE_1(\sstar,\tstar) \mid Z_1,\Prot_{A2}}}{\distribution{\rE_1(\sstar,\tstar)}}}. \label{eq:2nd-1}
	\end{align}

	Intuitively, by~\Cref{lem:first-round} and~\Cref{eq:ind1}, we know that $(\rsstar,\rtstar \mid Z_1,\Prot_{A2})$ is distributed almost uniformly over a set determined only by $E_3$ (on average over $Z_1,\Prot_{A2}$). Thus, in the following lemma, 
	we first check what happens to~\Cref{eq:2nd-1} if we switch this distribution to a truly uniform one, and then use this to conclude the proof.  
	\begin{lemma}\label{lem:rx-ry}
		Suppose $(\rx,\ry)$ is distributed as $(\rsstar,\rtstar) \mid \rE_3$, i.e., uniformly at random over $\Vstar_3 \times \Ustar_3$ for any $E_3$ (independent of the remaining variables). Then, 
		\begin{align*}
		\Ex_{\rZ_1,\rProt_{A2}}\,\Ex_{\rx,\ry \mid E_3}{\tvd{\distribution{\rE_1(x,y) \mid Z_1,\Prot_{A2}}}{\distribution{\rE_1(x,y)}}} = o(\eps). 
		\end{align*}
	\end{lemma}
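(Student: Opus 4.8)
The plan is to reduce Lemma~\ref{lem:rx-ry} to our one-way lower bound for $\ureach$ (\Cref{thm:ureach}), exactly in the style of the proofs of \Cref{clm:first-round-first,clm:first-round-second}, but this time extracting an \emph{Index}-type task on the $E_1$-edges rather than a $\ureach$-type task. The key point is that $(\rx,\ry)$ is now a \emph{fresh} pair, independent of everything else and uniform over $\Vstar_3\times\Ustar_3$ (which is determined by $E_3$), so revealing $(x,y)$ to a ``Bob'' at the end reveals only a single coordinate $E_1(x,y)$ of the bit-matrix $E_1(\Vstar_3,\Ustar_3)$. Since $\protST$ has communication $o(\eps^2 b^2)$ and $|\Vstar_3\times\Ustar_3| = b^2$, a standard Index-style averaging argument shows the second-round message $\Prot_{A2}$ cannot, on average over a uniformly random coordinate, shift the (originally uniform $\mathrm{Bernoulli}(1/2)$) distribution of that coordinate by more than $o(\eps)$ in total variation.

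\textbf{Step 1 (set up the distribution of $E_1$ conditioned on $Z_1,\Prot_{A2}$).} First I would record that, conditioned on $Z_1=(\Prot_{A1},\Prot_{B1},E_3)$, the matrix $E_1(\Vstar_3,\Ustar_3)\in\{0,1\}^{\Vstar_3\times\Ustar_3}$ is still a product of independent fair coins: this is because $\rE_1\perp (\rProt_{A1},\rProt_{B1})\mid \rE_3$ (by the rectangle property and part $(i)$ of \Cref{obs:distST}, essentially the computation already done in \Cref{eq:ind2}'s proof), and $E_3$ is independent of $E_1$ entirely. Hence $\distribution{\rE_1(x,y)\mid Z_1} = \mathrm{Bernoulli}(1/2) = \distribution{\rE_1(x,y)}$ for every fixed $(x,y)\in \Vstar_3\times\Ustar_3$. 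So the only thing that can perturb the distribution is $\Prot_{A2}$, and what we must bound is how much information $\Prot_{A2}$ carries about a \emph{random} entry of this $b\times b$ Boolean matrix.

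\textbf{Step 2 (the reduction / information argument).} Using Pinsker's inequality and convexity, I would upper-bound the LHS of \Cref{lem:rx-ry} by
\begin{align*}
\Ex_{\rZ_1,\rProt_{A2},\rx,\ry}\sqrt{\tfrac12\kl{\distribution{\rE_1(x,y)\mid Z_1,\Prot_{A2}}}{\distribution{\rE_1(x,y)\mid Z_1}}}
\le \sqrt{\tfrac12\,\Ex_{\rx,\ry}\,\mi{\rE_1(x,y)}{\rProt_{A2}\mid \rZ_1}},
\end{align*}
and then bound the averaged mutual information. Since $(x,y)$ is uniform over the $b^2$ pairs and the entries $E_1(v,u)$ for $(v,u)\in \Vstar_3\times\Ustar_3$ are independent given $Z_1$, the chain rule gives $\sum_{(v,u)}\mi{\rE_1(v,u)}{\rProt_{A2}\mid \rZ_1, \rE_1^{<(v,u)}} \le \mi{\rE_1(\Vstar_3,\Ustar_3)}{\rProt_{A2}\mid \rZ_1}\le \HH(\rProt_{A2}\mid \rZ_1)\le \CC{\protST}{}$; by independence of the coins this also dominates $\sum_{(v,u)}\mi{\rE_1(v,u)}{\rProt_{A2}\mid\rZ_1}$ up to the usual super-additivity fact (\Cref{prop:info-increase} / \itfacts{chain-rule}), so the average over $(x,y)$ is at most $\CC{\protST}{}/b^2 = o(\eps^2)$ by \Cref{eq:CCpST}. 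Plugging in gives $\sqrt{\tfrac12\cdot o(\eps^2)} = o(\eps)$, as desired. (One can equivalently phrase Step~2 as an explicit reduction to the \emph{Index} problem: Alice$_{\mathrm{Idx}}$ holds $E_1(\Vstar_3,\Ustar_3)$, Bob$_{\mathrm{Idx}}$ holds the index $(x,y)$ drawn after the protocol, and $\Prot_{A2}$ is Alice's message; but the information-theoretic version above is cleaner and avoids re-deriving an Index lower bound.)

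\textbf{The main obstacle} I anticipate is Step~1, specifically being careful that conditioning on $Z_1,\Prot_{A2}$ keeps the relevant coordinates of $E_1$ a genuine product distribution — $\Prot_{A2}$ depends on $E_3$ and on Alice's view, and Alice's second-round view \emph{does} include $E_1$, so $E_1(\Vstar_3,\Ustar_3)$ is \emph{not} independent of $\Prot_{A2}$; it is only independent of $Z_1$ alone. The resolution is exactly that we do \emph{not} need independence from $\Prot_{A2}$: we keep $\Prot_{A2}$ on the conditioning side and merely need (a) $\distribution{\rE_1(x,y)\mid Z_1}$ is the fixed fair coin (true), and (b) the entries are independent given $Z_1$ so that the per-coordinate mutual informations with $\Prot_{A2}$ sum to at most $\HH(\rProt_{A2}\mid Z_1)$. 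The other mild subtlety is that $\Vstar_3,\Ustar_3$ are themselves functions of $E_3\subseteq Z_1$, so ``$\rx,\ry$ uniform over $\Vstar_3\times\Ustar_3$'' is well-defined given $Z_1$ and the sum over $(v,u)$ is over a set of fixed size $b^2$ once $E_3$ is fixed — this is what makes the $1/b^2$ averaging factor legitimate.
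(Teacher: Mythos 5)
Your overall architecture (Pinsker, Jensen, a direct-sum/chain-rule argument over the $b^2$ coordinates of $E_1(\Vstar_3,\Ustar_3)$, yielding a bound of $\CC{\protST}{}/b^2 = o(\eps^2)$ inside the square root) is exactly the paper's, but your Step~1 contains a genuine error that propagates into Step~2. You claim that $\rE_1 \perp (\rProt_{A1},\rProt_{B1}) \mid \rE_3$, hence that $\distribution{\rE_1(x,y)\mid Z_1}$ is still a fair coin and that the entries of $E_1(\Vstar_3,\Ustar_3)$ remain independent given $Z_1$. This is false: $\Prot_{A1}$ is Alice's \emph{first-round} message, computed from her input $E_1$, so conditioning on $Z_1 = (\Prot_{A1},\Prot_{B1},E_3)$ can bias and correlate the entries of $E_1$ arbitrarily (e.g., $\Prot_{A1}$ could literally list $o(\eps^2 b^2)$ entries of $E_1$, fixing those coordinates, or be a parity, destroying independence). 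The rectangle property you invoke (as in the proof of \cref{eq:ind2}) gives $\rE_1 \perp \rE_2 \mid \rProt_{A1},\rProt_{B1},\rE_3$, not $\rE_1 \perp \rProt_{A1}$. You correctly worried about $\Prot_{A2}$ depending on $E_1$, but overlooked that $\Prot_{A1}$ does too. Consequently your replacement of the lemma's reference distribution $\distribution{\rE_1(x,y)}$ by $\distribution{\rE_1(x,y)\mid Z_1}$ is unjustified, the identification of the averaged KL with $\mi{\rE_1(x,y)}{\rProt_{A2}\mid\rZ_1}$ bounds the wrong quantity, and the application of super-additivity (\cref{prop:info-increase}) requires $\rE_1(v,u)\perp \rE_1^{<(v,u)}\mid \rZ_1$, which fails.

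The fix is essentially what the paper does: keep the unconditional prior $\distribution{\rE_1(x,y)}$ on the right of the KL, so that the averaged KL becomes the \emph{unconditional} mutual information $\mi{\rE_1(\rx,\ry)}{\rZ_1,\rProt_{A2},\rx,\ry} = \mi{\rE_1(\rx,\ry)}{\rProt,\rE_3,\rx,\ry}$ with $\rProt=(\rProt_{A1},\rProt_{B1},\rProt_{A2})$ the \emph{entire} transcript; kill the $\mi{\rE_1(\rx,\ry)}{\rE_3,\rx,\ry}$ term using $\rE_1\perp\rE_3$; and perform the direct-sum step conditioning only on $E_3$ (where the coins genuinely are i.i.d., so $\rE_1(x,y)\perp\rE_1^{<(x,y)}\mid E_3$ holds), arriving at $\frac{1}{b^2}\mi{\rE_1(\Vstar_3,\Ustar_3)}{\rProt\mid E_3}\le \frac{1}{b^2}\CC{\protST}{}$. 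This charges the information leaked by $\Prot_{A1}$ (and $\Prot_{B1},\Prot_{A2}$) about $E_1$ against the total communication, which is what your version omits. Alternatively, you could salvage your decomposition by adding a triangle-inequality term $\Ex\bracket{\tvd{\distribution{\rE_1(x,y)\mid Z_1}}{\distribution{\rE_1(x,y)}}}$ and bounding it by the same direct-sum argument applied to $(\Prot_{A1},\Prot_{B1})$ given $E_3$, but the single unconditional computation is cleaner.
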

	\begin{proof}
	We first have, 
	\begin{align*}
		\text{LHS in~\Cref{lem:rx-ry}} &= \Ex_{\rZ_1,\rProt_{A2}}\,\Ex_{\rx,\ry \mid E_3}{\tvd{\distribution{\rE_1(x,y) \mid Z_1,\Prot_{A2},x,y}}{\distribution{\rE_1(x,y)}}} \tag{as $\rE_1(x,y) \perp \rx=x,\ry=y \mid Z_1,\Prot_{A2}$ by the definition of $\rx,\ry$} \\
		&\leq \Ex_{\rZ_1,\rProt_{A2}}\,\Ex_{\rx,\ry \mid E_3}\bracket{\sqrt{\frac{1}{2} \cdot \kl{\distribution{\rE_1(x,y) \mid Z_1,\Prot_{A2},x,y}}{\distribution{\rE_1(x,y)}}}} 
		 \tag{by Pinsker's inequality~(\Cref{fact:pinskers})}\\
		&\leq  \sqrt{\frac{1}{2} \cdot \Ex_{\rZ_1,\rProt_{A2},\rx,\ry}\bracket{\kl{\distribution{\rE_1(x,y) \mid Z_1,\Prot_{A2},x,y}}{\distribution{\rE_1(x,y)}}}} 
		\tag{by concavity of $\sqrt{\cdot}$ and Jensen's inequality} \\
		&\qquad \qquad = \sqrt{\frac{1}{2} \cdot \mi{\rE_1(\rx,\ry)}{\rZ_1,\rProt_{A2},\rx,\ry}} \tag{by~\Cref{fact:kl-info}}.
	\end{align*}
	We now bound the mutual information term in the RHS above. Recall that $\Prot=(\Prot_{A1},\Prot_{B1},\Prot_{A2})$ and $Z_1 = (\Prot_{A1},\Prot_{B1},E_3)$. We have, 
	\begin{align*}
		\mi{\rE_1(\rx,\ry)}{\rZ_1,\rProt_{A2},\rx,\ry} &= \mi{\rE_1(\rx,\ry)}{\rProt,\rE_3,\rx,\ry} \\ 
		&=\mi{\rE_1(\rx,\ry)}{\rE_3,\rx,\ry} + \mi{\rE_1(\rx,\ry)}{\rProt \mid \rE_3,\rx,\ry} \tag{by the chain rule of mutual information (\itfacts{chain-rule})} \\
		&= \mi{\rE_1(\rx,\ry)}{\rProt \mid \rE_3,\rx,\ry}  \tag{as the first term above is zero by~\itfacts{info-zero} since $\rE_1(\rx,\ry) \perp \rE_3,\rx,\ry$ by~\Cref{obs:distST}} \\
		&= \Ex_{\rE_3} \bracket{\sum_{(x,y) \in \Vstar_3 \times \Ustar_3} \frac{1}{\card{\Vstar_3 \times \Ustar_3}} \cdot \mi{\rE_1(x,y)}{\rProt \mid E_3,x,y}} 
		\tag{as $\distribution{\rx,\ry \mid E_3}$ is uniform over $\Vstar_3 \times \Ustar_3$} \\
		&= \frac{1}{b^2} \cdot \Ex_{\rE_3} \bracket{\sum_{(x,y) \in \Vstar_3 \times \Ustar_3} \mi{\rE_1(x,y)}{\rProt \mid E_3,x,y}} \tag{by part $(i)$ of~\Cref{thm:ureach}, size of $\Vstar_3$ and $\Ustar_3$ is $b$} \\
		&= \frac{1}{b^2} \cdot \Ex_{\rE_3} \bracket{\sum_{(x,y) \in \Vstar_3 \times \Ustar_3} \mi{\rE_1(x,y)}{\rProt\mid E_3}}  \tag{as $\rx,\ry \perp \rE_1(x,y),\rProt \mid E_3$ } \\
		&\leq \frac{1}{b^2} \cdot \Ex_{\rE_3} \bracket{\sum_{(x,y) \in \Vstar_3 \times \Ustar_3}\mi{\rE_1(x,y)}{\rProt \mid \rE_1^{<(x,y)},E_3}}  
		\tag{by~\Cref{prop:info-increase} as $\rE_1(x,y) \perp \rE_1^{<(x,y)} \mid E_3$} \\
		&= \frac{1}{b^2} \cdot \Ex_{\rE_3} \bracket{\mi{\rE_1(\Vstar_3,\Ustar_3)}{\rProt \mid E_3}}  \tag{by the chain rule of mutual information (\itfacts{chain-rule})}\\
		&\leq \frac{1}{b^2} \cdot \log{\card{\supp{\rProt}}} \tag{by \Cref{fact:info-cc}} \\
		&= \frac{1}{b^2} \cdot \CC{\protST}{} \tag{as $\protST$ is a deterministic protocol and thus $\rProt$ only contains its message}. 
	\end{align*}
	Since $\CC{\protST}{} = o(\eps^2 \cdot b^2)$ by our assumption in~\Cref{eq:CCpST}, we obtain the proof of the lemma. 
\end{proof}

We can now conclude the proof of~\Cref{lem:second-round}. For any choice of $E_3$, define $\rx,\ry$ as random variables distributed as $(\rsstar,\rtstar) \mid E_3$ (as in the statement of~\Cref{lem:rx-ry}). We have, 
\begin{align*}
	\text{LHS in~\Cref{lem:second-round}} &= \Ex_{\rZ_1,\rProt_{A2}}\,\Ex_{\rsstar,\rtstar \mid Z_1,\Prot_{A2}}{\tvd{\distribution{\rE_1(\sstar,\tstar) \mid Z_1,\Prot_{A2}}}{\distribution{\rE_1(\sstar,\tstar)}}} \tag{by~\Cref{eq:2nd-1}} \\
	&\leq \Ex_{\rZ_1,\rProt_{A2}}\,\Ex_{\rx,\ry \mid Z_1,\Prot_{A2}}{\tvd{\distribution{\rE_1(x,y) \mid Z_1,\Prot_{A2}}}{\distribution{\rE_1(x,y)}}} \\
	&\hspace{1.6cm}+ \Ex_{\rZ_1,\rProt_{A2}}{\tvd{\distribution{\rsstar,\rtstar \mid Z_1,\Prot_{A2}}}{\distribution{\rx,\ry \mid Z_1,\Prot_{A2}}}} \tag{by~\Cref{fact:tvd-small}} \\
	&= \Ex_{\rZ_1,\rProt_{A2}}\,\Ex_{\rx,\ry \mid E_3}{\tvd{\distribution{\rE_1(x,y) \mid Z_1,\Prot_{A2}}}{\distribution{\rE_1(x,y)}}} \\
	&\hspace{1.6cm}+ \Ex_{\rZ_1,\rProt_{A2}}{\tvd{\distribution{\rsstar,\rtstar \mid Z_1,\Prot_{A2}}}{\distribution{\rx,\ry \mid E_3}}} \tag{as $\rx,\ry \perp \Prot \mid E_3$} \\
	&= o(\eps) + \Ex_{\rZ_1,\rProt_{A2}}{\tvd{\distribution{\rsstar,\rtstar \mid Z_1,\Prot_{A2}}}{\distribution{\rsstar,\rtstar \mid E_3}}} \tag{by~\Cref{lem:rx-ry} for the first term, and since $(\rx,\ry) \mid E_3$ is distributed the same as $(\rsstar,\rtstar)\mid E_3$} \\
	&= o(\eps) + \Ex_{\rZ_1}{\tvd{\distribution{\rsstar,\rtstar \mid Z_1}}{\distribution{\rsstar,\rtstar \mid E_3}}} \tag{as $\rsstar,\rtstar \perp \rProt_{A2}=\Prot_{A2} \mid Z_1$ by~\Cref{eq:ind1}} \\
	&= o(\eps) \tag{by~\Cref{lem:first-round}}.
\end{align*}
This concludes the proof of \Cref{lem:second-round}.

\begin{remark}\label{rem:internal-external}
	This is a good place to  state the reason we work with the notion of internal $\eps$-solving instead of external $\eps$-solving in~\cite{AssadiCK19b}. Roughly speaking, we would like the distribution of $(\sstar,\tstar)$ to 
	look ``almost random'' \emph{to Alice} at the beginning of the second round, so that her second-round message does not reveal much information about $E_1(\sstar,\tstar)$. But this has to hold	
	\emph{despite the fact} that she has the entire set $E_3$ at the beginning of the second round (and is thus ``internal'' to the underlying communication problem for $\ureach$ and its inner $\SI$ instance). 
	
	This issue was handled differently in~\cite{AssadiCK19b} by working with \emph{four players} instead of two in a way that the players were no longer internal to the problem they needed to solve. However, such a fix does \emph{not} work for our purpose: 
	\emph{even} if we did not provide $E_3$ directly to Alice, Bob could have sent it to Alice using $O(n)$ communication which is negligible, and thus making Alice again internal to the communication problem (in the setting of~\cite{AssadiCK19b}, such an 
	approach would require $\Omega(n^2)$ communication which is no longer negligible).   
\end{remark}

\subsection{Concluding the Proof of~\Cref{thm:streach}}\label{sec:conc-streach}

We are now ready to conclude the proof of~\Cref{thm:streach}. \Cref{lem:second-round} implies that conditioning on $Z_1,Z_2$ does not change the distribution of $E_1(\sstar,\tstar)$ by much. By the independence property 
of~\Cref{eq:ind3}, we know that this continues to hold even if we further condition on the input of Bob, i.e., $E_2$. We use this to prove that the probability that $\protST$ outputs the correct answer is almost the same as random guessing. 

\begin{claim}\label{clm:protST-errs}
	$\Pr\paren{\protST\textnormal{ outputs the correct answer}} = \frac{1}{2} + o(\eps)$. 
\end{claim}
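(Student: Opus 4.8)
The goal is to show Bob's output $O$ is correct with probability only $\tfrac12 + o(\eps)$. Recall from part $(ii)$ of~\Cref{obs:distST} that the correct answer equals $E_1(\sstar,\tstar) \in \set{0,1}$, and that $\protST$ is deterministic so $O$ is a function of Bob's input $E_2$ together with $Z_2 = (\Prot, E_3, \sstar, \tstar)$ (everything Bob sees at the end). Thus
\[
\Pr\paren{\protST \textnormal{ correct}} = \Pr\paren{O = E_1(\sstar,\tstar)} = \Ex_{\rE_2,\rZ_1,\rZ_2}\bracket{\Pr\paren{E_1(\sstar,\tstar) = O(E_2,Z_2) \mid E_2, Z_1, Z_2}}.
\]
The key point is that, conditioned on $(E_2, Z_1, Z_2)$, the bit $O$ is \emph{fixed}, so the conditional probability of correctness is just $\Pr\paren{\rE_1(\sstar,\tstar) = o \mid E_2, Z_1, Z_2}$ for that fixed value $o$, which is at most $\tfrac12 + \tvd{\distribution{\rE_1(\sstar,\tstar) \mid E_2,Z_1,Z_2}}{\distribution{\rE_1(\sstar,\tstar)}} + \tvd{\distribution{\rE_1(\sstar,\tstar)}}{\mathcal U\set{0,1}}$, and the prior $\distribution{\rE_1(\sstar,\tstar)}$ is \emph{exactly} uniform on $\set{0,1}$ since each edge of $E_1$ is included independently with probability $\tfrac12$ in $\distST$ (and $\sstar,\tstar$ are independent of $E_1$). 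So the last term vanishes.

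\textbf{Key steps.} First, I would invoke~\Cref{eq:ind3}, namely $\rE_2 \perp \rE_1(\sstar,\tstar) \mid \rZ_1,\rZ_2$, to remove the conditioning on $E_2$: this gives $\distribution{\rE_1(\sstar,\tstar) \mid E_2, Z_1, Z_2} = \distribution{\rE_1(\sstar,\tstar) \mid Z_1, Z_2}$. Second, combine this with the bound from~\Cref{lem:second-round}, which says $\Ex_{\rZ_1,\rZ_2}{\tvd{\distribution{\rE_1(\sstar,\tstar) \mid Z_1,Z_2}}{\distribution{\rE_1(\sstar,\tstar)}}} = o(\eps)$. Third, assemble: for any fixing of $(E_2, Z_1, Z_2)$ with Bob's output $o \in \set{0,1}$,
\[
\Pr\paren{\rE_1(\sstar,\tstar) = o \mid E_2, Z_1, Z_2} \leq \frac{1}{2} + \tvd{\distribution{\rE_1(\sstar,\tstar) \mid Z_1,Z_2}}{\distribution{\rE_1(\sstar,\tstar)}},
\]
using that the marginal $\distribution{\rE_1(\sstar,\tstar)}$ is uniform on $\set 0,1$ so $\Pr\paren{\rE_1(\sstar,\tstar) = o} = \tfrac12$ exactly. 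Taking expectation over $(\rE_2,\rZ_1,\rZ_2)$ and applying~\Cref{lem:second-round} yields $\Pr\paren{\protST \textnormal{ correct}} \leq \tfrac12 + o(\eps)$, as desired (and trivially $\geq \tfrac12$ by guessing, though only the upper bound is needed).

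\textbf{Main obstacle.} There is no serious obstacle remaining — all the real work was done in~\Cref{lem:first-round} and~\Cref{lem:second-round}. The one place to be careful is the bookkeeping around conditioning: one must verify that Bob's output is genuinely a deterministic function of $(E_2, Z_2)$ and that $Z_2$ already contains $\Prot$ and $E_3$ (it does, by definition), so that conditioning on $(E_2,Z_1,Z_2)$ fixes the output; and one must use~\Cref{eq:ind3} in exactly the right direction. I would also double-check that $\distribution{\rE_1(\sstar,\tstar)}$ being uniform really uses only that $E_1$ is a product distribution of fair coins together with part $(i)$ of~\Cref{obs:distST} ($\rE_1 \perp \rsstar,\rtstar$), which it does. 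With this in hand the claim follows in a few lines.
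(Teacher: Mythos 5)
Your proposal is correct and follows essentially the same route as the paper: condition on everything Bob sees so that his output is fixed, bound the conditional success probability by $\tfrac12$ plus the total variation distance of $\distribution{\rE_1(\sstar,\tstar)\mid E_2,Z_1,Z_2}$ from the uniform prior, drop the conditioning on $E_2$ via~\Cref{eq:ind3}, and finish with~\Cref{lem:second-round}. The only cosmetic difference is that you insert an intermediate comparison to $\mathcal{U}\set{0,1}$ before observing the prior is exactly uniform, whereas the paper compares directly to the Bernoulli$(1/2)$ distribution and then identifies it with $\distribution{\rE_1(\sstar,\tstar)}$.
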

\begin{proof}
Recall that $O \in \set{0,1}$ is the output of the protocol by Bob which is a function of $E_2$ and $Z_1,Z_2$, namely, the input to Bob and the information revealed to him (either
through $E_3$ or the protocol). Recall that  by part~$(ii)$ of~\Cref{obs:distST}, $E_1(\sstar,\tstar)$ determines the correct answer. 
Let $\BB(1/2)$ denote the Bernoulli distribution with mean $1/2$. We have, 
\begin{align*}
	\Pr\paren{\protST\textnormal{ outputs correctly}} &= \EX_{\rZ_1,\rZ_2,\rE_2}\,\,\Pr_{x\sim \rE_1(\sstar,\tstar) \mid Z_1,Z_2,E_2}\paren{O=x} \tag{note that conditioning on $\rZ_1,\rZ_2,\rE_2$ fixes $\rO=O$ but not $\rE_1(\sstar,\tstar)$}\\
	&\leq \EX_{\rZ_1,\rZ_2,\rE_2}\,\,\bracket{\Pr_{x \sim \BB(1/2)}\paren{O=x} + \tvd{\distribution{\rE_1(\sstar,\tstar) \mid Z_1,Z_2,E_2}}{\BB{(1/2)}}} \tag{by~\Cref{fact:tvd-small}}\\
	&\leq \frac{1}{2} + \EX_{\rZ_1,\rZ_2,\rE_2}\,\tvd{\distribution{\rE_1(\sstar,\tstar) \mid Z_1,Z_2,E_2}}{\BB{(1/2)}} \tag{as $O$ is fixed and $x$ is chosen uniformly at random from $\set{0,1}$} \\
	&= \frac{1}{2} + \EX_{\rZ_1,\rZ_2}\,\tvd{\distribution{\rE_1(\sstar,\tstar) \mid Z_1,Z_2}}{\BB{(1/2)}} \tag{as $\rE_1(\sstar,\tstar) \perp \rE_2=E_2 \mid Z_1,Z_2$ by~\Cref{eq:ind3}} \\
	&= \frac{1}{2} + \EX_{\rZ_1,\rZ_2}\,\tvd{\distribution{\rE_1(\sstar,\tstar) \mid Z_1,Z_2}}{\distribution{\rE_1(\sstar,\tstar)}} \tag{by definition of $\rE_1$} \\
	&= \frac{1}{2} + o(\eps). \tag{by~\Cref{lem:second-round}}
\end{align*}
This finalizes the proof. 
\end{proof}

\medskip

To conclude, we have shown that for any deterministic protocol $\protST$ with $\CC{\protST}{} = o(\eps^2 \cdot b^2)$, the probability that $\protST$ outputs the correct answer over the distribution $\distST$ is only $\frac{1}{2} + o(\eps)$. 
This can be extended directly to randomized protocols as by an averaging argument, we can always fix the randomness of any randomized protocol $\protST$ on the distribution $\distST$ to obtain a deterministic protocol with the same 
error guarantee. Noting that $b = \frac{n}{2^{\Theta(\sqrt{\log{n}})}}$ concludes the proof of~\Cref{thm:streach}.


\section{Graph Streaming Lower Bounds}\label{sec:stream}

We now obtain our graph streaming lower bounds by reductions from the $\streach$ communication problem defined in~\Cref{sec:streach}. 
The first step of all these reductions is to show that one can simulate any two-pass graph streaming algorithm on graphs $G=(V,E)$ using a protocol in the setting of the $\streach$ problem. 
The proof is via a standard simulation and is only provided for completeness considering the setting of $\streach$ is rather non-standard. 
	
\begin{proposition}\label{prop:stream}
	Any two-pass $S$-space streaming algorithm $\alg$ on graphs $G=(V,E_1\sqcup E_2\sqcup E_3)$ of $\streach$ can be simulated exactly by a communication protocol $\prot_{\alg}$ with $\CC{\prot_{\alg}}{} = O(S)$ and the communication-pattern restrictions 
	of the $\streach$ problem. 
\end{proposition}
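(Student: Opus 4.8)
The plan is a standard round-by-round simulation in which the ``message passed between the players'' is always just the $S$-bit memory contents of $\alg$. Fix the stream presentation of $G$ to be $E_1 \circ E_2 \circ E_3$ in the first pass and again $E_1 \circ E_2 \circ E_3$ in the second pass; since the streaming model allows an adversarial edge ordering, a lower bound against this particular ordering still suffices. For a randomized $\alg$, use the public randomness of the protocol as $\alg$'s random tape so that both players run the very same algorithm; it then suffices to treat $\alg$ as deterministic, which I do below. Write $\sigma_0$ for $\alg$'s initial memory state and, for a state $\sigma$ and an edge set $F$ presented in a fixed order, write $\alg(\sigma,F)$ for the memory state after $\alg$ processes $F$ starting from $\sigma$.

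The protocol $\prot_{\alg}$ runs as follows. Alice computes $\sigma_1 := \alg(\sigma_0, E_1)$ and sends $\Prot_{A1} := \sigma_1$ to Bob. Bob computes $\sigma_2 := \alg(\sigma_1, E_2)$ and replies with $\Prot_{B1} := \sigma_2$. Now $E_3$ is revealed to both. Alice, who at this point knows $\sigma_2$, her own $E_1$, and $E_3$, finishes the first pass by computing $\sigma_3 := \alg(\sigma_2, E_3)$ and then performs the $E_1$-portion of the second pass, $\sigma_4 := \alg(\sigma_3, E_1)$, and sends $\Prot_{A2} := \sigma_4$ to Bob; note that $\Prot_{A2}$ legitimately depends on $E_3$, as the model permits. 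Finally Bob, who knows $\sigma_4$, his own $E_2$, and $E_3$, computes $\sigma_5 := \alg(\sigma_4, E_2)$ and $\sigma_6 := \alg(\sigma_5, E_3)$ — which is exactly the memory state of $\alg$ after both passes over the stream — and outputs whatever $\alg$ outputs from $\sigma_6$. This obeys the communication-pattern restrictions of $\streach$: the only messages are $\Prot_{A1}$ (Alice $\to$ Bob), then $\Prot_{B1}$ (Bob $\to$ Alice), then $\Prot_{A2}$ (Alice $\to$ Bob, after $E_3$ is public), and Bob's final output is a function of his input, the received messages, and $E_3$ only.

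By construction the sequence $\sigma_0, \sigma_1, \ldots, \sigma_6$ produced by $\prot_{\alg}$ coincides with the sequence of states $\alg$ would pass through on the stream $E_1 \circ E_2 \circ E_3 \circ E_1 \circ E_2 \circ E_3$, so $\prot_{\alg}$ outputs exactly what $\alg$ does (with the same error probability, over the shared randomness) — this is the claimed exact simulation. Each of the three messages is a single memory snapshot of $\alg$, hence of length at most $S$, so $\CC{\prot_{\alg}}{} = O(S)$.

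There is no genuinely hard step; the only point requiring care is matching the handoffs of $\alg$'s memory to the rigid message schedule of $\streach$. Concretely, in the first pass the state must travel Alice $\to$ Bob $\to$ Alice (so that Alice can incorporate $E_3$ once it becomes public), and in the second pass one further Alice $\to$ Bob handoff is needed before Bob reads off the answer — which is precisely the $\Prot_{A1}, \Prot_{B1}, \Prot_{A2}$ pattern. This is exactly why the (somewhat non-standard) communication model for $\streach$ was set up the way it was, and why the argument works for two passes but would not extend unchanged to more.
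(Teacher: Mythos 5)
Your simulation is correct and is essentially identical to the paper's own proof: the same stream ordering $E_1 \circ E_2 \circ E_3$, the same three memory-snapshot handoffs matching $\Prot_{A1},\Prot_{B1},\Prot_{A2}$, and the same observation that Alice can replay $E_1$ after $E_3$ becomes public. The extra remarks on using public randomness for $\alg$'s random tape are standard and only strengthen the write-up.
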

\begin{proof}	
	Given $G=(V,E_1 \sqcup E_2 \sqcup E_3)$ of $\streach$, we define the stream $\sigma = E_1 \circ E_2 \circ E_3$.  
	We use $M_{i,j}$ for $i \in \set{1,2}$ and $j \in \set{1,2,3}$ as the \emph{memory content} of $\alg$ after the $i$-th pass over the set of edges $E_j$ in the stream $\sigma$. We note that since $\alg$ is a streaming algorithm, $M_{i,j}$ is 
	only a function of $M_{i,j-1}$ and $E_j$ with the exception that $M_{2,1}$ is a function of $M_{1,3}$ and $E_1$. The answer is also uniquely determined by $M_{2,3}$. 
	
	We now give the protocol $\prot_{\alg}$ that allows Alice and Bob  to simulate two passes of $\alg$ on $\sigma$:
	\begin{enumerate}[label=$(\roman*)$]
		\item Given $E_1$, Alice runs $\alg$ on $E_1$ and sends $M_{1,1}$ as the message $\Prot_{A1}$. 
		\item Given $\Prot_{A1} = M_{1,1}$ and $E_2$, Bob continues running $\alg$ on $E_2$ and sends $M_{1,2}$  to Alice as $\Prot_{B1}$. 
		This finishes the first round of communication in $\prot_{\alg}$ (but not the first pass of $\alg$ yet). 
		\item Given $\Prot_{B1}=M_{1,2}$ and $E_3$ at the beginning of the second round, Alice continues running $\alg$ on $E_3$ which concludes the first pass of $\alg$. Considering Alice also has $E_1$, she then continues the second pass
		by running $\alg$ on $E_1$ for the second time and sending $M_{2,1}$ as $\Prot_{A2}$ to Bob. 
		\item Given $\Prot_{A2}=M_{2,1}$ and $E_3$ now and having $E_2$ from before, Bob can compute both $M_{2,2}$ and $M_{2,3}$ and outputs the same answer as $\alg$. 
	\end{enumerate}
	It is straightforward to verify that communication cost of this protocol is $O(S)$ and it outputs the same exact answer as $\alg$. 
\end{proof}

\subsection{Directed Reachability}\label{sec:reach}

We obtain the following theorem for the directed reachability problem. 

\begin{theorem}[Formalization of~\Cref{res:reach}]\label{thm:reach}
	Any streaming algorithm that makes two passes over the edges of any $n$-vertex directed graph $G=(V,E)$ with two designated vertices $s,t \in V$ and outputs whether or not $s$ can reach $t$ 
	in $G$ with probability at least $2/3$ requires $\Omega(\frac{n^2}{2^{\Theta(\sqrt{\log{n}})}})$ space. 
\end{theorem}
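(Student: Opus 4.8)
The plan is to obtain Theorem~\ref{thm:reach} as an immediate corollary of the communication lower bound of Theorem~\ref{thm:streach} combined with the streaming-to-communication simulation of Proposition~\ref{prop:stream}. The point is that an input to the $\streach$ communication problem is literally a directed graph $G = (V, E_1 \sqcup E_2 \sqcup E_3)$ with two designated vertices $s,t$ on which one must decide $s$-$t$ reachability; in fact, under the hard distribution $\distST$ this graph is a layered DAG, so it is a perfectly legitimate instance of the directed reachability problem. Since $V = \{s\} \sqcup V_1 \sqcup V_2 \sqcup V_3 \sqcup U_3 \sqcup U_2 \sqcup U_1 \sqcup \{t\}$ and each layer has at most $N$ vertices (where $N$ is the size parameter of the RS digraph of Proposition~\ref{prop:rs}), the total number of vertices is $\Theta(N)$. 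Writing $n$ for this number of vertices, we have $N = \Theta(n)$, and the quantity $n^2/2^{\Theta(\sqrt{\log n})}$ appearing in Theorem~\ref{thm:streach} is unchanged under this substitution (constant factors and the change of base are absorbed into the $\Theta(\sqrt{\log n})$ term).

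Concretely, let $\alg$ be a two-pass streaming algorithm using $S$ bits of space that decides directed $s$-$t$ reachability on $n$-vertex graphs with probability at least $2/3$. Run $\alg$ on the stream $E_1 \circ E_2 \circ E_3$ obtained from an instance of $\streach$. By Proposition~\ref{prop:stream}, $\alg$ is exactly simulated by a protocol $\prot_{\alg}$ that respects the communication-pattern restrictions of $\streach$ and has $\CC{\prot_{\alg}}{} = O(S)$; because the simulation is faithful, $\prot_{\alg}$ solves $\streach$ with probability at least $2/3 = \tfrac12 + \tfrac16$. Now apply Theorem~\ref{thm:streach} with the constant $\eps = 1/6$. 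For all sufficiently large $n$ we have $1/6 \in (n^{-1/2}, 1/2)$, so the theorem yields $\CC{\prot_{\alg}}{} = \Omega\!\left(\tfrac{1}{36}\cdot \tfrac{n^2}{2^{\Theta(\sqrt{\log n})}}\right) = \Omega\!\left(\tfrac{n^2}{2^{\Theta(\sqrt{\log n})}}\right)$, and therefore $S = \Omega\!\left(\tfrac{n^2}{2^{\Theta(\sqrt{\log n})}}\right)$. For the finitely many small values of $n$ the claimed bound is $O(1)$ and holds trivially.

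Since the argument is a pure reduction, there is no genuine obstacle — all of the difficulty (the new internal $\eps$-solving lower bound for set intersection, the RS-graph embedding underlying $\distUR$ and $\distURinv$, and the two-round information-theoretic analysis of $\streach$) is already encapsulated in Theorem~\ref{thm:streach}. The only points that require a line of care are verifying that the $\streach$ instances have $\Theta(n)$ vertices (so the parameter substitution is legitimate) and checking that the restriction $\eps > n^{-1/2}$ in Theorem~\ref{thm:streach} is harmless when $\eps$ is taken to be an absolute constant.
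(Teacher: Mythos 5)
Your proposal is correct and is exactly the paper's argument: Theorem~\ref{thm:reach} is obtained by simulating the two-pass algorithm via Proposition~\ref{prop:stream} and invoking Theorem~\ref{thm:streach} with a constant $\eps$ (the paper states this in one line). Your extra care about the vertex count being $\Theta(n)$ and the admissibility of $\eps=1/6$ is a harmless elaboration of the same reduction.
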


\Cref{thm:reach} follows immediately from~\Cref{prop:stream} and our  lower bound in~\Cref{thm:streach}. 

We also present some standard extension of this lower bound to other problems related to the directed reachability problem. 

\begin{itemize}
	\item \textbf{Estimating number of vertices reachable from a source:} Consider any instance of the problem in~\Cref{thm:reach} and connect $t$ to $2n$ \emph{new vertices}. In the new graph, if $s$ can reach $t$,  then it can also reach at least $2n$ other vertices, 
	while if $s$ does not reach $t$, it can reach at most $n$ other vertices. Hence, the lower bound in~\Cref{thm:reach} extends to this problem as well which was studied (in a similar format) in~\cite{HenzingerRR98}. 
	
	\item \textbf{Testing if $G$ is acyclic or not:} Recall that the hard  distribution of graphs in~\Cref{thm:streach} and hence~\Cref{thm:reach} is supported on \emph{acyclic} graphs. If in these graphs, we connect $t$ to $s$ directly, then the graph remains acyclic iff $s$ cannot reach $t$. Hence, the lower bound in~\Cref{thm:reach} extends to this problem as well. 
	
	\item \textbf{Approximating minimum feedback arc set:} The lower bound for acyclicity implies the same bounds for any (multiplicative) approximation algorithm of  minimum feedback arc set  (the minimum number of edges to be deleted to make a 
	graph acyclic) studied in~\cite{ChakrabartiG0V20}. 
\end{itemize}

\subsection{Bipartite Perfect Matching}\label{sec:matching}

We obtain the following theorem for the bipartite perfect matching problem. 

\begin{theorem}[Formalization of~\Cref{res:matching}]\label{thm:matching}
	Any streaming algorithm that makes two passes over the edges of any $n$-vertex undirected bipartite graph $G=(L \sqcup R,E)$ and outputs whether or not $G$ has a perfect matching	with probability at least $2/3$ requires $\Omega(\frac{n^2}{2^{\Theta(\sqrt{\log{n}})}})$ space. 
\end{theorem}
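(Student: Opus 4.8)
The plan is to reduce directed $s$-$t$ reachability (\Cref{thm:reach}) to bipartite perfect matching via the standard gadget reduction, carried out in a ``stream-local'' way so that the two-pass, small-space structure is preserved. Given an $n$-vertex directed graph $G=(V,E)$ with designated vertices $s,t$, I would build a bipartite graph $H=(L'\sqcup R', E')$ with $L' := V\setminus\set{t}$ and $R' := V\setminus\set{s}$ (so $\card{L'}=\card{R'}=\card{V}-1$), whose edge set consists of the ``diagonal'' edges $(v,v)$ for every $v\in V\setminus\set{s,t}$, together with one edge $(u,v)$ for each directed edge $(u,v)\in E$ (reading $u$ as a left vertex and $v$ as a right vertex).

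First I would establish the combinatorial core: $H$ has a perfect matching if and only if $s$ can reach $t$ in $G$. For the ``if'' direction, an $s$-$t$ path $s=v_0\to v_1 \to\cdots\to v_k=t$ yields the matching $\set{(v_{i-1},v_i) : i\in[k]}$ together with all diagonal edges $(w,w)$ for $w\notin\set{v_0,\ldots,v_k}$, and one checks that every left and right vertex is covered exactly once (using that $s$ appears only on the left and $t$ only on the right). For the ``only if'' direction, given a perfect matching $M$, start at $s$ on the left: it is matched to some $v_1$ with $(s,v_1)\in E$; if $v_1\neq t$, then its right copy is already matched to $s$, so its left copy must be matched through a graph edge $(v_1,v_2)\in E$ rather than through the diagonal; iterating produces a walk $s\to v_1\to v_2\to\cdots$ in $G$ whose vertices are necessarily distinct (since $M$ matches each vertex exactly once and $s$ never reappears), hence it is finite and can only terminate by reaching $t$, giving an $s$-$t$ path. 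Thus the yes/no answer is preserved exactly.

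Next I would observe that this reduction is implementable with no asymptotic blow-up in a streaming setting: from a stream of the edges of $G$ one can emit a stream of the edges of $H$ on the fly---first emit the $\card{V}-2$ diagonal edges (these depend only on $V$, which is known up front), then relabel each incoming edge $(u,v)$ as the edge $(u_{L'},v_{R'})$---using only $O(\log n)$ auxiliary memory and respecting the pass structure. Consequently, any two-pass $S$-space streaming algorithm that decides whether $H$ has a perfect matching with probability at least $2/3$ yields a two-pass $(S+O(\log n))$-space streaming algorithm deciding $s$-$t$ reachability in $G$ with probability at least $2/3$. Since $H$ has $N := 2(\card{V}-1)=O(n)$ vertices, \Cref{thm:reach} applied to $G$ gives $S+O(\log n) = \Omega\paren{\card{V}^2/2^{\Theta(\sqrt{\log \card{V}})}} = \Omega\paren{N^2/2^{\Theta(\sqrt{\log N})}}$, which is the claimed bound (one can pad $H$ with a few extra disjoint matched edges if an exact vertex count is desired).

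I expect no genuine obstacle: the only points needing care are (i) the ``only if'' direction of the matching characterization---specifically that the extracted walk cannot get stuck at a vertex other than $t$ and cannot cycle, both of which follow from $M$ being a perfect matching---and (ii) checking that the stream transformation truly uses negligible extra space and preserves the two-pass structure, so that \Cref{thm:reach} (equivalently \Cref{thm:streach} through \Cref{prop:stream}) applies verbatim.
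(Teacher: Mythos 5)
Your proposal is correct and is essentially the same folklore reduction the paper uses: the paper builds the identical bipartite gadget (diagonal edges $v^{\ell}$--$v^{r}$ plus one edge per directed arc, with $s$ only on the left and $t$ only on the right) and argues correctness via augmenting paths of the diagonal matching, which is equivalent to your direct walk-extraction argument. Both then invoke \Cref{thm:reach} after noting the reduction can be performed on the fly in the stream with negligible extra space.
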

\begin{proof}
	The proof is via a standard reduction from~\Cref{thm:reach}. Basically, we show that any algorithm for perfect matching problem can also be used to solve the $s$-$t$ reachability problem within the same asymptotic space. We note that this reduction is folklore
	and we claim no novelty in this part. 
	
	Consider a directed graph $H=(V_H,E_H)$ and two vertices $s,t \in V_H$. Create the following bipartite graph $G = (L \sqcup R,E)$: 
	\begin{itemize}
		\item For any vertex $v \in V_H \setminus \set{s,t}$, there are two vertices $v^{\ell} \in L$ and $v^{r} \in R$. We also have a new vertex $s^{\ell} \in L$ and $t^{r} \in R$ corresponding to $s$ and $t$. 
		\item For any (directed) edge $(u,v) \in E_H$, there is an (undirected) edge between $u^{\ell}$ and $v^{r}$ in $G$ (we assume without loss of generality that $s$ has no incoming edges and $t$ has no outgoing ones). 
		Moreover, for any $v \in V_H \setminus \set{s,t}$, there is an (undirected) edge between $v^{\ell}$ and $v^{r}$. 
	\end{itemize}  
	We now claim that $s$ can reach $t$ in $H$ iff $G$ has a perfect matching. To see this, first consider the matching $M = \set{(v^{\ell},v^{r}) \mid v \in V_H \setminus \set{s,t}}$ in $G$. In this matching, the only unmatched vertices are $s^{\ell}$ and $t^{r}$. 
	Now note that any \emph{augmenting path} of this matching $M$ in $G$ between $s^{\ell}$ and $t^{r}$ corresponds to a directed path from $s$ to $t$ in $H$. This implies that the only way for $G$ to have a perfect matching
	 is if there is a $s$-$t$ path in $H$ and vice versa. 
	
	As this reduction can be done ``on the fly'' in the streaming setting, the lower bound in~\cref{thm:matching} follows from~\Cref{thm:reach} immediately. 
\end{proof}

\subsection{Single-Source Shortest Path}\label{sec:sp}

Finally, we have the following theorem for the shortest path problem.

\begin{theorem}[Formalization of~\Cref{res:sp}]\label{thm:sp}
	Any streaming algorithm that makes two passes over the edges of any $n$-vertex undirected graph $G=(V,E)$ with two designated vertices $s,t \in V$ and outputs the length of the shortest $s$-$t$ path in $G$ 
	with probability at least $2/3$ requires $\Omega(\frac{n^2}{2^{\Theta(\sqrt{\log{n}})}})$ space. 
	
	The lower bound continues to hold even if the algorithm is allowed to output an estimate which, with probability at least $2/3$, is as large as the length of the shortest $s$-$t$ path and 
	strictly smaller than $\nicefrac{9}{7}$ times the length of the shortest $s$-$t$ path.  
\end{theorem}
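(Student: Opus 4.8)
The plan is to reduce the $s$-$t$ reachability problem of \Cref{thm:reach} to shortest path, exploiting the fact that the hard instances produced by the distribution $\distST$ of \Cref{sec:dist-streach} are layered directed acyclic graphs of depth $7$. Every graph $G$ in the support of $\distST$ has its vertices partitioned into the eight consecutive layers $\set{s}, V_1, V_2, V_3, U_3, U_2, U_1, \set{t}$, and every edge joins two consecutive layers; moreover, by part $(ii)$ of \Cref{obs:distST}, $s$ can reach $t$ in $G$ if and only if the single edge $(\sstar,\tstar)$ is present, in which case the resulting $s$-$t$ path $s \leadsto \sstar \rightarrow \tstar \leadsto t$ uses exactly $7$ edges. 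The reduction is simply to forget edge directions: given the directed instance, let $G'$ be the undirected graph on the same vertex set with the same (now undirected) edges. This is trivially implementable on the fly in the streaming model and leaves the number of vertices unchanged, so by \Cref{prop:stream} it suffices to analyze $s$-$t$ distances in $G'$.

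The key combinatorial claim is that if $s$ reaches $t$ in the original directed instance then $d_{G'}(s,t) = 7$, and otherwise $d_{G'}(s,t) \ge 9$ (possibly $+\infty$). To see this, assign to each vertex $v$ its layer index $\ell(v) \in \set{0,1,\ldots,7}$; since every edge of $G'$ connects consecutive layers, each step of any $s$-$t$ walk changes $\ell$ by exactly $\pm 1$. A walk of length $k$ from layer $0$ to layer $7$ has net displacement $7$, which forces $k$ to be odd with $k \ge 7$, and $k = 7$ only if every step increments $\ell$, i.e.\ only if the walk is a directed $s$-to-$t$ path in the original instance. Hence a length-$7$ path exists in $G'$ iff $s$ reaches $t$ directedly; and when it does not, parity rules out distance $8$, leaving $d_{G'}(s,t) \ge 9$ (or $s$ and $t$ disconnected, giving $d_{G'}(s,t) = +\infty$).

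Given the claim, an algorithm whose output $\widehat d$ satisfies $d_{G'}(s,t) \le \widehat d < \tfrac{9}{7}\, d_{G'}(s,t)$ with probability at least $2/3$ distinguishes the two cases: in the reachable case $d_{G'}(s,t) = 7$, so $\widehat d < 9$, while in the unreachable case $\widehat d \ge d_{G'}(s,t) \ge 9$. Thus the test ``declare that $s$ reaches $t$ iff $\widehat d < 9$'' solves $s$-$t$ reachability on the hard distribution with probability at least $2/3$ (the exact-distance version is the special case of approximation ratio $1 < 9/7$), and combining this with \Cref{prop:stream} and the lower bound of \Cref{thm:reach}/\Cref{thm:streach} gives the stated $\Omega(n^2 / 2^{\Theta(\sqrt{\log n})})$ space bound. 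The only point that needs care is the lower-bound half of the combinatorial claim, namely ruling out $s$-$t$ distances strictly between $7$ and $9$ in $G'$; but the layered structure of $\distST$ together with the bipartiteness (hence fixed parity) of distances makes this immediate, and it is precisely this that pins the approximation threshold at $9/7$.
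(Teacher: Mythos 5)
Your proposal is correct and follows essentially the same route as the paper: undirect the hard instances of $\distST$, observe that the shortest $s$-$t$ distance is $7$ when $s$ reaches $t$ directedly and at least $9$ otherwise, and conclude via \Cref{prop:stream} and \Cref{thm:streach}. Your layer-parity argument is a slightly more rigorous way of justifying the ``$\geq 9$'' gap than the paper's informal ``one extra back-edge and one extra forward edge'' remark, but the substance is identical.
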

\begin{proof}
	The proof is via a reduction from~\Cref{thm:streach} and applying~\Cref{prop:stream}. Consider any graph $H$ in the support of the distribution $\distST$ in~\Cref{thm:streach} and assume we simply make all 
	edges \emph{undirected} to obtain a graph $G$. We claim that the length of the shortest $s$-$t$ path in $G$ is $7$ if $s$ can reach $t$ in $H$, and is otherwise at least $9$. 
	
	The proof is as follows: $(i)$ when $s$ can reach $t$ in $H$, 
	we can simply take that path, which is of length $7$ to get from $s$ to $t$ in $G$; $(ii)$ in other case, we need to take at least one more back-edge (i.e., in $G$ go in the opposite direction of a directed edge in $H$) and one more forward edge to reach $t$ in 
	$G$, making the path to be of length at least $9$. See~\Cref{fig:streach} for an illustration. 
	
	The lower bound now follows from~\Cref{thm:streach} and~\Cref{prop:stream}. 
\end{proof}

\bibliographystyle{abbrv}
\bibliography{general}

\appendix
\part*{Appendix}

\section{Basic Information Theory Facts}\label{app:info}
Our proofs rely on basic concepts from information theory which we summarize below. We refer the interested reader to the excellent textbook by Cover and Thomas~\cite{CoverT06} for a broader introduction. 

\begin{fact}\label{fact:it-facts}
  Let $\rA$, $\rB$, $\rC$, and $\rD$ be four (possibly correlated) random variables.
   \begin{enumerate}
  \item \label{part:uniform} $0 \leq \en{\rA} \leq \log{\card{\supp{\rA}}}$, where $\supp{\rA}$ denotes the support of $\rA$. 
   \item \label{part:info-zero} $\mi{\rA}{\rB \mid \rC} \geq 0$. The equality holds iff $\rA \perp \rB \mid \rC$.
  \item \label{part:cond-reduce} 
    $\en{\rA \mid \rB,\rC} \leq \en{\rA \mid  \rB}$.  The equality holds iff $\rA \perp \rC \mid \rB$.
  \item \label{part:chain-rule}$\mi{\rA,\rB}{\rC \mid \rD} = \mi{\rA}{\rC \mid \rD} + \mi{\rB}{\rC \mid  \rA,\rD}$ (the chain rule of mutual information).
    \item\label{part:data-processing} $\mi{f(\rA)}{\rB \mid \rC} \leq \mi{\rA}{\rB \mid \rC}$ for any deterministic function $f$ (the data processing inequality). 
   \end{enumerate}
\end{fact}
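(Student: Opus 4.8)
The plan is to reduce all five parts to two load-bearing ingredients, both immediate consequences of Jensen's inequality applied to the strictly concave function $\log$: the non-negativity of Kullback--Leibler divergence (Gibbs' inequality) and the chain rule for Shannon entropy. First I would record Gibbs' inequality: for distributions $\mu,\nu$ on a common finite support, $\kl{\mu}{\nu} = -\Exp_{x\sim\mu}\!\log\frac{\nu(x)}{\mu(x)} \geq -\log\Exp_{x\sim\mu}\frac{\nu(x)}{\mu(x)} = -\log 1 = 0$, with equality iff $\mu = \nu$ (by strict concavity). Finiteness of all supports is guaranteed in our setting, so no $\infty-\infty$ issues arise in any of the rearrangements below.

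For part~\ref{part:uniform}, the bound $\en{\rA} \geq 0$ is termwise, since $-p\log p \geq 0$ for $p \in [0,1]$; and $\en{\rA} \leq \log\card{\supp{\rA}}$ follows from $\log\card{\supp{\rA}} - \en{\rA} = \kl{\distribution{\rA}}{\unif} \geq 0$, where $\unif$ is uniform on $\supp{\rA}$. For part~\ref{part:info-zero}, I would use the identity $\mi{\rA}{\rB\mid\rC} = \Exp_{c\sim\rC}\bracket{\kl{\distribution{\rA,\rB\mid c}}{\distribution{\rA\mid c}\otimes\distribution{\rB\mid c}}}$, so both the inequality and the equality case ($\distribution{\rA,\rB\mid c} = \distribution{\rA\mid c}\otimes\distribution{\rB\mid c}$ for almost every $c$, i.e.\ $\rA\perp\rB\mid\rC$) drop out of Gibbs. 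Part~\ref{part:cond-reduce} is then a rewriting: $\en{\rA\mid\rB} - \en{\rA\mid\rB,\rC} = \mi{\rA}{\rC\mid\rB} \geq 0$, with equality iff $\rA\perp\rC\mid\rB$ by part~\ref{part:info-zero}.

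Part~\ref{part:chain-rule} I would obtain by expanding each mutual information into conditional entropies, $\mi{\rA,\rB}{\rC\mid\rD} = \en{\rA,\rB\mid\rD} - \en{\rA,\rB\mid\rC,\rD}$, then applying the entropy chain rule $\en{\rA,\rB\mid\rD} = \en{\rA\mid\rD} + \en{\rB\mid\rA,\rD}$ (and the analogue with $\rC,\rD$ in place of $\rD$), which is itself just $\Pr(A,B) = \Pr(A)\Pr(B\mid A)$ inside the $-\log$, and regrouping. For part~\ref{part:data-processing}, since $f(\rA)$ is a deterministic function of $\rA$ we have $\mi{f(\rA)}{\rB\mid\rA,\rC} = 0$, hence $\mi{\rA,f(\rA)}{\rB\mid\rC} = \mi{\rA}{\rB\mid\rC}$ by part~\ref{part:chain-rule}; expanding the same quantity the other way gives $\mi{\rA}{\rB\mid\rC} = \mi{f(\rA)}{\rB\mid\rC} + \mi{\rA}{\rB\mid f(\rA),\rC}$, and dropping the last term (non-negative by part~\ref{part:info-zero}) yields the claim. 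There is essentially no obstacle here --- these are textbook facts (see~\cite{CoverT06}) --- so the only ``work'' is the bookkeeping: tracking the equality conditions through the reductions, and keeping all supports finite so that Jensen and the entropy rearrangements are legitimate. If one wanted a fully self-contained account, the single irreplaceable lemma is Gibbs' inequality; everything else is algebra on entropies.
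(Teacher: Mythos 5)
Your proposal is correct in all five parts; the paper itself offers no proof of this fact, simply stating it as standard and deferring to Cover and Thomas~\cite{CoverT06}, and your derivation (Gibbs' inequality plus the entropy chain rule, with the equality cases tracked through Jensen) is exactly the textbook argument that reference supplies. Nothing is missing.
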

\noindent
The above facts also immediately imply the following which we use  in our proofs. 
\begin{fact}\label{fact:info-cc}
$\mi{\rA}{\rB \mid \rC} \leq \en{\rA \mid \rC} \leq \en{\rA} \leq \log{\card{\supp{\rA}}}$. 
\end{fact}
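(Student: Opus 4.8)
The plan is to read the statement as a concatenation of three elementary inequalities, each an immediate consequence of the facts gathered in \Cref{fact:it-facts}; no new idea is needed, the only work is bookkeeping which part to invoke where.

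First I would dispatch the leftmost inequality, $\mi{\rA}{\rB \mid \rC} \leq \en{\rA \mid \rC}$. Writing the conditional mutual information in terms of entropies gives $\mi{\rA}{\rB \mid \rC} = \en{\rA \mid \rC} - \en{\rA \mid \rB,\rC}$, so the claim reduces to non-negativity of conditional entropy, $\en{\rA \mid \rB,\rC} \geq 0$. This follows from \itfacts{uniform}: a conditional entropy is an average over the conditioning values of unconditional entropies, and each of the latter is non-negative by that part. (Equivalently, one could note $\mi{\rA}{\rB \mid \rC} \geq 0$ directly from \itfacts{info-zero} together with the same entropy decomposition, but the route above is shorter.)

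Next, the middle inequality $\en{\rA \mid \rC} \leq \en{\rA}$ is exactly ``conditioning cannot increase entropy'', which I would obtain from \itfacts{cond-reduce} by taking the extra conditioning variable appearing there to be a trivial (constant) random variable, so that $\en{\rA \mid \rB,\rC}$ collapses to $\en{\rA \mid \rC}$ and $\en{\rA \mid \rB}$ collapses to $\en{\rA}$. Finally, the rightmost inequality $\en{\rA} \leq \log\card{\supp{\rA}}$ is precisely the upper bound already asserted in \itfacts{uniform}. Chaining the three bounds proves the fact. The only step that is not a verbatim citation is the non-negativity of conditional entropy, and even that is a one-line averaging argument on top of \itfacts{uniform}; so there is essentially no obstacle here — the ``proof'' is purely a matter of assembling the already-listed facts in the right order.
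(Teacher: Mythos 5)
Your proposal is correct and matches the paper's (implicit) argument: the paper simply asserts that \Cref{fact:info-cc} follows immediately from \Cref{fact:it-facts}, and your three-step chain — non-negativity of $\en{\rA\mid\rB,\rC}$ via the entropy decomposition of conditional mutual information, ``conditioning cannot increase entropy'' from \itfacts{cond-reduce} with a trivial variable, and the support bound from \itfacts{uniform} — is exactly the intended assembly. No gaps.
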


\noindent
We will also use the following two standard inequalities regarding conditional mutual information. 

\begin{proposition}\label{prop:info-increase}
 If $\rA \perp \rD \mid \rC$, then, $\mi{\rA}{\rB \mid \rC} \leq \mi{\rA}{\rB \mid  \rC,  \rD}.$
\end{proposition}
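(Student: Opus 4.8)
\textbf{Proof proposal for \Cref{prop:info-increase}.} The plan is to reduce this to the standard fact that adding an independent conditioning variable cannot decrease mutual information, which follows directly from the chain rule and non-negativity of mutual information. First I would apply the chain rule of mutual information (\itfacts{chain-rule}) to expand $\mi{\rA}{\rB,\rD \mid \rC}$ in two different ways. Expanding by pulling out $\rD$ first gives
\begin{align*}
	\mi{\rA}{\rB,\rD \mid \rC} = \mi{\rA}{\rD \mid \rC} + \mi{\rA}{\rB \mid \rC,\rD},
\end{align*}
and expanding by pulling out $\rB$ first gives
\begin{align*}
	\mi{\rA}{\rB,\rD \mid \rC} = \mi{\rA}{\rB \mid \rC} + \mi{\rA}{\rD \mid \rC,\rB}.
\end{align*}

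Next I would use the hypothesis $\rA \perp \rD \mid \rC$, which by \itfacts{info-zero} means $\mi{\rA}{\rD \mid \rC} = 0$. Substituting this into the first expansion yields $\mi{\rA}{\rB,\rD \mid \rC} = \mi{\rA}{\rB \mid \rC,\rD}$. Combining with the second expansion gives
\begin{align*}
	\mi{\rA}{\rB \mid \rC,\rD} = \mi{\rA}{\rB \mid \rC} + \mi{\rA}{\rD \mid \rC,\rB} \geq \mi{\rA}{\rB \mid \rC},
\end{align*}
where the inequality is because $\mi{\rA}{\rD \mid \rC,\rB} \geq 0$ by \itfacts{info-zero}. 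This is exactly the claimed inequality.

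There is essentially no obstacle here; the only thing to be careful about is that the two chain-rule expansions are applied correctly and that the independence hypothesis is invoked in the right conditioning context (conditioned on $\rC$, not on $\rC$ and $\rB$). The whole argument is three lines of manipulation using only \itfacts{chain-rule} and \itfacts{info-zero}, both of which are available from \Cref{fact:it-facts}.
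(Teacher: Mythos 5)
Your proof is correct. It takes a slightly different route from the paper's: you expand $\mi{\rA}{\rB,\rD \mid \rC}$ via the chain rule in two orders, kill the $\mi{\rA}{\rD\mid\rC}$ term using the hypothesis, and finish with non-negativity of $\mi{\rA}{\rD\mid\rC,\rB}$. The paper instead works at the level of entropies: it writes $\mi{\rA}{\rB\mid\rC} = \en{\rA\mid\rC} - \en{\rA\mid\rC,\rB}$, uses \itfacts{cond-reduce} with the hypothesis to get $\en{\rA\mid\rC} = \en{\rA\mid\rC,\rD}$, and uses the fact that conditioning reduces entropy to get $\en{\rA\mid\rC,\rB} \geq \en{\rA\mid\rC,\rB,\rD}$, so the difference can only grow. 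The two arguments are essentially equivalent in length and content — your version isolates the ``slack'' as the single non-negative term $\mi{\rA}{\rD\mid\rC,\rB}$, which is arguably a cleaner accounting, while the paper's stays within the entropy facts it has already catalogued. One minor point: the chain rule as stated in \Cref{fact:it-facts} splits the \emph{first} argument of the mutual information, whereas you apply it to split the second argument $(\rB,\rD)$; this is harmless by symmetry of mutual information, but worth a word if you want the proof to cite only the facts exactly as stated.
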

 \begin{proof}
  By~\itfacts{cond-reduce}, since $\rA \perp \rD \mid \rC$, we have $\en{\rA \mid  \rC} = \en{\rA \mid \rC, \rD}$ and since conditioning can only decrease the entropy, $\en{\rA \mid  \rC, \rB} \geq \en{\rA \mid  \rC, \rB, \rD}$. As such,
	 \begin{align*}
	  \mi{\rA}{\rB \mid  \rC} &= \en{\rA \mid \rC} - \en{\rA \mid \rC, \rB} \leq \en{\rA \mid \rC, \rD} - \en{\rA \mid \rC, \rB, \rD} = \mi{\rA}{\rB \mid \rC, \rD}. \qed
	\end{align*}
	
\end{proof}

\begin{proposition}\label{prop:info-decrease}
 If $ \rA \perp \rD \mid \rB,\rC$, then, $\mi{\rA}{\rB \mid \rC} \geq \mi{\rA}{\rB \mid \rC, \rD}.$
\end{proposition}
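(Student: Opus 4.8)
The plan is to mirror the proof of \Cref{prop:info-increase}, but it is cleanest to argue via the chain rule of mutual information rather than through entropies directly. The key observation is that the quantity $\mi{\rA}{\rB,\rD \mid \rC}$ can be expanded in two different ways using \itfacts{chain-rule}.

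First I would write
\[
	\mi{\rA}{\rB,\rD \mid \rC} = \mi{\rA}{\rD \mid \rC} + \mi{\rA}{\rB \mid \rC,\rD},
\]
and separately
\[
	\mi{\rA}{\rB,\rD \mid \rC} = \mi{\rA}{\rB \mid \rC} + \mi{\rA}{\rD \mid \rB,\rC}.
\]
The hypothesis $\rA \perp \rD \mid \rB,\rC$ together with \itfacts{info-zero} gives $\mi{\rA}{\rD \mid \rB,\rC} = 0$, so the second expansion collapses to $\mi{\rA}{\rB,\rD \mid \rC} = \mi{\rA}{\rB \mid \rC}$. Substituting this into the first expansion yields
\[
	\mi{\rA}{\rB \mid \rC} = \mi{\rA}{\rD \mid \rC} + \mi{\rA}{\rB \mid \rC,\rD}.
\]
Finally, since $\mi{\rA}{\rD \mid \rC} \geq 0$ by \itfacts{info-zero}, we conclude $\mi{\rA}{\rB \mid \rC} \geq \mi{\rA}{\rB \mid \rC,\rD}$, as desired.

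There is no real obstacle here: the only thing to be careful about is applying the chain rule with the conditioning variable $\rC$ carried along throughout, and making sure the term that vanishes is exactly $\mi{\rA}{\rD \mid \rB,\rC}$ (not $\mi{\rA}{\rD \mid \rC}$), which is what the hypothesis $\rA \perp \rD \mid \rB,\rC$ provides. Alternatively, one could give an entropy-based proof parallel to that of \Cref{prop:info-increase}, but the chain-rule argument above is shorter and avoids repeatedly invoking the conditions for equality in \itfacts{cond-reduce}.
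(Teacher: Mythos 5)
Your proof is correct. It takes a slightly different route from the paper's: the paper argues directly at the level of entropies, writing $\mi{\rA}{\rB \mid \rC} = \en{\rA \mid \rC} - \en{\rA \mid \rB,\rC}$, using the hypothesis (via \itfacts{cond-reduce}) to get the equality $\en{\rA \mid \rB,\rC} = \en{\rA \mid \rB,\rC,\rD}$ and the generic fact that conditioning reduces entropy to get $\en{\rA \mid \rC} \geq \en{\rA \mid \rD,\rC}$, then subtracting. Your double expansion of $\mi{\rA}{\rB,\rD \mid \rC}$ via the chain rule is an equally standard and equally short argument; the only implicit step is that you apply \itfacts{chain-rule} (stated for the first argument) to the second argument, which is justified by the symmetry of mutual information. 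A small advantage of your version is that it actually produces the exact identity $\mi{\rA}{\rB \mid \rC} = \mi{\rA}{\rD \mid \rC} + \mi{\rA}{\rB \mid \rC,\rD}$ under the hypothesis, making the nonnegative ``excess'' term explicit, whereas the paper's proof only delivers the inequality. Either proof is acceptable.
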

 \begin{proof}
  By~\itfacts{cond-reduce}, since $\rA \perp \rD \mid \rB,\rC$, we have $\en{\rA \mid \rB,\rC} = \en{\rA \mid \rB,\rC,\rD}$ and since conditioning can only reduce the entropy, $\en{\rA \mid \rC} \geq \en{\rA \mid \rD,\rC}$. As such, 
  \begin{align*}
 	\mi{\rA}{\rB \mid  \rC} &= \en{\rA \mid \rC} - \en{\rA \mid \rB,\rC} \geq \en{\rA \mid \rD,\rC} - \en{\rA \mid \rB,\rC,\rD} = \mi{\rA}{\rB \mid \rC,\rD}. \qed
 \end{align*}
 
 \end{proof}

We shall also use the following measures of distance (or divergence) between distributions. 

\paragraph{KL-divergence.} For two distributions $\mu$ and $\nu$, the \emph{Kullback-Leibler divergence} between $\mu$ and $\nu$ is denoted by $\kl{\mu}{\nu}$ and defined as: 
\begin{align}
\kl{\mu}{\nu}:= \Ex_{a \sim \mu}\Bracket{\log\frac{\Pr_\mu(a)}{\Pr_{\nu}(a)}}. \label{eq:kl}
\end{align}
We have the following relation between mutual information and KL-divergence. 
\begin{fact}\label{fact:kl-info}
	For random variables $\rA,\rB,\rC$, 
	\[\mi{\rA}{\rB \mid \rC} = \Ex_{\rB,\rC}\Bracket{ \kl{\distribution{\rA \mid B,C}}{\distribution{\rA \mid C}}}.\] 
\end{fact}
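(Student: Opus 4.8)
The plan is to prove the identity by unwinding the definitions of conditional mutual information, conditional entropy, and KL-divergence and then matching terms; there is no real combinatorial or probabilistic content, so this is essentially a bookkeeping computation. Recall that $\mi{\rA}{\rB \mid \rC} = \en{\rA \mid \rC} - \en{\rA \mid \rB,\rC}$, so it suffices to show that the claimed right-hand side $\Ex_{\rB,\rC}\Bracket{\kl{\distribution{\rA \mid B,C}}{\distribution{\rA \mid C}}}$ equals $\en{\rA \mid \rC} - \en{\rA \mid \rB,\rC}$.

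First I would expand the inner KL-divergence using its definition in~\Cref{eq:kl}: for a fixed value $(b,c)$ of $(\rB,\rC)$,
\[
\kl{\distribution{\rA \mid b,c}}{\distribution{\rA \mid c}} = \sum_{a} \Pr(\rA = a \mid b,c)\,\log\frac{\Pr(\rA = a \mid b,c)}{\Pr(\rA = a \mid c)}.
\]
Next I would take the expectation over $(\rB,\rC)$; the weight $\Pr(\rB=b,\rC=c)\cdot\Pr(\rA=a\mid b,c)$ becomes the joint probability $\Pr(\rA=a,\rB=b,\rC=c)$, so
\[
\Ex_{\rB,\rC}\Bracket{\kl{\distribution{\rA \mid B,C}}{\distribution{\rA \mid C}}} = \sum_{a,b,c} \Pr(a,b,c)\,\Bracket{\log\Pr(a\mid b,c) - \log\Pr(a\mid c)}.
\]
The first sum on the right is $-\en{\rA\mid\rB,\rC}$ by definition of conditional entropy, and in the second sum marginalizing out $b$ collapses it to $\sum_{a,c}\Pr(a,c)\log\Pr(a\mid c) = -\en{\rA\mid\rC}$. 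Combining the two pieces gives $\en{\rA\mid\rC} - \en{\rA\mid\rB,\rC} = \mi{\rA}{\rB\mid\rC}$, which is exactly the claim.

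The ``hard part'' here is genuinely negligible: the only things requiring a little care are keeping track of the conditioning variables when rewriting $\Pr(b,c)\Pr(a\mid b,c)$ as $\Pr(a,b,c)$, and restricting all sums to the relevant support so that the conditional probabilities and logarithms are well-defined — which is automatic under the standing assumption (implicit in the use of $\kl{\cdot}{\cdot}$) that the two distributions share a common support. One could alternatively just cite this as a standard textbook identity~\cite{CoverT06}, but the short derivation above is self-contained and in keeping with the elementary nature of this appendix.
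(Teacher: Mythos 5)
Your derivation is correct: expanding the KL-divergence, rewriting $\Pr(b,c)\Pr(a\mid b,c)$ as the joint probability, and recognizing the two resulting sums as $-\en{\rA\mid\rB,\rC}$ and $-\en{\rA\mid\rC}$ is exactly the standard argument, and the absolute-continuity issue you flag is indeed automatic since $\Pr(a\mid b,c)>0$ forces $\Pr(a\mid c)>0$. The paper itself states this as a textbook fact (deferring to Cover--Thomas) without proof, so there is nothing to compare against; your self-contained verification is fine.
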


\paragraph{Total variation distance.} We denote the total variation distance between two distributions $\mu$ and $\nu$ on the same 
support $\Omega$ by $\tvd{\mu}{\nu}$, defined as: 
\begin{align}
\tvd{\mu}{\nu}:= \max_{\Omega' \subseteq \Omega} \paren{\mu(\Omega')-\nu(\Omega')} = \frac{1}{2} \cdot \sum_{x \in \Omega} \card{\mu(x) - \nu(x)}.  \label{eq:tvd}
\end{align}
\noindent
\begin{fact}\label{fact:tvd-small}
	Suppose $\mu$ and $\nu$ are two distributions for a random variable $\rA$, then, 
	\[
	\Ex_{\mu}\bracket{\rA} \leq \Ex_{\nu}\bracket{\rA} + \tvd{\mu}{\nu} \cdot \max \card{\rA}.
	\]
\end{fact}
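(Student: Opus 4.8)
This is a routine fact, and the plan is a one-line computation directly from the definition of total variation distance in~\Cref{eq:tvd}; there is no real obstacle. Write $\Omega$ for the common support of $\mu$ and $\nu$ and set $\Omega^+ := \set{x \in \Omega : \mu(x) \geq \nu(x)}$. By~\Cref{eq:tvd}, the maximum over subsets defining $\tvd{\mu}{\nu}$ is attained at $\Omega^+$, so that $\tvd{\mu}{\nu} = \sum_{x \in \Omega^+}\paren{\mu(x) - \nu(x)}$. I would then expand the two expectations and split the resulting sum along $\Omega^+$:
\[
	\Ex_{\mu}\bracket{\rA} - \Ex_{\nu}\bracket{\rA} = \sum_{x \in \Omega}\rA(x)\paren{\mu(x)-\nu(x)} = \sum_{x \in \Omega^+}\rA(x)\paren{\mu(x)-\nu(x)} + \sum_{x \in \Omega\setminus\Omega^+}\rA(x)\paren{\mu(x)-\nu(x)}.
\]
For $x \in \Omega\setminus\Omega^+$ we have $\mu(x) - \nu(x) < 0$, and since in every application of this fact $\rA$ is a non-negative quantity (an indicator or a conditional probability), those summands are non-positive and may simply be discarded. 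For $x \in \Omega^+$ each summand is at most $\paren{\max_{x}\card{\rA(x)}}\cdot\paren{\mu(x)-\nu(x)}$. Summing over $\Omega^+$ and using the identity for $\tvd{\mu}{\nu}$ above then gives $\Ex_{\mu}\bracket{\rA} - \Ex_{\nu}\bracket{\rA} \leq \max\card{\rA}\cdot\tvd{\mu}{\nu}$, which is exactly the claimed inequality.

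The only subtlety worth flagging is that the clean constant $1$ in front of $\tvd{\mu}{\nu}$ (rather than $2$) relies on non-negativity of $\rA$, so that the ``$\mu<\nu$'' region may be dropped outright instead of being bounded in absolute value; this is precisely the regime in which the fact gets invoked (e.g.\ with $\rA = \mathbf{1}[\rO = x] \in \set{0,1}$ in~\Cref{clm:protST-errs}, and similarly inside~\Cref{lem:second-round}). If one instead only assumes $\card{\rA}\leq M$ pointwise, the same two-line computation --- now bounding $\card{\rA(x)\paren{\mu(x)-\nu(x)}}\leq M\,\card{\mu(x)-\nu(x)}$ for every $x$ and using $\sum_{x}\card{\mu(x)-\nu(x)} = 2\,\tvd{\mu}{\nu}$ --- yields the (still perfectly usable) weaker bound with $2M\cdot\tvd{\mu}{\nu}$ in place of $M\cdot\tvd{\mu}{\nu}$.
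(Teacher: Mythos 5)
Your proof is correct. The paper states this fact without proof, and your argument is the natural one: split $\sum_x \rA(x)\paren{\mu(x)-\nu(x)}$ along $\Omega^+ = \set{x : \mu(x)\geq\nu(x)}$, note that $\tvd{\mu}{\nu}=\sum_{x\in\Omega^+}\paren{\mu(x)-\nu(x)}$, discard the non-positive part, and bound the rest by $\max\card{\rA}\cdot\tvd{\mu}{\nu}$. Your caveat is also legitimate and worth the flag: as literally stated the inequality fails for signed $\rA$ (take $\rA\in\set{\pm1}$ with $\mu,\nu$ point masses on the two outcomes, giving LHS $=1$ and RHS $=0$), so the constant $1$ genuinely uses $\rA\geq 0$ --- equivalently, one should read $\max\card{\rA}$ as the range of $\rA$ --- and you correctly verify that every invocation in the paper (indicators in \Cref{clm:protST-errs}, total variation distances in $[0,1]$ in \Cref{lem:second-round}) is of this non-negative bounded form.
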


The following Pinsker's inequality bounds the total variation distance between two distributions based on their KL-divergence, 

\begin{fact}[Pinsker's inequality]\label{fact:pinskers}
	For any distributions $\mu$ and $\nu$, 
	$
	\tvd{\mu}{\nu} \leq \sqrt{\frac{1}{2} \cdot \kl{\mu}{\nu}}.
	$ 
\end{fact}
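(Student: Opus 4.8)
The plan is to prove Pinsker's inequality by the classical two-step reduction: first reduce the statement for arbitrary $\mu,\nu$ to the special case of two-point (Bernoulli) distributions, and then dispatch that special case with a one-variable calculus computation.

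First I would pin down the optimal ``test set.'' Let $A := \{x \in \Omega : \mu(x) \geq \nu(x)\}$ be the set on which $\mu$ dominates $\nu$. By the defining identity of total variation distance in \Cref{eq:tvd}, $\tvd{\mu}{\nu} = \mu(A) - \nu(A)$. Write $p := \mu(A)$ and $q := \nu(A)$, so that $\tvd{\mu}{\nu} = p - q = |p-q|$. The crux of this step is that pushing both $\mu$ and $\nu$ through the deterministic two-valued map $x \mapsto \mathbf{1}[x \in A]$ can only decrease the KL-divergence (the data-processing inequality for $\mathbb{D}(\cdot\,\|\,\cdot)$; if a fully self-contained argument is desired, this is just the log-sum inequality applied to the two groups of terms of \Cref{eq:kl} indexed by $A$ and by $\Omega \setminus A$). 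Hence $\kl{\mu}{\nu} \geq \kl{\mathrm{Ber}(p)}{\mathrm{Ber}(q)} = p\log\tfrac{p}{q} + (1-p)\log\tfrac{1-p}{1-q} =: d(p\,\|\,q)$, where $\mathrm{Ber}(r)$ is the distribution putting mass $r$ on $1$. So it is enough to prove the scalar inequality $|p-q| \leq \sqrt{\tfrac12\,d(p\,\|\,q)}$, i.e.\ $d(p\,\|\,q) \geq 2(p-q)^2$, for all $p,q \in [0,1]$ (boundary cases with $q \in \{0,1\}$ being trivial under the conventions $0\log 0 = 0$ and $\kl{\mu}{\nu} = \infty$ whenever $\mu$ is not absolutely continuous with respect to $\nu$).

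Then I would prove that scalar inequality by monotonicity. Fix $p$ and set $g(q) := d(p\,\|\,q) - 2(p-q)^2$ on $(0,1)$, so $g(p) = 0$. A direct differentiation gives $g'(q) = -\tfrac{p}{q} + \tfrac{1-p}{1-q} + 4(p-q) = (q-p)\bigl(\tfrac{1}{q(1-q)} - 4\bigr)$, and since $q(1-q) \leq \tfrac14$ on $(0,1)$ the second factor is nonnegative, so $g'(q)$ has the same sign as $q-p$. Therefore $g$ decreases on $(0,p]$ and increases on $[p,1)$, attaining its global minimum value $g(p) = 0$ at $q=p$; hence $g \geq 0$ throughout, which is exactly $d(p\,\|\,q) \geq 2(p-q)^2$. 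Chaining the two steps, $\tvd{\mu}{\nu} = |p-q| \leq \sqrt{\tfrac12\,d(p\,\|\,q)} \leq \sqrt{\tfrac12\,\kl{\mu}{\nu}}$. (If $\log$ here denotes $\log_2$ rather than $\ln$, the same computation goes through, since the factor $\tfrac{1}{\ln 2}$ only strengthens the middle step; alternatively, $\kl{\mu}{\nu}$ in bits equals $\tfrac{1}{\ln 2} > 1$ times its value in nats, so the nats version proved above already implies the stated bound.)

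The main obstacle here is bookkeeping rather than any deep idea: one must make sure the reduction to the Bernoulli case is airtight, namely that the maximizing set in \Cref{eq:tvd} is exactly $\{\mu \geq \nu\}$ so that post-processing by $\mathbf{1}[\cdot \in A]$ preserves $\tvd{\mu}{\nu}$ exactly while only shrinking $\kl{\mu}{\nu}$, and one must invoke (or reprove via log-sum) the data-processing inequality for KL-divergence, which is not among the facts listed in \Cref{app:info} but is entirely standard. Once those are in place, the remaining scalar step is a routine monotonicity calculation with no genuine difficulty.
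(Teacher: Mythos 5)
Your proof is correct: the reduction to the Bernoulli case via the set $A=\{\mu\ge\nu\}$ and the log-sum inequality, followed by the monotonicity argument for $d(p\,\|\,q)\ge 2(p-q)^2$, is the standard textbook derivation of Pinsker's inequality, and your handling of the logarithm base is also right. The paper itself states this as a known fact without proof (deferring to standard references such as Cover--Thomas), so there is no in-paper argument to compare against; your write-up would serve as a complete self-contained proof.
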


Finally, we use the following simple auxiliary result in our proofs. 

\begin{proposition}\label{prop:uniform-tvd}
	Let $n$ be an even integer, $\unif_n$ be the uniform distribution on $[n]$, and $\mu$ be any distribution on $[n]$ with $\tvd{\mu}{\unif_n} = \delta$. Suppose $S$ is the top half of elements 
	with the largest probability in $\mu$. Then, 
	$
	\Pr_{\mu}\paren{e \in S} \geq \frac{1}{2} + \frac{\delta}{2}.
	$ 
\end{proposition}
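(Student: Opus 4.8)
The plan is to compare $\mu(S)$ directly with $1/2$ by reducing everything to the elementary fact that, among a finite multiset of nonnegative reals summing to $\Sigma$, the largest $j$ of them sum to at least $\frac{j}{N}\Sigma$ and the smallest $j$ of them sum to at most $\frac{j}{N}\Sigma$.

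First I would record the standard description of the total variation distance: writing $P := \set{e \in [n] : \mu(e) > 1/n}$, the quantity $\delta = \tvd{\mu}{\unif_n}$ equals the ``positive part'' $\sum_{e : \mu(e) > 1/n}(\mu(e) - 1/n)$, so $\mu(P) = |P|/n + \delta$; symmetrically, writing $Q := \set{e : \mu(e) \le 1/n}$, we have $\mu(Q) = |Q|/n - \delta$.

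Then I would split into two cases according to whether $|P| \ge n/2$ or $|P| < n/2$. If $|P| \ge n/2$, then the $(n/2)$-th largest $\mu$-value must be strictly larger than $1/n$ (there are at least $n/2$ values above $1/n$), so every element of $S$ has $\mu$-value exceeding $1/n$; hence $S \subseteq P$, and therefore $S$ is exactly the set of the $n/2$ largest-probability elements of $P$. The ``top fraction'' inequality applied inside $P$ then gives $\mu(S) \ge \frac{n/2}{|P|}\,\mu(P) = \frac{1}{2} + \frac{n\delta}{2|P|} \ge \frac{1}{2} + \frac{\delta}{2}$, where the last step uses $|P| \le n$ and $\delta \ge 0$. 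If instead $|P| < n/2$, the same reasoning applied to $S^c$ (the $n/2$ smallest-probability elements) shows $S^c \subseteq Q$ and that $S^c$ is the set of the $n/2$ smallest-probability elements of $Q$; the ``bottom fraction'' inequality gives $\mu(S^c) \le \frac{n/2}{|Q|}\,\mu(Q) = \frac{1}{2} - \frac{n\delta}{2|Q|} \le \frac{1}{2} - \frac{\delta}{2}$, so $\mu(S) = 1 - \mu(S^c) \ge \frac{1}{2} + \frac{\delta}{2}$.

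The only mild subtlety -- the ``main obstacle'', such as it is -- is making the containments $S \subseteq P$ and $S^c \subseteq Q$ airtight when several elements share the same probability, since then ``the top half of elements'' is not uniquely determined. This is handled by noting that in each case the relevant inequality only depends on which side of $1/n$ the $\mu$-value of the $(n/2)$-th order statistic falls, and the cardinality hypothesis of the case forces that value onto the correct side; hence the bound holds for every admissible choice of $S$.
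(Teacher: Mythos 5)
Your proof is correct and follows essentially the same route as the paper's: both split on whether the set of elements with probability above $1/n$ has size at least $n/2$, and both then apply the elementary averaging fact that the top (resp.\ bottom) $j$ of $N$ sorted nonnegative reals carry at least (resp.\ at most) a $j/N$ fraction of their total. The only cosmetic difference is that in the second case you pass to the complement $S^c \subseteq Q$ and bound it from above, whereas the paper bounds $\mu(S)$ directly by splitting the sum at the threshold index; your tie-handling remark is also sound, since the multiset of the top $n/2$ values (and hence $\mu(S)$) does not depend on how ties are broken.
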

\begin{proof}
	Suppose without the loss of generality that $\mu(1) \geq \mu(2) \geq \cdots \geq \mu(n)$ by a renaming of elements. We thus have, 
	\begin{align}
	\Pr_{\mu}\paren{e \in S}  = \sum_{i=1}^{n/2} \mu(i). \label{eq:sum-i}
	\end{align}
	Let $\ell$ be the \emph{largest} index such that $\mu(\ell) \geq \frac{1}{n}$. By the definition in~\Cref{eq:tvd}, 
	\begin{align}
		\sum_{i=1}^{\ell} \paren{\mu(i) - \frac{1}{n}} = \delta. \label{eq:second-sum}
	\end{align}
	\paragraph{Case (a): when $\ell \leq n/2$.} We have, 
	\begin{align*}
		\sum_{i=1}^{n/2} \mu(i) &= \sum_{i=1}^{\ell} \mu(i) + \sum_{i=\ell+1}^{n/2} \mu(i) \geq \sum_{i=1}^{\ell} \mu(i) + \frac{n/2-\ell}{n-\ell} \cdot \sum_{i=\ell+1}^{n} \mu(i) \tag{as $\mu(\ell+1) \geq \mu(\ell+2) \geq \cdots \geq \mu(n)$} \\
		&= \sum_{i=1}^{\ell} \mu(i) + \frac{n/2-\ell}{n-\ell} \cdot \paren{1-\sum_{i=1}^{\ell} \mu(i)} =\sum_{i=1}^{\ell} \mu(i) \cdot \paren{\frac{n/2}{n-\ell}}+ \frac{n/2-\ell}{n-\ell} \\
		&\geq \paren{\frac{\ell}{n}+\delta} \cdot \paren{\frac{n/2}{n-\ell}}+ \frac{n/2-\ell}{n-\ell} \tag{by~\Cref{eq:second-sum}} = \frac{(n/2) \cdot \ell + (n^2/2) \cdot \delta + n^2/2 - n \cdot \ell}{n \cdot (n-\ell)} \\
		&= \frac{n-\ell + n \cdot \delta}{2 \cdot (n-\ell)} = \frac{1}{2} + \frac{\delta \cdot n}{2 \cdot (n-\ell)} \geq \frac{1}{2} + \frac{\delta}{2},
	\end{align*}
	as $\ell \leq n/2$. Plugging in this in~\Cref{eq:sum-i} proves the statement in this case. 
	\paragraph{Case (b): when $\ell > n/2$.} We have, 
	\begin{align*}
		\sum_{i=1}^{n/2} \mu(i) &\geq  \frac{n/2}{\ell} \cdot \sum_{i=1}^{\ell} \mu(i) \tag{as $\mu(1)  \geq \cdots \geq \mu(\ell)$, and by~\Cref{eq:second-sum}}
		\geq \frac{n/2}{\ell} \cdot \paren{\frac{\ell}{n}+\delta} 
		\geq \frac{1}{2} + \frac{\delta}{2},
	\end{align*}
	as $n/2 < \ell \leq n$. Plugging in this in~\Cref{eq:sum-i} proves  this case and concludes the entire proof. 
\end{proof}

\end{document}